\newcommand{\abs}[1]{\lvert#1\rvert}
\newcommand{\spara}[1]{\vspace{2mm}\noindent\textbf{#1.}}
\newcommand{\RNum}[1]{\uppercase\expandafter{\romannumeral #1\relax}}
\newcommand{\Rnum}[1]{\expandafter{\romannumeral #1\relax}}
\newcommand{\ie}{{i.e.,}\xspace}
\newcommand{\eg}{{e.g.,}\xspace}
\newcommand{\E}{\mathbb{E}}
\newcommand{\diff}{\ensuremath{\mathop{}\!\mathrm{d}}}
\newcommand{\e}{{\ensuremath{\mathrm{e}}}}
\DeclareMathOperator{\Beta}{Beta}
\newcommand{\eat}[1]{\ignorespaces}
\newcommand{\compilehidecomments}{false}
	\newcommand{\jing}[1]{}
	\newcommand{\jing}[1]{{\color{red}  [\text{Jing:} #1]}}
	\newcommand{\hym}[1]{}
	\newcommand{\hym}[1]{#1}
	\newcommand{\revise}[1]{}
	\newcommand{\revise}[1]{{\color{red} #1}}
	\newcommand{\delete}[1]{}
	\newcommand{\delete}[1]{}
	\newcommand{\delete}[1]{{\color{red}  [\text{Delete:} #1] #1}}
\begin{document}
\title[Do the Rich Get Richer? Fairness Analysis for Blockchain Incentives]{Do the Rich Get Richer?\\ Fairness Analysis for Blockchain Incentives}
\thanks{A short version of the paper will appear in 2021 International Conference on Management of Data (SIGMOD '21), June 20--25, 2021, Virtual Event, China. ACM, New York, NY, USA, 14 pages. \url{https://doi.org/10.1145/3448016.3457285}}

\author{Yuming Huang}
\authornotemark[1]
\affiliation{%
	\institution{National University of Singapore}
	\country{}
	\city{}
}
\email{huangyuming@u.nus.edu}

\author{Jing Tang}
\orcid{0000-0002-0785-707X}
\authornote{Yuming Huang and Jing Tang contributed equally. Corresponding author: Jing Tang.}
\affiliation{%
	\institution{National University of Singapore}
	\country{}
	\city{}
}
\email{isejtang@nus.edu.sg}

\author{Qianhao Cong}
\affiliation{%
	\institution{National University of Singapore}
	\country{}
	\city{}
}
\email{cong_qianhao@u.nus.edu}

\author{Andrew Lim}
\affiliation{%
	\institution{National University of Singapore}
	\country{}
	\city{}
}
\email{isealim@nus.edu.sg}

\author{Jianliang Xu}
\affiliation{%
	\institution{Hong Kong Baptist University}
	\country{}
	\city{}
}
\email{xujl@comp.hkbu.edu.hk}

\begin{abstract}
Proof-of-Work (PoW) is the most widely adopted incentive model in current blockchain systems, which unfortunately is energy inefficient. Proof-of-Stake (PoS) is then proposed to tackle the energy issue. The rich-get-richer concern of PoS has been heavily debated in the blockchain community. The debate is centered around the argument that whether rich miners possessing more stakes will obtain higher staking rewards and further increase their potential income in the future.
In this paper, we define two types of fairness, i.e., \emph{expectational} fairness and \emph{robust} fairness, that are useful for answering this question. In particular, expectational fairness illustrates that the expected income of a miner is proportional to her initial investment, indicating that the expected return on investment is a constant. To better capture the uncertainty of mining outcomes, robust fairness is proposed to characterize whether the return on investment concentrates to a constant with high probability as time evolves.
Our analysis shows that the classical PoW mechanism can always preserve both types of fairness as long as the mining game runs for a sufficiently long time. Furthermore, we observe that current PoS blockchains implement various incentive models and discuss three representatives, namely ML-PoS, SL-PoS and C-PoS. We find that (i) ML-PoS (e.g., Qtum and Blackcoin) preserves expectational fairness but may not achieve robust fairness, (ii) SL-PoS (e.g., NXT) does not protect any type of fairness, and (iii) C-PoS (e.g., Ethereum 2.0) outperforms ML-PoS in terms of robust fairness while still maintaining expectational fairness. Finally, massive experiments on real blockchain systems and extensive numerical simulations are performed to validate our analysis.  
\end{abstract}


%

\keywords{Blockchain; Incentive; Fairness; PoW; PoS}

\maketitle

\begin{sloppy}
\section{Introduction}\label{sec:intro}
\subsection{Background}
Since 2008, blockchain has attracted a plethora of interests from both academia and industry. Essentially, a blockchain is a decentralized public ledger that contains all current and historical updates in the system. Since blockchain can achieve community trust without third parties, it can be used in numerous real-world applications, such as cryptocurrency, smart contract, voting and bidding systems. 

Particularly, Bitcoin \cite{nakamoto2008bitcoin} is the first and the most popular permission-less blockchain system, which stores user transactions within sequentially linked blocks. New transactions are confirmed when they are packed into newly generated blocks. The block generation process is called mining and a network node that potentially produces blocks is known as a miner. To generate new blocks, miners should find a solution to a cryptographic puzzle, referred to as Proof-of-Work (PoW)~\cite{jakobsson1999proofs}. To incentivize the maintenance of the Bitcoin network, the miners who successfully contribute a valid proof will be rewarded. As a consequence, miners compete with one another by operating more computational devices to seek more incentives. Such a competition, unfortunately, incurs striking energy consumption. Presently, the total electricity consumption by Bitcoin miners amounts to $74$ TWh, exceeding Austria ($70$ TWh), Switzerland ($58$ TWh) and Singapore ($47$ TWh)~\cite{BTCelectic,eleclist}. For sustainable development, Proof-of-Stake (PoS)~\cite{PPcoin} is proposed as an alternative of PoW. Instead of competing for computation power, PoS miners are more likely to win if they possess more stakes (\eg~cryptocurrency). Therefore, the PoS protocol eliminates electricity consumption and significantly reduces waste of electricity.

However, the blockchain community raises concerns about the PoS protocol that it may make the rich become richer because rich miners possessing more stakes are likely to obtain more rewards and further increase their potential income in the future~\cite{richricherbloom,ecopos}. Such a rich-get-richer phenomenon (also known as the Matthew effect) is obviously unfair to miners and eventually harms decentralization of the network when the majority of stakes are controlled by a few rich miners \cite{rosenfeld2014analysis,kiffer2018better,bonneau2015sok,Kwon:2019}. In particular, resource accumulation may increase the risk of transactions rollback and data tampering, damaging to data reliability and integrity, \eg~double spending. Despite the young age of blockchain, resource accumulation has caused several severe accidents. For example, very recently in August 2020, the transactions in Ethereum Classic were rollbacked because of a $51\%$ attack, resulting in a loss of $1.68$ million dollars \cite{ETC_doublespend}. To our knowledge, the rich-get-richer concern of the PoS protocol has been scarcely studied though it has been heatedly debated. We aim to tackle this issue by leveraging the notion of \emph{fairness}, which is one of the principles of fairness, accountability and transparency (FAT) of responsible data science \cite{Getoor_RDS_2019}.

\vspace{-2mm}
\subsection{Contributions}
In this study, we focus on the fairness of incentive models for PoW and PoS protocols, which is the key to address the rich-get-richer concern of PoS. We formally define two types of fairness, \ie~\emph{expectational} fairness and \emph{robust} fairness. Specifically, expectational fairness characterizes the relation between the resource controlled by a miner and her expected reward. However, this naive definition might be insufficient to capture the uncertainty of reward in the real world. As an example, suppose that a miner initially possesses $20\%$ of the entire stakes and consider two possible mining games. That is, the miner always receives $20\%$ of the total rewards in the first game, whereas she wins all rewards with a probability of $20\%$ and gets nothing with the remaining  probability in the second game. Apparently, the expected rewards obtained from both games are exactly the same, \ie~both games are fair in expectation. However, the reward allocation in the second game is obviously more uncertain, thereby increasing the risk of her income. To reveal the uncertainty of reward, we further propose a novel concept of robust fairness that describes the relation between the actual rewards obtained by a miner from a random outcome with respect to the stochastic process of a mining game and her initial investment.


Based on our definitions of fairness, we conduct analysis on PoW and PoS incentives. In PoW, the selection of a block proposer is based on the hash power controlled by each miner, which is independent of previous mining outcomes. In particular, a miner proposes a block with a probability being  proportional to the hash power controlled by her, and we find that this mechanism can ensure both types of fairness. Our analysis on the PoW incentive model is confirmed with experiments on Geth client (v1.9.11) \cite{Gethcode}.

In PoS, stakes serve as a competing resource and the probability of a miner proposing a new block is based on her current stakes which depend not only on her initial investment but also on the rewards received in previous mining outcomes. Moreover, there are various implementations of PoS incentives resulting in different types of behavior. In this paper, we analyze three representative PoS incentive models, namely multi-lottery PoS  (ML-PoS) for Qtum \cite{Qtum} and Blackcoin \cite{Blackcoin}, single-lottery PoS (SL-PoS) for NXT \cite{NXT} and compound PoS (C-PoS) for Ethereum 2.0 \cite{ETH20},  which cover the present popular blockchain systems.

Specifically, ML-PoS enables the probability of a miner proposing a new block being proportional to her current possessed stakes, which preserves expectational fairness. However, the return on investment of a miner under this protocol may not concentrate to a constant even after the mining game runs for a long time because of the accumulated effect of the Markov chain. Our experimental evaluations on real blockchain systems and numerical simulations reveal that allocating more initial stakes in the early stage of mining process and/or reducing the reward of each block are helpful to improve robust fairness.
Meanwhile, SL-PoS is devised by leveraging a single-lottery scheme to determine a block proposer, unlike ML-PoS that uses a multi-lottery scheme. However, we find that SL-PoS can accomplish neither expectational fairness nor robust fairness as rich miners have a higher return on investment. In fact, with the advantages accumulated during mining, the game will run to monopolization almost surely, incurring the Matthew effect. 
Recently, C-PoS introduces an additional inflation reward by distributing base incentives to every miner proportional to their possessed stakes. Our analysis shows that in addition to preserving expectational fairness, such an inflation reward is useful for reducing the uncertainty of mining income so that robust fairness is more likely to be achieved.

\eat{Based on the above analysis, we learn the following knowledge in PoS and PoW separately and combines both advantages.
From PoW, in order to improve fairness, we consider a system should anchor the leader selection towards an outside resource unrelated to the system (such as mining reward). As an example, PoW anchors leader selection towards the hash power, which evolves independent to to the mining incentives. But PoS uses the mining incentives (stakes) as the competing resource, which makes fairness relates to the outcome of mining history. 
From PoS, we realize the outside resource consumption is not necessarily to be physical computing resource but also can be any form of virtual consumption or payment, which ensures sustainable development of the system.}

\eat{
Combining the advantages of PoS and PoW, we propose a Proof-of-Burning (PoB) incentive mechanism, where miners are required to burn some capital and provide the corresponding proof. PoB anchors incentives of each miner towards the amount of burning capital. Like PoW, the burning capital is an outside resource, which ensures the fairness guarantee. Meanwhile, like PoS, any form of digital assets can be accepted so that the unnecessary electricity consumption is avoided. 
Important topics such as proof generation, verification and asset price are also discussed. Our evaluation shows that PoB acquires the energy advantage from PoS and fairness advantage from PoW. }
\eat{\begin{table*}[htbp!] 
\center
\caption{\label{tab:notation}Table of Summery} 
\begin{tabular}{cccc}
\toprule
\textbf{Consensus Algorithm} & \textbf{PoW}  & \textbf{ML-PoS}  & \textbf{SL-PoS}  \\ 
\toprule
Cryptocurrency Systems & Bitcoin, Ethereum & Blackcoin, QTum & NXT \\
Fairness in Expectation & $\mathbb{E}[\lambda_A] = \frac{a}{a+b}$ & $\mathbb{E}[\lambda_A] = \frac{a}{a+b}$ & $\exists a,b$ such that, $\mathbb{E}[\lambda_A] \ne \frac{a}{a+b}$\\
$(\epsilon,\delta)$-Fairness Condition&  $n>\frac{\ln(\frac{2}{\delta})}{2a^2\epsilon^2}$& $\frac{1}{n} + w \le \frac{2a^2\epsilon^2}{\ln{\frac{2}{\delta}}}$ & N.A.\\
Reward Convergence Distribution & $\Pr[\lambda_A=\frac{a}{a+b}]=1$  & $\lambda_A\sim \text{Beta}(\frac{a}{w},\frac{b}{w})$ & $\Pr[\lambda_A = 1] +\Pr[\lambda_A =0] =1$\\
Fairness in Long Run & Fair in Expectation and Robustness  & Fair in Expectation & Monopolization\\
 \bottomrule 
 \end{tabular} 
\end{table*}}

In summary, our main contributions are as follows.
\begin{enumerate}[topsep=1mm, partopsep=0pt, itemsep=1mm, leftmargin=18pt]
    \item We propose two types of fairness for blockchain incentives, \ie~expectational fairness and robust fairness, to characterize the relation between the resource controlled by a miner and her rewards obtained from a mining game. 
    \item We conduct a thorough theoretical analysis on fairness for the most widely adopted PoW incentive protocol and three popular PoS incentive protocols. In the context of fairness, these protocols are generally ranked in the following descending order: PoW, C-PoS, ML-PoS and  SL-PoS.
    \item We carry out extensive experiments on real blockchain systems and numerical simulations to evaluate the fairness of different incentive protocols. Experimental results confirm our theoretical findings and shed light on the future development of fair incentive protocols.
\end{enumerate}
\vspace{-2mm}
\subsection{Organizations}
Section~\ref{sec:preliminaries} discusses the details of the PoW incentive protocol and three PoS incentive protocols. Section~\ref{sec:expectationfariness} analyzes expectational fairness and Section~\ref{sec:robustfair} studies robust fairness for the mentioned four incentive protocols. Section~\ref{sec:eval} carries out experimental evaluations and numerical simulations. Section~\ref{sec:discussion} discusses the lessons we learned from fairness analysis. Section \ref{sec:relate} reviews related work, and Section \ref{sec:con} concludes the paper.

\vspace{-2mm}
\section{Incentive Models of Blockchains}\label{sec:preliminaries}
Miners compete for proposing a valid block to append it to the current blockchain, which is incentivized by rewards. The chance of winning the competition usually depends on the resource controlled by miners, \eg~computation power and staking power. In this section, we introduce the incentive protocols of Proof-of-Work (PoW) and Proof-of-Stake (PoS) that are widely adopted by many popular blockchain systems. 

\subsection{PoW Incentive Model}\label{subsec:PoW}
A new block is accepted by a PoW network after a cryptographic puzzle is solved by miners~\cite{nakamoto2008bitcoin}. Specifically, a PoW puzzle is to find a valid $\mathtt{nonce}$ such that $\operatorname{Hash}(\mathtt{nonce},\dotsc) < D $,\footnote{Apart from the argument $\mathtt{nonce}$, the $\operatorname{Hash}(\cdot)$ function has some other arguments like the merkle root of packed transactions and the hash value of the previous block, \eg~$\operatorname{Hash}(\mathtt{nonce},\mathtt{merkle~root},\mathtt{previous~hash})$.} where $D$ is a pre-defined mining difficulty. The $\operatorname{Hash}(\cdot)$ function maps a $\mathtt{nonce}$ to an integer uniformly distributed in the range $[0,2^{256}-1]$. As a result, the event that the hash value is less than $D$ meets Bernoulli distribution with a success probability of $\frac{{D}}{2^{256}}$. While $D$ is much less than $2^{256}$ by design, it is almost impossible to solve the puzzle in one single trial. The likelihood of a miner solving the puzzle depends on the number of $\mathtt{nonce}$ per unit time she can check. For instance, we assume that there are two miners $A$ and $B$ who can verify $H_A$ and $H_B$ hashes every day, respectively. Therefore, the number of valid blocks found by $A$ (resp.\ $B$) during one day is very well approximated to a Poisson random variable with mean $\mu_{A}=\frac{{DH_A}}{2^{256}}$ (resp.\ $\mu_{B}=\frac{{DH_B}}{2^{256}}$). Then, the probability that miner $A$ (resp.\ $B$) will win the next block can be computed as follows. Specifically, let $T_A$ (resp.\ $T_B$) be the arrival time of the next block for $A$ (resp.\ $B$). With respect to the Poisson process, it is well known that $T_A$ and $T_B$ follow negative exponential distributions with rate parameters $\mu_A$ and $\mu_B$~\cite[Section 2.2]{ross1996stochastic}, respectively, \ie~the probability density function of $T_A$ is $f(t_A; \mu_A)=\mu_A\e^{-\mu_At_A}$ for $t_A\geq 0$. If miner $A$ wins the next block, $A$ must find a valid block earlier than $B$ such that $T_A < T_B$. 
Therefore, the probability of miner $A$ winning the next block is 
\begin{align*}
    \Pr[T_A < T_B]
    &=\int_{0}^{\infty}\int_{0}^{t_B}\mu_A\e^{-\mu_At_A}\mu_B\e^{-\mu_Bt_B}\diff t_A\diff t_B\\
    &=\tfrac{\mu_A}{\mu_A+\mu_B}=\tfrac{H_A}{H_A+H_B}.
\end{align*}

\subsection{Multi-Lottery PoS Incentive Model}\label{subsec:ML-PoS}
Despite of its popularity, PoW incurs massive energy consumption as the mining competition relies on computation power. To eradicate the waste of computation resource, PoS protocols are invented, where the competition depends on staking power instead. In the following, we introduce the multi-lottery PoS incentive model~\cite{Blackcoin,Qtum}, referred to as ML-PoS. Stakeholders of ML-PoS blockchains create a valid block if a candidate block satisfies the condition that $\operatorname{Hash}(\mathtt{time},\dotsc)<D\cdot\mathtt{stake}$, where $\mathtt{time}$ represents the timestamp when the candidate block is generated, $D$ is a pre-determined mining difficulty and $\mathtt{stake}$ is the value of stakes possessed. Since $\operatorname{Hash}(\cdot)$ is uniformly distributed in the range $[0,2^{256}-1]$, the event that a candidate block becomes valid meets Bernoulli distribution with a success probability of $\frac{D\cdot\mathtt{stake}}{2^{256}}$. Thus, if a miner possesses more stakes, she is more likely to create a new block successfully. Moreover, miners will try at different timestamps until a candidate block becomes valid. The trials are independent of timestamps following the same Bernoulli distribution. Again, we assume that there are two miners $A$ and $B$ possessing $S_A$ and $S_B$ stakes, respectively. We refer to $T_A$ (resp.\ $T_B$) as the number of timestamps miner $A$ (resp.\ $B$) has checked until $A$ (resp.\ $B$) meets the first success timestamp. It is easy to see that $T_A$ and $T_B$ follow geometric distributions with probability parameters $p_A=\frac{D S_A}{2^{256}}$ and $p_B=\frac{D S_B}{2^{256}}$, respectively, \ie~$\Pr[T_A=t]=(1-p_A)^{t-1}p_A$. Furthermore, if $A$ wins the next block, $A$ finds a valid block with fewer timestamps than $B$ such that $T_A < T_B$ or $A$ has a chance of $50\%$ when $T_A=T_B$ to break the tie. It is easy to get that
\begin{align*}
    \Pr[T_A<T_B]
    &=\sum\nolimits_{t_B=1}^{\infty}\sum\nolimits_{t_A=1}^{t_B-1}(1-p_A)^{t_A-1}p_A(1-p_B)^{t_B-1}p_B\\
    &=\frac{p_A - p_Ap_B}{p_A+p_B-p_Ap_B}.
\end{align*}
Similarly, $\Pr[T_A=T_B]=\frac{p_Ap_B}{p_A+p_B-p_Ap_B}$. Therefore, the probability of $A$ winning the next block is 
\begin{equation*}
    \Pr[T_A<T_B]+\frac{1}{2}\cdot\Pr[T_A=T_B]=\frac{p_A - p_Ap_B/2}{p_A+p_B-p_Ap_B}.
\end{equation*}
Moreover, the time interval between two blocks is around 5--10 minutes by design. Thus, $p_A$ and $p_B$ are sufficiently small (\eg~$1/1200$), which indicates that $p_Ap_B$ is negligible. As a result, $A$ wins the next block with a probability of $\frac{p_A}{p_A+p_B}=\frac{S_A}{S_A+S_B}$.

\spara{Remark}\label{rmk:ML-PoS}
In ML-PoS, the $\operatorname{Hash}(\cdot)$ function depends on $\mathtt{time}$ (\ie~timestamp) instead of $\mathtt{nonce}$ as in PoW. Using $\mathtt{time}$ ensures that each miner has exactly one trial at each timestamp. Hence, the number of trials depends only on staking power so that the mining completion is independent of computation power. However, if $\mathtt{nonce}$ is applied, miners might try different $\mathtt{nonces}$ at each timestamp. As a result, the number of trials would rely on computation power as well.

\subsection{Single-Lottery PoS Incentive Model}\label{subsec:SL-PoS}
Another variant of PoS incentive model uses the single-lottery protocol~\cite{NXT}, referred to as SL-PoS. Unlike ML-PoS where multiple trials are involved at different timestamps when miners compete for a block, SL-PoS only allows a single trial for each block. Specifically, each miner is assigned a lottery ticket represented by $\mathtt{time}$, which is given by $\mathtt{time} = \mathtt{basetime} \cdot \operatorname{Hash}(\mathtt{pk},\dotsc)/\mathtt{stake}$, where $\mathtt{basetime}$ is a pre-determined constant, $\mathtt{pk}$ denotes the miner's public key, and $\mathtt{stake}$ refers to the miner's staking power. The protocol works as follows: (i) $\mathtt{time}$ determines when the candidate block will become valid, and (ii) the first valid block (\ie~the one with the smallest value of $\mathtt{time}$) will be accepted by the current blockchain whereas the other candidates will be discarded. Again, the miner who possesses more stakes has a better chance to get a smaller value of $\mathtt{time}$ and hence she is more likely to be selected as the proposer of the next block. Consider the two-miner scenario where miner $A$ and miner $B$ control $S_A$ and $S_B$ stakes, respectively. Without loss of generality, we assume that $S_A\leq S_B$. Let $T_A$ (resp.\ $T_B$) denote the waiting time of $A$'s (resp.\ $B$'s) candidate block becoming valid, \ie~$T_A=\mathtt{basetime} \cdot\operatorname{Hash}(\mathtt{pk}_A,\dotsc)/S_A$ where $\mathtt{pk}_A$ is $A$'s public key and $\operatorname{Hash}(\mathtt{pk}_A,\dotsc)$ is a random number uniformly distributed in the range  $[0,2^{256}-1]$ with respect to $\mathtt{pk}_A$. If $A$ wins the next block, the waiting time of $A$ should be smaller than that of $B$ such that $T_A<T_B$ or $A$ has a chance of $50\%$ when $T_A=T_B$. Therefore, the probability of miner $A$ winning the next block is
\begin{align}
    &\Pr[{T_A}< {T_B}]+\frac{1}{2}\cdot\Pr[{T_A}={T_B}] \nonumber\\
    &=\sum\nolimits_{h_B=0}^{2^{256}-1}\sum\nolimits_{h_A=0}^{\frac{S_Ah_B}{S_B}-1} \frac{1}{2^{256\cdot 2}}+\frac{1}{2}\sum\nolimits_{h_B=0}^{2^{256}-1} \frac{1}{2^{256\cdot 2}} \nonumber\\
    &=\frac{S_A}{2S_B}\cdot \frac{2^{256}-1}{2^{256}} + \frac{1}{2\cdot 2^{256}} \approx \frac{S_A}{2S_B},\label{prob:SL-PoS}
\end{align}
where $h_A=\operatorname{Hash}(\mathtt{pk}_A,\dotsc)$ and $h_B=\operatorname{Hash}(\mathtt{pk}_B,\dotsc)$.

\spara{Discussion} In PoW, ML-PoS and SL-PoS, the reward is determined by the likelihood of a miner proposing a new block. However, unlike PoW and ML-PoS, we find that the success probability of a miner proposing a new block in SL-PoS is not proportional to her staking power in general. In particular, the above analysis shows that the probability of miner $A$ winning the next block is $\frac{S_A}{2S_B}<\frac{S_A}{S_A+S_B}$ when $S_A< S_B$, \eg~$\frac{S_A}{2S_B}\approx \frac{1}{2}\cdot\frac{S_A}{S_A+S_B}$ when $S_A\ll S_B$.

\subsection{Compound PoS Incentive Model}\label{subsec:C-PoS}
Recently, a compound PoS incentive protocol, referred to as C-PoS, is deployed by the next generation Ethereum 2.0~\cite{ETH20}. Miners will be rewarded as (i) proposers who propose a new block and (ii) attesters who verify the validity of a block. Specifically, a miner will be assigned one identity for every 32 Ethers deposited in the smart contract. These identities are randomly and disjointly partitioned to 32 shards as attesters to verify transactions in parallel. In addition, one identity will be selected uniformly at random from every shard as the block proposer. During each mining epoch, a miner can receive $3 \cdot \mathtt{base} \cdot \mathtt{vote}$ incentives for each attester identity she controls, where $\mathtt{base}$ is a pre-determined constant, $\mathtt{vote}$ is the percentage of attesters that stay online and actively submit votes (which is usually close to $100\%$). Meanwhile, a total of $\frac{1}{8}\cdot \mathtt{base} \cdot N$ incentives are provided for block proposers because of their contributions to their newly proposed blocks in each epoch, where $N$ is the total number of identities assigned to all miners. Consider a two-miner scenario in a generalized C-PoS, where miners $A$ and $B$ possess $S_A$ and $S_B$ stakes, respectively. Let $v$ and $w$ denote the total rewards for attesters and proposers, respectively. Assume that there are $P$ shards. Then, miner $A$ will obtain $v\cdot \frac{S_A}{S_A+S_B}$ stakes as attesters and $w\cdot \frac{X}{P}$ stakes as proposers, where $X$ is the number of blocks proposed by $A$ in the epoch following binomial distribution $\operatorname{Bin}\big(P,\frac{S_A}{S_A+S_B}\big)$. Therefore, the total reward for $A$ is $v\cdot \frac{S_A}{S_A+S_B}+w\cdot \frac{X}{P}$, where $X\sim \operatorname{Bin}\big(P,\frac{S_A}{S_A+S_B}\big)$.

\spara{Remark} To maintain the total reward for attesters (or proposers) in each epoch stable, Ethereum 2.0 slowly decreases the value of $\mathtt{base}$ as additional identities are rewarded after each epoch. In particular, $v+w$ is around $1$ Ether and $v$ is ${\sim}20$ times of $w$ in Ethereum 2.0~\cite{ETH20}. 

\section{Expectational Fairness}\label{sec:expectationfariness}
Fairness is one of the most important concerns for the design of incentive mechanisms. Intuitively, in a blockchain system with a fair incentive mechanism, the reward of a miner should be proportional to the amount of resource (\eg~computation power in PoW and staking power in PoS) that she obtains. That is, the return on investment is identical for every miner. In this section, we first introduce our assumptions and the definition of expectational fairness and then analyze such fairness for the aforementioned four incentive protocols, namely PoW, ML-PoS, SL-PoS and C-PoS. 


\subsection{Assumption and Definition}
\label{sec:assumption}
We consider permission-less blockchains for all the protocols analyzed and make the following assumptions. 
\eat{Specifically, only the updates in the longest chain are valid, and one block proposer will be selected to win a coinbase reward for each block. Moreover, we 
also make the following assumptions:}

\begin{enumerate}[topsep=1mm, partopsep=0pt, itemsep=1mm, leftmargin=18pt]
\item Only two miners, \ie~miner $A$ and miner $B$, are competing for proposing blocks in the network.\label{as:1}
\item Initially, the resource share of miner $A$ (resp.\ $B$) is $a$ (resp.\ $b$). Without loss of generality, $a$ and $b$ are normalized such that $a+b=1$. \label{as:2}
\item The reward of each mining epoch remains the same, \ie~$w$ proposer reward and $v$ inflation reward (\eg~attester reward in C-PoS).\label{as:3} 
\item Both miners $A$ and $B$ do not perform additional action after a mining game starts.\label{as:4}
\end{enumerate}
Without loss of generality, we focus on studying the relationship between the original mining share of miner $A$ and the total rewards obtained by $A$. \hym{The simple two-miner model in Assumption~\ref{as:1} is for the sake of brevity. In Section~\ref{dis:exntend-multiple}, we elaborate how to generalize it to the scenario with multiple miners, \eg~considering $B$ as a set of miners in PoW \cite{kwon2017selfish,eyal2018majority}.} In Assumption~\ref{as:2}, the resource share represents the hash power ratio controlled by miners in PoW and the proportion of the initial amount of stakes possessed in PoS. As to Assumption~\ref{as:3}, blockchain systems may change rewards when time evolves. For example, Bitcoin halves its block reward every $210{,}000$ blocks. However, the current halving period of Bitcoin is around four years, which is a long time that can asymptotically support the assumption that the reward of each block remains unchanged \cite{garay2015bitcoin,moser2015trends}. Also note that the values of $w$ and $v$ reflect the relative relation between an initial resource and a reward per mining epoch, since $a$ and $b$ are normalized. According to Assumption~\ref{as:4}, we assume that both PoW and PoS miners passively participate in the mining game and do not perform any additional action like withdrawal or top-up \cite{kroll2013economics,ciamac2017monopoly}. 

Our aim is to study whether a miner with a fraction of the total resource can finally obtain the same fraction of reward in expectation. To achieve this goal, we leverage the concept of expectational fairness which is formally defined as follows.

\begin{definition}[Expectational Fairness] An incentive mechanism preserves expectational fairness for miner $A$ possessing a fraction $a$ of the total resource if $A$ receives a fraction $\lambda_A$ of the total reward satisfying $\mathbb{E}[\lambda_A] = a$.
\end{definition}


\subsection{Expectational Fairness for PoW}
Initially, miner $A$ (resp.\ miner $B$) controls a fraction $a$ (resp.\ $b=1-a$) of the total hash power (\ie mining rigs). The miner who successfully creates a new block will be rewarded an incentive $w$. Let $\lambda_A$ represent the fraction of rewards received by miner $A$ after a total of $n$ blocks are appended to the blockchain. The total number $n\lambda_A$ of blocks proposed by $A$ follows a binomial distribution $\operatorname{Bin}(n,a)$. Therefore, the expected reward of miner $A$ is $nwa$ and hence the expectation of the reward fraction is always equal to $a$. This indicates that PoW achieves expectational fairness, \ie~the expected reward of miner $A$ is proportional to her initial computation power.

\begin{theorem}\label{thm:PoW-fair}
PoW achieves expectational fairness.
\end{theorem}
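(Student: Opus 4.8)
The plan is to reduce the statement to a one-line computation of $\mathbb{E}[\lambda_A]$ by combining the per-block win probability already derived in Section~\ref{subsec:PoW} with the independence of successive blocks. First I would fix a mining horizon of $n$ blocks and write $\lambda_A = \frac{1}{n}\sum_{i=1}^{n} X_i$, where $X_i$ is the indicator that miner $A$ proposes the $i$-th block. This representation is legitimate because Assumption~\ref{as:3} fixes the same proposer reward $w$ for every block, so the \emph{fraction of reward} collected by $A$ coincides exactly with the \emph{fraction of blocks} proposed by $A$.

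The key step is to argue that each $X_i$ is Bernoulli with success probability exactly $a$, and that the $X_i$ are mutually independent. The success probability comes directly from the PoW analysis: the winner of each block is whoever solves the cryptographic puzzle first, and we showed $\Pr[T_A<T_B]=\frac{H_A}{H_A+H_B}$, which equals $a$ under the normalization $a = H_A/(H_A+H_B)$ of Assumption~\ref{as:2}. Independence follows from the memorylessness of PoW: the hash powers $H_A,H_B$ are resources external to the mining reward, and by Assumption~\ref{as:4} neither miner performs any withdrawal or top-up during the game, so the outcome of block $i$ leaves the win probability of block $i+1$ unchanged. Hence the Poisson/exponential arrival argument of Section~\ref{subsec:PoW} applies afresh and identically to each block, giving $n\lambda_A=\sum_{i} X_i \sim \operatorname{Bin}(n,a)$.

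Finally, linearity of expectation yields $\mathbb{E}[n\lambda_A]=na$, so $\mathbb{E}[\lambda_A]=a$ for every horizon $n$, which is precisely the definition of expectational fairness. The only conceptual point — and the place I would be careful — is justifying that the win probability stays pinned at $a$ across all blocks. This is exactly what distinguishes PoW from PoS: the competing resource is decoupled from accumulated rewards, so there is no feedback loop that could perturb $a$ over time. I would make this explicit by invoking Assumption~\ref{as:4} together with the fact that each new puzzle is drawn independently, rather than treating the binomial claim as self-evident; everything else is a routine application of linearity of expectation.
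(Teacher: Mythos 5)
Your proof is correct and follows essentially the same route as the paper, which simply observes that $n\lambda_A\sim\operatorname{Bin}(n,a)$ and concludes $\mathbb{E}[\lambda_A]=a$; your only addition is to spell out why the per-block win probability stays at $a$ and why the blocks are independent, which the paper treats as self-evident.
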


\subsection{Expectational Fairness for ML-PoS}\label{subsec:PoS-incentive}
At the beginning, miner $A$ (resp.\ miner $B$) owns a fraction $a$ (resp.\ $b=1-a$) of the total stakes. Each block gives a reward of $w$ stakes. Unlike PoW, the chance that $A$ can win a block not only depends on the initial staking power (\ie $a$) but also relies on previous mining outcomes. Specifically, the probability of $A$ proposing a new block is determined by $A$'s current staking power, including the initial stakes and the earned stakes. For example, if a miner is ``lucky'' to mine some blocks, her expected rewards will be improved in the future as the volume of her stakes increases. In the following, we show that ML-PoS still preserves expectational fairness. 
\begin{theorem}\label{thm:ML-PoS-fair}
ML-PoS achieves expectational fairness.
\end{theorem}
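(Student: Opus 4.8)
The plan is to model the evolution of miner $A$'s stake as a Markov chain and exploit a martingale (generalized P\'olya-urn) structure. Let $S_k$ denote the stake held by $A$ immediately after $k$ blocks have been appended, so that $S_0 = a$ and the total stake in the system is $1 + kw$ (the normalized initial total $a+b=1$ plus a reward $w$ per block). By the ML-PoS win probability established in Section~\ref{subsec:ML-PoS}, $A$ proposes the $(k+1)$-th block with probability $\frac{S_k}{1+kw}$, in which case $S_{k+1} = S_k + w$, and otherwise $S_{k+1} = S_k$. Writing $W_A$ for the number of blocks won by $A$ out of $n$, the reward fraction is $\lambda_A = W_A/n = \frac{S_n - a}{nw}$, so it suffices to compute $\mathbb{E}[S_n]$.

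First I would compute the one-step conditional expectation, which is the heart of the argument:
\[
\mathbb{E}[S_{k+1} \mid S_k] = S_k + w \cdot \frac{S_k}{1+kw} = S_k \cdot \frac{1+(k+1)w}{1+kw}.
\]
This shows that the normalized fraction $f_k := S_k/(1+kw)$ satisfies $\mathbb{E}[f_{k+1}\mid f_k] = f_k$, i.e., the stake fraction is a martingale, even though the raw stake $S_k$ grows over time. Taking expectations and telescoping the multiplicative recursion gives $\mathbb{E}[S_n] = a\cdot\frac{1+nw}{1} = a(1+nw)$, equivalently $\mathbb{E}[f_n] = a = f_0$. Substituting into $\lambda_A = \frac{S_n-a}{nw}$ then yields $\mathbb{E}[\lambda_A] = \frac{a(1+nw)-a}{nw} = a$, which is exactly expectational fairness.

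The computation itself is short; the main obstacle is conceptual, namely identifying the correct quantity to track. A naive attempt to argue $\mathbb{E}[\lambda_A]=a$ directly from the per-block win probabilities stalls, because that probability $f_k$ is itself a random variable depending on the entire mining history through the accumulated stakes, so the blocks are far from independent (unlike the clean $\operatorname{Bin}(n,a)$ in the PoW case). The key realization that resolves this is that normalizing by the growing total $1+kw$ turns $f_k$ into a martingale, so $\mathbb{E}[f_k]=a$ holds for \emph{every} $k$ regardless of the correlations; summing these per-block expectations immediately gives $\mathbb{E}[W_A]=na$. I would also note that the analysis uses the idealized win probability $\frac{S_A}{S_A+S_B}$, the $p_Ap_B$ term being negligible as argued earlier, and that this same martingale/urn structure is precisely what will need to be revisited for robust fairness: there the non-degenerate limiting distribution of $f_n$ (a Beta distribution) is what explains why ML-PoS, despite being fair in expectation, can fail the stronger notion.
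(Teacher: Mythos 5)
Your proof is correct and follows essentially the same route as the paper's: the same one-step conditional expectation $\mathbb{E}[S_{k+1}\mid S_k]=S_k\cdot\frac{1+(k+1)w}{1+kw}$, the same telescoping to $\mathbb{E}[S_n]=a(1+nw)$, and the same conversion $\mathbb{E}[\lambda_A]=\frac{\mathbb{E}[S_n]-a}{nw}=a$. The explicit observation that the stake fraction $S_k/(1+kw)$ is a martingale is a nice framing the paper leaves implicit, but it does not change the argument.
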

\hym{Intuitively, the rationality behind Theorem~\ref{thm:ML-PoS-fair} is from the fact that ML-PoS enables the probability of a miner proposing a new block being proportional to her currently possessed stakes. That is, the conditional expected reward for miner $A$ is proportional to her currently possessed stakes for each block. Taking expectation over the randomness of her possessed stakes, we can obtain that the expected reward for miner $A$ is proportional to her expected possessed stakes for each block, which concludes the theorem. Due to space limitations, we omit all proofs, and interested readers are referred to the appendix in our technical report~\cite{Missing_Proofs} for details.}


\subsection{Expectational Fairness for SL-PoS}\label{subsec:SL-PoS-EF}
As we discussed in Section~\ref{subsec:SL-PoS}, different from ML-PoS, the probability that $A$ wins a block under SL-PoS is $\frac{a}{2b}\leq a$ when $a\leq b$. \hym{As a result, unless $a=b$, the expected reward of $A$ is not guaranteed to be proportional to her initial resource share $a$ and hence SL-PoS does not preserve expectational fairness.}
\begin{theorem}\label{thm:SL-PoS-fair}
	SL-PoS does not ensure expectational fairness even after an infinity number of blocks are proposed.
\end{theorem}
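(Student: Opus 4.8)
The plan is to disprove expectational fairness by exhibiting a concrete initial share $a \neq b$ for which $\mathbb{E}[\lambda_A]$ provably differs from $a$, and to show that this gap not only persists but is amplified as the horizon grows. First I would model the SL-PoS reward allocation as a stochastic process: let $X_n$ be the number of the first $n$ blocks won by $A$, so that $\lambda_A^{(n)} = X_n/n$, and let $p_n = S_A/(S_A+S_B) = (a + wX_n)/(1 + wn)$ denote $A$'s current stake fraction. By Assumption~\ref{as:4} and the single-lottery analysis of Section~\ref{subsec:SL-PoS}, conditioned on $p_n = p$ the next block goes to $A$ with probability $r(p) = \frac{p}{2(1-p)}$ when $p \le \tfrac12$ and $r(p) = 1 - \frac{1-p}{2p}$ when $p > \tfrac12$. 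The single structural fact driving everything is that $r(p) < p$ for $p < \tfrac12$ and $r(p) > p$ for $p > \tfrac12$: the per-block win probability is below (resp.\ above) the ``fair'' value whenever $A$ trails (resp.\ leads), which is the rich-get-richer bias.

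The easy observation handles the finite horizon: for $n=1$ we get $\mathbb{E}[\lambda_A^{(1)}] = r(a) = \frac{a}{2(1-a)} < a$ whenever $0 < a < \tfrac12$, so fairness already fails after a single block. The substance of the statement is the ``even after infinitely many blocks'' clause, which rules out that the bias averages out. For this I would prove monopolization, $p_n \to p_\infty \in \{0,1\}$ almost surely, via a Lyapunov argument on $\phi(p) = p(1-p)$. Since $\phi$ is quadratic the expansion is exact: $\mathbb{E}[\phi(p_{n+1}) \mid \mathcal{F}_n] - \phi(p_n) = \phi'(p_n)\,\frac{(r(p_n)-p_n)w}{1+wn+w} - \mathbb{E}[(\Delta p_n)^2 \mid \mathcal{F}_n]$. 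The drift term is $\le 0$ on all of $[0,1]$ because $\phi'(p) = 1-2p$ and $r(p)-p$ always carry opposite signs, and the curvature term is strictly negative in the interior; hence $\phi(p_n)$ is a nonnegative supermartingale and converges a.s. Because the increments' conditional variances are $O(1/n^2)$ and the interior drift is strictly negative, the limit of $\phi$ must be $0$, so $p_n \to \{0,1\}$. As $\lambda_A^{(n)}$ and $p_n$ share the same limit and both lie in $[0,1]$, bounded convergence gives $\lim_n \mathbb{E}[\lambda_A^{(n)}] = \Pr[p_\infty = 1] =: q_A$.

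It then remains to show $q_A \neq a$, which is the main obstacle. Fairness in the limit would force the monopolization probability to equal exactly the initial share, $q_A = a$. I would establish $q_A < a$ for $a \in (0,\tfrac12)$ by comparing the SL-PoS chain with the fair chain that wins each block with probability exactly $p$, namely the ML-PoS process of Theorem~\ref{thm:ML-PoS-fair}, whose stake fraction is a bounded martingale and therefore has limiting expectation $a$. Since $r(p) \le p$ throughout $p \le \tfrac12$, a step-by-step monotone coupling keeps the SL-PoS fraction no larger than the fair one so long as both stay below $\tfrac12$, steering SL-PoS toward the absorbing boundary $0$ with strictly higher probability; the strict start inequality $r(a) < a$ makes the comparison strict, yielding $q_A < a$ and hence $\lim_n \mathbb{E}[\lambda_A] = q_A \neq a$.

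The delicate point, where I would spend the most care, is exactly this coupling, because the bias reverses sign at $p = \tfrac12$: once an excursion crosses above $\tfrac12$ the naive coupling fails, as there SL-PoS wins \emph{more} than the fair chain. I would control this by conditioning on the first crossing time of the level $\tfrac12$ and applying the strong Markov property together with the $A$--$B$ symmetry $q_A(a) = 1 - q_A(1-a)$, so that the analysis on each side of $\tfrac12$ reduces to an underdog analysis in which the one-sided comparison is valid. A lighter-weight alternative that already suffices for a disproof is to bypass the exact value of $q_A$ and simply invoke $\mathbb{E}[\lambda_A^{(1)}] = \frac{a}{2(1-a)} \neq a$, observing that the monopolization result only widens this discrepancy; either route certifies that SL-PoS fails to ensure expectational fairness.
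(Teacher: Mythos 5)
Your one-block computation $\E[\lambda_{A,1}]=\frac{a}{2(1-a)}<a$ matches the first half of the paper's proof, but the infinite-horizon half of your argument has a genuine gap, and it sits exactly where you said you would ``spend the most care.'' Reducing the claim to $\lim_n\E[\lambda_{A,n}]=\Pr[p_\infty=1]=:q_A$ is fine (granting monopolization), but you never actually prove $q_A\neq a$. The monotone coupling with the fair ML-PoS chain breaks as soon as an excursion crosses $1/2$, and the proposed repair (condition on the first crossing time, invoke the strong Markov property and the symmetry $q(a)=1-q(1-a)$) is not carried out and does not obviously close: after the first crossing the process sits just above $1/2$, where symmetry only gives an absorption probability near $1/2$, and the chain is time-inhomogeneous, so the restarted subgame is not a copy of the original. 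Your ``lighter-weight alternative'' is a non sequitur: $\E[\lambda_{A,1}]\neq a$ together with monopolization does not imply $\E[\lambda_{A,\infty}]\neq a$ --- a process could be absorbed at $1$ with probability exactly $a$ (giving $\E[\lambda_{A,\infty}]=a$ despite monopolization), and since the per-block bias reverses sign above $1/2$, the early deficit could in principle be compensated by later surpluses; ``the monopolization result only widens this discrepancy'' is an assertion, not an argument. A secondary gap: in the Lyapunov step the drift of $\phi(p)=p(1-p)$ vanishes at $p=1/2$ and the curvature term is $O(1/n^2)$, hence summable, so your argument does not exclude $p_n\to 1/2$ (i.e., $\phi\to 1/4$); ruling that out requires the unstable-equilibrium, non-degenerate-noise argument that the paper supplies via Lemma~\ref{lemma:sa-unstable} in its proof of Theorem~\ref{thm:SL-PoS-robust}.

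The paper's own proof of Theorem~\ref{thm:SL-PoS-fair} avoids all of this with a short self-consistency contradiction: assume $\E[\lambda_A]=a$ holds for every initial share, condition on the outcome $X_1$ of the first block, and apply the fairness hypothesis to the subgame whose initial share is $\frac{a+wX_1}{1+w}$; equating $\E[n\lambda_A]=na$ with the resulting decomposition forces $a(b-a)(nw+1)=0$, which fails whenever $0<a<b$. If you wish to keep your absorption-probability route, the honest accounting is that $q_A\neq a$ is the real content of the infinite-horizon claim and needs its own proof; the paper's conditioning trick is one way to supply a contradiction without ever computing or bounding $q_A$.
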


\subsection{Expectational Fairness for C-PoS}
Different from ML-PoS, in each mining epoch of C-PoS, the network randomly selects $P$ block proposers and each proposer will receive a proposer reward of $\frac{w}{P}$ (\eg~$P=32$ in Ethereum 2.0~\cite{ETH20}). In addition, the system also provides a total of $v$ inflation reward (\eg~attester reward in Ethereum 2.0~\cite{ETH20}) to all miners. Both the probability of proposer selection and the allocation of inflation reward are proportional to miners' present staking power. \hym{Therefore, analogous to ML-PoS, C-PoS still preserves expectational fairness although it is more complicated.}
\begin{theorem}\label{thm:C-PoS-fair}
C-PoS achieves expectational fairness.
\end{theorem}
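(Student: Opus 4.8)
The plan is to follow the same strategy used for Theorem~\ref{thm:ML-PoS-fair}: instead of tracking the reward fraction $\lambda_A$ directly, I would show that the \emph{expected staking power} of miner $A$ stays exactly at an $a$-fraction of the total stake at every epoch, and then read off $\mathbb{E}[\lambda_A]=a$ from the net stake that $A$ accumulates. Write $S_A^{(k)}$ and $S_B^{(k)}$ for the stakes of $A$ and $B$ after $k$ epochs, with $S_A^{(0)}=a$, $S_B^{(0)}=b$ and $a+b=1$. The crucial structural observation is that the total stake evolves \emph{deterministically}: each epoch injects exactly $v+w$ into the system (the inflation reward $v$ plus the total proposer reward $w$, since $P$ proposers each receive $\frac{w}{P}$), so $S_A^{(k)}+S_B^{(k)}=1+k(v+w)$ with probability one. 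This deterministic denominator is what keeps the later conditional expectations linear.

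Next I would compute the one-step conditional expected reward. Conditioned on the current stakes, $A$ collects the inflation reward $v\cdot\frac{S_A^{(k)}}{S_A^{(k)}+S_B^{(k)}}$ together with the proposer reward $\frac{w}{P}\cdot X$, where $X\sim\operatorname{Bin}\big(P,\frac{S_A^{(k)}}{S_A^{(k)}+S_B^{(k)}}\big)$. By linearity, $\mathbb{E}[X\mid S_A^{(k)}]=P\cdot\frac{S_A^{(k)}}{S_A^{(k)}+S_B^{(k)}}$, so the binomial randomness collapses and the expected per-epoch reward for $A$ is $(v+w)\cdot\frac{S_A^{(k)}}{S_A^{(k)}+S_B^{(k)}}$ — precisely the proportional form that drives the ML-PoS argument. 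Substituting the deterministic denominator $1+k(v+w)$, this yields $\mathbb{E}[S_A^{(k+1)}\mid S_A^{(k)}]=S_A^{(k)}\cdot\frac{1+(k+1)(v+w)}{1+k(v+w)}$.

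Taking expectations and inducting on $k$ (base case $\mathbb{E}[S_A^{(0)}]=a$) then gives $\mathbb{E}[S_A^{(k)}]=a\,(1+k(v+w))$ for all $k$. Since the total reward paid out over $n$ epochs is the deterministic quantity $n(v+w)$, and $A$'s cumulative reward is the net stake gain $S_A^{(n)}-a$, I conclude $\mathbb{E}[\lambda_A]=\frac{\mathbb{E}[S_A^{(n)}]-a}{n(v+w)}=\frac{a\,n(v+w)}{n(v+w)}=a$, which establishes the theorem.

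The only real difference from ML-PoS, and hence the potential obstacle, is the binomial proposer reward introduced by selecting $P$ proposers per epoch; but this is tamed immediately, because its sole contribution to the expectation is through its mean $P\cdot\frac{S_A^{(k)}}{S_A^{(k)}+S_B^{(k)}}$, which is again proportional to the current stake fraction, so the extra variance is irrelevant to expectational fairness. The one point I would state carefully is that $S_A^{(k)}+S_B^{(k)}$ is deterministic — without it, the conditional expectation of the ratio $\frac{S_A^{(k)}}{S_A^{(k)}+S_B^{(k)}}$ would not simplify and the clean telescoping induction would break down.
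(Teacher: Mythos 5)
Your proposal is correct and follows essentially the same argument as the paper's proof: both track $A$'s stake $S_i$ with the deterministic total $1+(w+v)i$ in the denominator, collapse the binomial proposer reward to its mean $P\cdot\frac{S_i}{1+(w+v)i}$ in the conditional expectation, telescope to $\mathbb{E}[S_n]=a(1+(w+v)n)$, and read off $\mathbb{E}[\lambda_A]=\frac{\mathbb{E}[S_n]-a}{(w+v)n}=a$. No gaps; your explicit remark that the deterministic denominator is what keeps the recursion linear is exactly the right point to emphasize.
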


\section{Robust Fairness} \label{sec:robustfair}
In the previous section, we analyze the stochastic process of the mining procedure leveraging the concept of expectational fairness. However, people usually care more about fairness in every possible outcome rather than a simple expectation. Expectational fairness, unfortunately, cannot provide such analysis. To tackle this issue, in this section, we propose a new concept of robust fairness, which can better characterize the relation between the initial investment and the reward distribution.

\subsection{Definition of Robust Fairness}
Robust fairness, intuitively, implies that the random outcome of a miner's reward is concentrated to its initial investment with a high probability. To capture the concept of robust fairness, we define $(\varepsilon,\delta)$-fairness as follows. 



\begin{definition}[$(\varepsilon,\delta)$-Fairness]\label{def:fair}
For any given pair of parameters $(\varepsilon,\delta)$ such that $\varepsilon \geq 0 $ and $0 \le \delta \le 1$, an incentive mechanism preserves an $(\varepsilon,\delta)$-fairness for miner $A$ possessing a fraction $a$ of the total resource if $A$ receives a fraction $\lambda_{A}$ of the total reward satisfying
\begin{equation*}
    \Pr\big[(1-\varepsilon)a\leq \lambda_{A}\leq (1+\varepsilon)a\big]\ge 1 - \delta.
\end{equation*}
\end{definition}

Definition \ref{def:fair} defines bicriteria fairness. Note that the definition of $(\varepsilon,\delta)$-fairness does not explicitly include the total number $n$ of blocks (or epochs for C-PoS) for competing.\footnote{To reveal $n$ explicitly, we may use $\lambda_{A,n}$ to indicate the mining outcome of $n$ blocks. For brevity, $\lambda_{A}$ refers to $\lambda_{A,n}$ in this paper unless specified otherwise.} Usually, $\lambda_{A}$ will gradually converge as long as $n$ increases. Thus, our analysis of $(\varepsilon,\delta)$-fairness is carried out on a large value of $n$, including a special case where $n$ goes to infinity. In the rest of the paper, we also say that an $\varepsilon$-fairness is achieved with a probability at least $1-\delta$, which exactly means that $(\varepsilon,\delta)$-fairness is preserved. According to Definition \ref{def:fair}, smaller values of $\varepsilon$ and/or $\delta$ indicate a higher level of fairness. In particular, an incentive mechanism preserving $(0,0)$-fairness, which is absolutely fair, is an ideal protocol. 


\eat{
In what follows, we show that PoW protocols preserve $(\epsilon,\delta)$-fairness if $n \ge \frac{\ln(\frac{2}{\delta})}{2a^2\epsilon^2}$, which indicates PoW miners satisfies robust fairness if the miner plays the mining game in sufficiently long time. In multi-lottery PoS protocols, to achieve robust fairness, it requires $\frac{1}{n} + w \le \frac{2a^2\epsilon^2}{\ln\frac{2}{\delta}}$, meaning that not only the miner should maintain mining for a long time but also the reward and transaction fee in each block should be sufficiently small. Completely different to the previous two protocols, the robust fairness of single-lottery PoS becomes worse as mining game proceeds. We show that single-lottery PoS achieves $\epsilon$-fairness with $0$ probability when $n\to \infty$.
}
\subsection{Robust Fairness for PoW}
For the PoW incentive protocol, let $F(k;n,a)$ be the cumulative distribution function of the random variable $n\lambda_A$, \ie
\begin{equation*}
    F(k;n,a):=\Pr[n\lambda_A\leq k]=\sum_{i=0}^{k}\binom{n}{i}a^{i}(1-a)^{n-i}.
\end{equation*}
In addition, let $\Delta(\varepsilon;n,a)$ be a function of $\varepsilon,n,a$ such that
\begin{equation*}
    \Delta(\varepsilon;n,a):=F(\lfloor n(1+\varepsilon)a\rfloor;n,a)-F(\lceil n(1-\varepsilon)a\rceil;n,a).
\end{equation*}
Thus, $\Delta(\varepsilon;n,a) =\Pr\big[(1-\varepsilon)a\leq \lambda_A\leq (1+\varepsilon)a\big]$. Therefore, to achieve an $(\varepsilon,\delta)$-fairness for miner $A$, the incentive parameters $n$ and $a$ must satisfy that $\Delta(\varepsilon;n,a)\geq 1-\delta$. However, $\Delta(\varepsilon;n,a)$ is complicated, which cannot clearly and explicitly reveal the requirements. \hym{To tackle this issue, we make use of Hoeffding inequality \cite{hoeffding1994probability}, which provides a \textit{neat} expression of an upper bound on the probability that the sum of bounded independent random variables deviates from its expected value by more than a certain amount. In what follows, we give a sufficient condition (but not necessary) required by PoW for preserving an $(\varepsilon,\delta)$-fairness.}
\begin{theorem}\label{thm:PoW-robust}
PoW preserves an $(\varepsilon,\delta)$-fairness for miner $A$ with the computation power of $a$ if the total number $n$ of blocks for competing satisfies $n \ge \frac{\ln(\frac{2}{\delta})}{2a^2\varepsilon^2}$. 
\end{theorem}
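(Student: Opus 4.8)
The plan is to reduce the statement to a direct application of Hoeffding's inequality. Recall from the analysis of expectational fairness for PoW that the number $n\lambda_A$ of blocks won by miner $A$ follows $\Bin(n,a)$; hence I can write $n\lambda_A = \sum_{i=1}^n X_i$, where $X_1,\dots,X_n$ are i.i.d.\ Bernoulli random variables with mean $a$, each taking values in $[0,1]$. The target event $(1-\varepsilon)a \le \lambda_A \le (1+\varepsilon)a$ is precisely $|\lambda_A - a| \le \varepsilon a$, equivalently $\big|\sum_{i=1}^n X_i - na\big| \le n\varepsilon a$ on the scale of the sum.

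First I would invoke the two-sided Hoeffding inequality for a sum of independent $[0,1]$-bounded variables: for any $t>0$, $\Pr\big[\big|\sum_{i=1}^n X_i - na\big| \ge t\big] \le 2\e^{-2t^2/n}$. Substituting the deviation $t = n\varepsilon a$ that matches our tolerance gives $\Pr\big[|\lambda_A - a| \ge \varepsilon a\big] \le 2\e^{-2n\varepsilon^2 a^2}$, so that $\Pr\big[(1-\varepsilon)a \le \lambda_A \le (1+\varepsilon)a\big] \ge 1 - 2\e^{-2n\varepsilon^2 a^2}$.

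It then remains to force the tail $2\e^{-2n\varepsilon^2 a^2}$ to be at most $\delta$. Taking logarithms, this is equivalent to $2n\varepsilon^2 a^2 \ge \ln(2/\delta)$, i.e.\ $n \ge \frac{\ln(2/\delta)}{2a^2\varepsilon^2}$, which is exactly the stated hypothesis; under this bound the $(\varepsilon,\delta)$-fairness condition $\Pr[\cdots] \ge 1-\delta$ holds.

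I do not expect a genuine obstacle here—the argument is a textbook concentration bound. The only points requiring care are (i) correctly translating the relative tolerance $\varepsilon a$ on the average $\lambda_A$ into the absolute deviation $t = n\varepsilon a$ on the sum before applying Hoeffding, and (ii) noting, consistent with the remark preceding the theorem, that the resulting condition is sufficient but not necessary, since Hoeffding only upper-bounds the tail while the exact quantity $\Delta(\varepsilon;n,a)$ may already satisfy $\ge 1-\delta$ for a smaller $n$.
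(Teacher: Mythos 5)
Your proposal is correct and follows exactly the same route as the paper's proof: a two-sided Hoeffding bound with deviation $t = n\varepsilon a$ on the binomial sum $n\lambda_A$, yielding the tail $2\e^{-2na^2\varepsilon^2}$, which is then set at most $\delta$ to recover the stated threshold on $n$. Your version simply spells out the reduction to i.i.d.\ Bernoulli variables and the sufficiency (not necessity) of the condition, both of which are consistent with the paper's remarks.
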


By Theorem \ref{thm:PoW-robust}, we know that if miner $A$ competes for more blocks and/or possesses more hash power, she will feel fairer. Moreover, it is easy to get that for PoW, $\lambda_A$ converges to $a$ almost surely when $n\to \infty$, \ie~$\Pr[\lim_{n \to \infty}\lambda_A=a]=1$. Thus, when $n\to\infty$, PoW is absolutely fair by achieving the $(0,0)$-fairness.


\subsection{Robust Fairness for ML-PoS}
The mining process of ML-PoS can be modeled by a classical P\'{o}lya Urn such that the fraction $ \lambda_A$ of blocks proposed by $A$ will finally converge to a beta distribution $\Beta(\frac{a}{w},\frac{b}{w})$ almost surely~\cite[Theorem 3.2]{mahmoud2008polya}. In particular, it is sufficient to achieve an $(\varepsilon,\delta)$-fairness if $I_{(1+\varepsilon)a}(\frac{a}{w},\frac{b}{w})-I_{(1-\varepsilon)a}(\frac{a}{w},\frac{b}{w})\geq 1-\delta$, where $I$ is the regularized incomplete beta function. However, the relation is again not explicitly revealed.  
In the following, we derive a simple sufficient requirement by ML-PoS for achieving an $(\varepsilon,\delta)$-fairness. Unlike PoW where the mining outcomes are independent and identically distributed random variables, the mining competition for ML-PoS is a Markov chain process. \hym{To tackle this issue, we leverage Azuma inequality \cite{azuma1967weighted} for martingales \cite{doob1953stochastic}, which supports certain weakly dependent random variables.}


\eat{\begin{theorem}
\label{thr:powupper}
PoW fairness is upper bounded by $\left(n,\epsilon,\exp\left(-2n\left[\frac{\epsilon a}{a+b}\right]^2\right)\right)$.
\end{theorem}}

\begin{theorem}\label{thm:ML-PoS-robust}
ML-PoS preserves an $(\varepsilon,\delta)$-fairness for miner $A$ with staking power of $a$ if the total number $n$ of blocks for competing and the block reward $w$ satisfy $1/n+w\leq \frac{2a^2\varepsilon^2}{\ln\frac{2}{\delta}}$.
\end{theorem}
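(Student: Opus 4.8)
The plan is to bound the deviation $\Pr[\,|\lambda_A - a| > \varepsilon a\,]$ by an exponential tail through a martingale concentration argument: by Theorem~\ref{thm:ML-PoS-fair} we already know $\E[\lambda_A]=a$, so it only remains to control fluctuations around the mean. I would model ML-PoS as a generalized P\'olya urn. Let $X_k\in\{0,1\}$ indicate that $A$ wins block $k$, let $T_k=\sum_{j\le k}X_j$, and let $S_k = a + wT_k$ be $A$'s stake after $k$ blocks, with total stake $D_k := 1+kw$; since every block pays the same reward $w$, the reward fraction is $\lambda_A = T_n/n$. Writing $\mathcal{F}_k=\sigma(X_1,\dots,X_k)$, the winning rule gives $\Pr[X_{k+1}=1\mid\mathcal{F}_k]=Y_k$, where $Y_k := S_k/D_k$ is $A$'s current stake fraction. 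The whole argument rests on first observing that $Y_k$ is itself a martingale: a one-line computation gives $\E[S_{k+1}\mid\mathcal{F}_k]=S_k\,D_{k+1}/D_k$, hence $\E[Y_{k+1}\mid\mathcal{F}_k]=Y_k$.

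I would then work with the Doob martingale $M_k := \E[\lambda_A\mid\mathcal{F}_k]$, for which $M_0=a$ and $M_n=\lambda_A$. The martingale property of $Y_k$ is exactly what makes $M_k$ tractable: for any $j>k$ we get $\E[X_j\mid\mathcal{F}_k]=\E[Y_{j-1}\mid\mathcal{F}_k]=Y_k$, so the future block outcomes all collapse to $Y_k$ and
\begin{equation*}
    M_k = \tfrac{1}{n}\big(T_k + (n-k)\,Y_k\big).
\end{equation*}
From here I would compute the increment $M_{k+1}-M_k$ separately on the events $\{X_{k+1}=1\}$ and $\{X_{k+1}=0\}$. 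Using $Y_{k+1}-Y_k = w(D_k-S_k)/(D_kD_{k+1})$ on a win and $Y_{k+1}-Y_k = -\,wS_k/(D_kD_{k+1})$ on a loss, the telescoping factor $1+(n-k-1)w/D_{k+1}=D_n/D_{k+1}$ appears, and the two cases simplify to $+\,(D_k-S_k)D_n/(nD_kD_{k+1})$ and $-\,S_kD_n/(nD_kD_{k+1})$. The crucial point is that, although these endpoints are random (they depend on $S_k$), the \emph{width} of the conditional range is deterministic:
\begin{equation*}
    d_{k+1} = \frac{\big(S_k + (D_k - S_k)\big)\,D_n}{n\,D_k D_{k+1}} = \frac{D_n}{n\,D_{k+1}} = \frac{1+nw}{n\,(1+(k+1)w)}.
\end{equation*}

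Finally I would invoke the range (Hoeffding-lemma) form of Azuma's inequality, which only needs a deterministic bound on the increment widths, to obtain $\Pr[\,|\lambda_A-a|\ge\varepsilon a\,]\le 2\exp\!\big(-2\varepsilon^2a^2/\textstyle\sum_k d_k^2\big)$. It then suffices to show $\sum_{k=1}^n d_k^2\le 1/n+w$, which follows from the telescoping estimate $\frac{1}{(1+kw)^2}\le \frac{1}{w}\big(\frac{1}{1+(k-1)w}-\frac{1}{1+kw}\big)$, giving $\sum_{k=1}^n 1/(1+kw)^2\le n/(1+nw)$ and hence $\sum_k d_k^2 \le (1+nw)/n = 1/n+w$. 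Substituting and imposing the tail bound $\le\delta$ yields exactly $1/n+w\le 2a^2\varepsilon^2/\ln(2/\delta)$. The main obstacle is the increment computation: deriving the clean closed form for $M_k$ (which hinges on the martingale property of $Y_k$) and then verifying that the random win/loss endpoints share a deterministic range width $D_n/(nD_{k+1})$. Without exploiting this cancellation, the cruder $|M_{k+1}-M_k|\le c_k$ bound would lose a factor of four in the exponent and fail to reproduce the stated condition.
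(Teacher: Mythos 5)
Your proposal is correct and follows essentially the same route as the paper's proof: a Doob martingale (yours conditions $\lambda_A$ where the paper conditions $S_n$, an affine rescaling of the same object), the observation that the win/loss increments have a deterministic conditional range width $\frac{1+nw}{n(1+(k+1)w)}$, the range form of Azuma's inequality, and the same telescoping bound $\sum_k \frac{1}{(1+kw)^2}\le \frac{n}{1+nw}$ yielding $1/n+w$. The explicit remark that $Y_k$ is a martingale and that the crude two-sided bound would lose a factor of four are nice clarifications but do not change the argument.
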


If miner $A$ competes more blocks and/or possesses more staking power, it is easier for ML-PoS to achieve an $(\varepsilon,\delta)$-fairness. However, unlike PoW, ML-PoS is sensitive to the reward $w$, \eg~ a small reward for each block is more likely to be fair.

\subsection{Robust Fairness for SL-PoS}\label{subsec:SL-PoS-RF}
According to Section~\ref{subsec:SL-PoS}, for SL-PoS, the probability that miners win a block is not proportional to their staking powers. Specifically, the return on investment of a miner increases along with her staking power, which shows a clear unfairness of rich-get-richer. In the following, we study the robust fairness for SL-PoS by exploring the reward distribution.

Our analysis utilizes the techniques of Stochastic Approximation (SA)~\cite{robbins1951stochastic,renlund2010generalized}. We first introduce some useful definitions and lemmas of SA in the following. 

\begin{definition}[Stochastic Approximation~\cite{renlund2010generalized}]\label{def:SA}
A stochastic approximation algorithm $\{Z_n\}$ is a stochastic process taking value in $[0,1]$, adapted to the filtration $\mathcal{F}_n$, that satisfies, 
\begin{equation*}
    Z_{n+1} - Z_n = \gamma_{n+1}\big(f(Z_n) + U_{n+1}\big),
\end{equation*}
where $\gamma_n$, $U_n\in \mathcal{F}_n$, $f\colon [0,1] \mapsto \mathbb{R}$ and the following conditions hold almost surely
\begin{enumerate}[label=(\roman*)]
    \item $c_l/n \le \gamma_n \le c_u/n$,\label{cond1}
    \item $\abs{U_n} \le K_u$,\label{cond2}
    \item $\abs{f(Z_n)} \le K_f$, and \label{cond3}
    \item $\abs{\E[\gamma_{n+1} U_{n+1}\mid \mathcal{F}_n]} \le K_e\gamma_n^2$,\label{cond4}
\end{enumerate}
where $c_l,c_u,K_u,K_f,K_e$ are finite positive real numbers.
\end{definition}
The stochastic approximation algorithm is originally used for root-finding problems. Specifically, $\{Z_n\}$ is a stochastic process with an initial value of $Z_0$, $\gamma_n$ denotes a moving step size gradually decreasing along with $n$ and $\gamma_nU_n$ is a random noise with expectation tending to zero quickly. In a nutshell, $Z_n$ moves towards one of the zero points of $f(\cdot)$ and finally converges as long as the update process iterates a sufficiently large number of steps. 

\begin{lemma}[Zero Point of SA~\cite{renlund2010generalized}]\label{lemma:sa-zero}
If $f$ is continuous then $\lim_{n\to \infty} Z_n$ exists almost surely and is in $Q_f=\{x\colon f(x)=0\}$.
\end{lemma}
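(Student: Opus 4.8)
The plan is to establish the result in two movements: first show that the accumulated noise of the recursion converges almost surely, and then argue that convergent noise together with the boundedness of $Z_n$ forces the trajectory to settle at a single zero of $f$. Throughout I unroll the defining relation as $Z_n = Z_0 + \sum_{k=1}^{n}\gamma_k f(Z_{k-1}) + M_n$, where $M_n := \sum_{k=1}^{n}\gamma_k U_k$ is the noise partial sum.

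First I would prove $M_n$ converges a.s.\ via a Doob decomposition $\gamma_k U_k = D_k + E_k$, with $E_k := \E[\gamma_k U_k \mid \mathcal{F}_{k-1}]$ and $D_k := \gamma_k U_k - E_k$. Condition~\ref{cond4} gives $\abs{E_k}\le K_e\gamma_{k-1}^2 \le K_e c_u^2/(k-1)^2$, so $\sum_k E_k$ converges absolutely. The $D_k$ form a martingale-difference sequence whose conditional second moments satisfy $\E[D_k^2\mid\mathcal{F}_{k-1}]\le \E[\gamma_k^2 U_k^2\mid\mathcal{F}_{k-1}]\le K_u^2 c_u^2/k^2$ by conditions~\ref{cond1} and~\ref{cond2}; as these are a.s.\ summable, the $L^2$ martingale convergence theorem makes $\sum_k D_k$ converge a.s. Hence $M_n$ converges. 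Conditions~\ref{cond1}--\ref{cond3} also bound the increments by $\abs{Z_{n+1}-Z_n}\le \gamma_{n+1}(K_f+K_u)\le c_u(K_f+K_u)/(n+1)\to 0$, so the limit set of the bounded sequence $\{Z_n\}$ is a nonempty compact interval $[l,L]$; the task reduces to showing $[l,L]\subseteq Q_f$ and $l=L$.

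Both remaining parts rest on one mechanism: over any window $[n,m]$ the displacement splits as $Z_m-Z_n=\sum_{k=n}^{m-1}\gamma_{k+1}f(Z_k)+(M_m-M_n)$, where the weights satisfy $\sum_k\gamma_k=\infty$ (since $\gamma_k\ge c_l/k$) while $M_m-M_n$ is uniformly small in the tail. For $[l,L]\subseteq Q_f$, suppose a limit point $\xi$ has $f(\xi)>0$; continuity yields an interval on which $f\ge\alpha>0$, so while $Z$ stays there the drift pushes it up by $\ge\alpha\sum\gamma_{k+1}\to\infty$, forcing an upward exit, and a subsequent return downward would require a macroscopic negative noise increment that the convergent $M$ cannot supply infinitely often—so $\xi$ is visited only finitely often, a contradiction; the case $f(\xi)<0$ is symmetric, whence $f(\xi)=0$. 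For $l=L$, assume $l<L$, so $f\equiv 0$ on $[l,L]$; when $Z$ traverses the middle half of this interval, small increments guarantee $f(Z_k)=0$ exactly, the drift vanishes, and the entire displacement of size $(L-l)/2$ must come from $M_m-M_n$. Since $l$ and $L$ are both limit points, there are infinitely many such traversals, each demanding a noise oscillation of magnitude $\ge(L-l)/2$, contradicting the convergence of $M$. Therefore $l=L$ and $\lim_n Z_n$ exists in $Q_f$.

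The hard part will be this last step of excluding a nondegenerate limit interval. The subtlety is that knowing $f=0$ merely on the limit set does not by itself neutralize the drift, because $Z_k$ can linger in thin boundary layers just outside $[l,L]$ where $f\neq0$ and the $\gamma$-weights still accumulate. The argument must therefore be localized to the strict interior—the middle half—where vanishing increments ensure $f(Z_k)$ is genuinely zero, so that any crossing is attributable entirely to noise; keeping the noise/drift bookkeeping uniform across the infinitely many crossing windows, so that a single convergent $M$ produces the contradiction, is the technical heart of the proof.
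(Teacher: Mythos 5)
This lemma is quoted in the paper from Renlund's work on generalized P\'{o}lya urns and is not proved there, so there is no in-paper argument to compare against; judged on its own terms, your proof is correct and is essentially the standard stochastic-approximation convergence argument. The two pillars are both sound: (1) the noise sum $M_n=\sum_k\gamma_k U_k$ converges a.s.\ because condition~(iv) makes the compensator absolutely summable and conditions~(i)--(ii) make the martingale part $L^2$-bounded via $\sum_k \E[D_k^2\mid\mathcal{F}_{k-1}]\le \sum_k K_u^2c_u^2/k^2<\infty$; and (2) the crossing arguments correctly exploit $\sum_k\gamma_k=\infty$ (from $\gamma_k\ge c_l/k$) against the Cauchy property of $M_n$. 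Your handling of the one genuinely delicate point --- ruling out a nondegenerate limit interval $[l,L]$ --- is right: by restricting each crossing window to the middle half of $[l,L]$, every interior iterate lies in the limit set where $f$ vanishes identically (by the first part), the only drift contribution is the single boundary term $\gamma_{n+1}f(Z_n)=O(1/n)$, and the displacement $(L-l)/2$ must then be supplied by $M_m-M_n$ infinitely often, which a convergent sequence cannot do. One cosmetic remark: in the first part, the return to a neighborhood of a non-zero $\xi$ need only be excluded once after the tail of $M$ is Cauchy within $\eta/4$, not ``infinitely often''; the conclusion that $\xi$ is visited finitely often is nevertheless what you state and is what the contradiction requires.
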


Note that $Z_n$ may not converge to every zero point in $Q_f$. That is, if a zero point $q$ is stable, $Z_n$ converges to $q$ when $n\to \infty$ with a positive probability. Otherwise, if $q$ is an unstable point, $Z_n$ converges to $q$ with zero probability. The following lemmas characterize the properties of stable and unstable points of SA.

\begin{definition}[Attainability~\cite{renlund2010generalized}]
	A subset ${I}$ is attainable if for every fixed $N \ge 0$, there exists an $n \ge N$ such that $\Pr[Z_n \in {I}] > 0$.
\end{definition}
\begin{lemma}[Stable Zero Point of SA~\cite{renlund2010generalized}]\label{lemma:sa-stable}
	Suppose $q\in Q_f$ is stable, \ie~$f(x)(x-q)<0$ whenever $x\ne q$ is close to $q$. If every neighborhood of $q$ is attainable then $\Pr[Z_n \to p] > 0$.
\end{lemma}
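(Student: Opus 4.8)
The plan is to treat $q$ as the unique attracting equilibrium of the mean drift $f$ inside a small interval and to show, via a stopped Lyapunov argument, that once $\{Z_n\}$ enters a sufficiently small neighborhood of $q$ it stays trapped there and converges with positive probability. First I would use stability together with the continuity of $f$ to fix $\eta>0$ so that $f(x)(x-q)<0$ for every $x\in(q-\eta,q+\eta)\setminus\{q\}$; by Lemma~\ref{lemma:sa-zero}, $q$ is then the unique zero of $f$ in this interval. Attainability supplies a (large) index $N$ for which the event $\Event_N=\{Z_N\in(q-\eta/2,\,q+\eta/2)\}$ has positive probability, and the rest of the argument is conditioned on $\Event_N$.

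The core estimate concerns the Lyapunov function $V_n=(Z_n-q)^2$. Expanding the recursion $Z_{n+1}-Z_n=\gamma_{n+1}(f(Z_n)+U_{n+1})$ yields
\begin{align*}
\E[V_{n+1}-V_n\mid\mathcal{F}_n]={}&2(Z_n-q)\gamma_{n+1}f(Z_n)+2(Z_n-q)\,\E[\gamma_{n+1}U_{n+1}\mid\mathcal{F}_n]\\
&+\E[\gamma_{n+1}^2(f(Z_n)+U_{n+1})^2\mid\mathcal{F}_n].
\end{align*}
Inside the neighborhood the leading term $-\mu_n:=2(Z_n-q)\gamma_{n+1}f(Z_n)$ is nonpositive and acts as a restoring force, while the two remaining terms are bounded by a summable sequence $\beta_n=O(n^{-2})$: condition (iv) gives $|\E[\gamma_{n+1}U_{n+1}\mid\mathcal{F}_n]|\le K_e\gamma_n^2$, and conditions (i)--(iii) bound the quadratic term by $(K_f+K_u)^2\gamma_{n+1}^2$. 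Thus on the neighborhood $\E[V_{n+1}\mid\mathcal{F}_n]\le V_n-\mu_n+\beta_n$ with $\mu_n\ge0$ and $\sum_n\beta_n<\infty$.

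Next I would let $\tau$ be the first time after $N$ at which $Z_n$ leaves $(q-\eta,q+\eta)$ and form the stopped process $M_n=V_{n\wedge\tau}+\sum_{k\ge n\wedge\tau}\beta_k$, which the previous display turns into a nonnegative supermartingale. Doob's maximal inequality then bounds the exit probability $\Pr[\tau<\infty\mid\Event_N]$ by roughly $\E[M_N\mid\Event_N]/\eta^2$; since $V_N\le(\eta/2)^2$ on $\Event_N$ and $\sum_{k\ge N}\beta_k=O(1/N)$, choosing $\eta$ small and $N$ large makes this bound strictly below $1$, so $\Pr[\tau=\infty\mid\Event_N]>0$. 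On $\{\tau=\infty\}$ the Robbins--Siegmund theorem gives that $V_n$ converges and $\sum_n\mu_n<\infty$; because $\sum_n\gamma_n=\infty$ while $\mu_n$ would be bounded below by a positive multiple of $\gamma_n$ were $Z_n$ to stay a fixed distance from $q$, summability of $\sum_n\mu_n$ forces $\liminf_n|Z_n-q|=0$, and since $V_n$ converges its limit must be $0$. Hence $Z_n\to q$ on a set of positive probability, which is the claim.

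The main obstacle is the exit-probability step. The difficulty is that the restoring drift $\mu_n$ and the individual increments are both of order $\gamma_n=\Theta(1/n)$, so no single step is decisive and convergence is a genuinely asymptotic effect; what rescues the argument is the summability of the error $\beta_n$, which relies on the $1/n$ decay of step sizes in condition (i) and, crucially, on condition (iv) forcing the conditional noise bias to be $O(\gamma_n^2)$ rather than $O(\gamma_n)$. A secondary technicality is the single-step overshoot at time $\tau$: because an increment can carry $Z_n$ an $O(1/n)$ distance past the boundary, one must verify that exiting still pushes $M_n$ above a fixed threshold, which holds once $N$ is taken large enough.
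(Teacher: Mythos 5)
The paper never proves this lemma: it is imported verbatim from \citet{renlund2010generalized} as a black-box tool for Theorem~\ref{thm:SL-PoS-robust}, so there is no in-paper proof to compare against. Your argument is a correct, self-contained reconstruction of the standard stochastic-approximation proof and matches the spirit of the cited source: use attainability to start inside a small neighborhood of $q$ with positive probability, run a Lyapunov/supermartingale estimate on $V_n=(Z_n-q)^2$ to show the exit probability from that neighborhood is strictly less than one, and then combine Robbins--Siegmund with $\sum_n\gamma_n=\infty$ to force $V_n\to 0$ on the non-exit event. Two points are worth tightening. First, by Definition~\ref{def:SA} the quantities $\gamma_{n+1}$ and $U_{n+1}$ are $\mathcal{F}_{n+1}$-measurable, not $\mathcal{F}_n$-measurable, so the drift term in your expansion should read $2(Z_n-q)f(Z_n)\,\E[\gamma_{n+1}\mid\mathcal{F}_n]$ rather than $2(Z_n-q)\gamma_{n+1}f(Z_n)$; condition~\ref{cond1} bounds $\gamma_{n+1}$ deterministically between $c_l/(n+1)$ and $c_u/(n+1)$, so the sign and order of the restoring term are unaffected, but the display as written is not quite licit. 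Second, the final step (ruling out $V_n\to c>0$) needs $-f(x)(x-q)$ bounded away from zero on the compact annulus $\{\sqrt{c}/2\le |x-q|\le\eta\}$; this follows from the continuity of $f$ that you invoke at the outset, but it deserves an explicit mention since the lemma statement itself does not list continuity as a hypothesis (it is assumed in Lemma~\ref{lemma:sa-zero} and holds in the paper's application to SL-PoS). Your worry about the single-step overshoot at $\tau$ is in fact a non-issue in your own setup: exiting $(q-\eta,q+\eta)$ forces $V_\tau\ge\eta^2$, which only increases $M_\tau$ and therefore only helps the Doob bound. (Incidentally, the statement's $\Pr[Z_n\to p]$ is a typo for $\Pr[Z_n\to q]$.)
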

\begin{lemma}[Unstable Zero Point of SA~\cite{renlund2010generalized}]\label{lemma:sa-unstable}
	Assume that there exists an unstable point $q$ in $Q_f$, \ie~such that $f(x)(x-q) \ge 0$ locally, and that $\E[U_{n+1}^2\mid \mathcal{F}_n] \ge K_L$ holds, for some $K_L>0$, whenever $Z_n$ is close to $q$. Then, $\Pr[Z_n \to q] = 0$. 
\end{lemma}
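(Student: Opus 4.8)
By Lemma~\ref{lemma:sa-zero} the limit $Z_\infty:=\lim_n Z_n$ exists almost surely and lies in $Q_f$, so the claim $\Pr[Z_n\to q]=0$ is equivalent to $\Pr[Z_\infty=q]=0$. The plan is to rule this out by excluding \emph{confinement}: fix a neighborhood $N=(q-\delta,q+\delta)$ small enough that both local hypotheses hold on $N$, namely $f(x)(x-q)\ge 0$ and $\E[U_{n+1}^2\mid\mathcal{F}_n]\ge K_L$ whenever $Z_n\in N$. I would show that the event $C_{N_0}=\{Z_n\in N\text{ for all }n\ge N_0\}$ has probability $0$ for every $N_0$. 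Since $\{Z_n\to q\}\subseteq\bigcup_{N_0\ge 1}C_{N_0}$ and the $C_{N_0}$ increase in $N_0$, this gives $\Pr[Z_n\to q]=\lim_{N_0}\Pr[C_{N_0}]=0$. Localizing this way lets me assume the repelling drift and the nondegenerate noise are in force throughout the argument.

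Writing $D_n=Z_n-q$, Definition~\ref{def:SA} gives $D_{n+1}=D_n+\gamma_{n+1}\big(f(Z_n)+U_{n+1}\big)$, and instability yields $\alpha_n:=f(Z_n)/D_n\ge 0$, so each step carries a \emph{repelling} factor: $D_{n+1}=(1+\gamma_{n+1}\alpha_n)D_n+\gamma_{n+1}U_{n+1}$. I would strip this expansion with an integrating factor $\Gamma_n=\prod_{k\le n}(1+\gamma_k\alpha_{k-1})$ and set $W_n=D_n/\Gamma_n$; since $\Gamma_{n+1}\in\mathcal{F}_n$ one gets the clean increment $W_{n+1}-W_n=\gamma_{n+1}U_{n+1}/\Gamma_{n+1}$. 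Condition (iv), $\abs{\E[\gamma_{n+1}U_{n+1}\mid\mathcal{F}_n]}\le K_e\gamma_n^2$, together with $\sum_n\gamma_n^2<\infty$ from (i), makes $W_n$ a quasi-martingale with summable compensator, hence convergent almost surely, with finite quadratic variation. The problem then reduces to showing the noise keeps $D_n$ from reaching $0$. Bounding the future conditional variance, $\E[(W_\infty-W_n)^2\mid\mathcal{F}_n]\ge K_L\sum_{k\ge n}\E[\gamma_{k+1}^2/\Gamma_{k+1}^2\mid\mathcal{F}_n]$, whose leading term is of order $K_L\gamma_{n+1}^2/\Gamma_n^2$ and therefore dwarfs $W_n^2=D_n^2/\Gamma_n^2$ exactly when $\abs{D_n}$ is small relative to $\gamma_n$. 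In that regime a single step's noise, of conditional size at least $c\,\gamma_{n+1}\sqrt{K_L}$ with positive probability (a Paley--Zygmund lower bound on $U_{n+1}$, using $\abs{U_n}\le K_u$), displaces $Z_n$ to distance of order $\gamma_n$ from $q$, after which the repelling drift resists return.

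The crux, and the step I expect to be the main obstacle, is the \emph{borderline} scale $\abs{D_n}\asymp\gamma_n\asymp 1/n$, where drift and noise are comparable and neither the ``one-step kick dominates'' nor the ``drift dominates'' heuristic closes the argument; note the martingale $W_n$ alone only excludes convergence with $\abs{D_n}=o(\gamma_n)$. To handle all scales uniformly I would pass to logarithmic time, grouping steps into geometric blocks $[n_j,n_{j+1})$ with $n_{j+1}=\lfloor\theta n_j\rfloor$. On each block the linearized dynamics acts as an unstable linear recursion $\widetilde D_{j+1}\approx\theta^{\bar\alpha}\widetilde D_j+\xi_j$ with expansion factor $\theta^{\bar\alpha}>1$ and increments $\xi_j$ of variance bounded below, i.e.\ an exponentially expanding (unstable) Ornstein--Uhlenbeck recursion; its rescaled limit is nonzero almost surely by the nondegeneracy of $\xi_0$. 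Consequently, from \emph{any} starting point in $N$ the process reaches distance $\delta$ within finitely many blocks with a probability $\rho>0$ that is uniform in $n$ and in the starting position. Since $C_{N_0}$ would require avoiding escape over infinitely many disjoint blocks, $\Pr[C_{N_0}]\le\prod_j(1-\rho)=0$, which finishes the proof. The two supporting inputs are routine but need care: replacing $\alpha_n$ by its value at $q$ up to an $O(\delta)$ error (using continuity of $f$ near $q$, as in the SA framework) and absorbing the noise bias from (iv) into the compensator; both are controlled by choosing $\delta$ small and $N_0$ large while (i)--(iii) keep every constant finite.
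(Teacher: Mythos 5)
First, a point of reference: the paper does not prove Lemma~\ref{lemma:sa-unstable} at all. It is imported verbatim, with citation, from \cite{renlund2010generalized}, whose own proof is in the lineage of Pemantle's nonconvergence-to-unstable-points theorem. So there is no in-paper argument to compare yours against; the only question is whether your blind reconstruction of this known (and genuinely hard) result stands on its own. It does not yet.

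Your overall architecture is the standard and correct one (localize to a neighborhood where both hypotheses hold, reduce to excluding confinement, show a uniformly positive conditional escape probability per geometric block, conclude $\Pr[C_{N_0}]=0$), but the decisive step is asserted rather than proved. The claim that on each block the dynamics is ``an exponentially expanding Ornstein--Uhlenbeck recursion whose rescaled limit is nonzero almost surely'' is exactly the content of the theorem at the borderline scale $\abs{D_n}\asymp\gamma_n$, and you correctly identify it as the obstacle and then do not close it. Making it rigorous requires the two-lemma second-moment machinery: (a) from any state, the accumulated conditional variance over a block $[n,\theta n]$ is of order $K_L\sum_{k=n}^{\theta n}\gamma_k^2\asymp 1/n$, so by a Paley--Zygmund/one-sided Chebyshev argument $\abs{D_m}\geq c/\sqrt{m}$ is reached with probability bounded below; and (b) once $\abs{D_m}\geq c/\sqrt{m}$, a supermartingale comparison using the repelling drift shows escape to distance $\delta$ with probability bounded below. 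Neither half is routine, and your linearization $\widetilde D_{j+1}\approx\theta^{\bar\alpha}\widetilde D_j+\xi_j$ with ``expansion factor $\theta^{\bar\alpha}>1$'' is not available in general: the hypothesis is only $f(x)(x-q)\ge 0$, which permits $f\equiv 0$ on a neighborhood of $q$ (indeed, at $q=1/2$ in the SL-PoS application $f$ vanishes to first order only on one side's worth of sign, and in general $\bar\alpha$ may be $0$), so the ``unstable linear recursion'' picture collapses and the escape must be driven by noise alone. Finally, the product bound $\Pr[C_{N_0}]\le\prod_j(1-\rho)$ needs $\rho$ to lower-bound the escape probability \emph{conditionally on the full $\sigma$-field at the start of each block}, uniformly over all histories consistent with confinement so far --- the blocks are not independent --- which is again supplied by the uniformity in lemmas (a) and (b) but is not established by your sketch. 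In short: right roadmap, honest flagging of the gap, but the gap is the theorem.
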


Now, we are ready to analyze the robust fairness for SL-PoS. Specifically, we denote $Z_n$ as the fraction of staking power possessed by miner $A$ after $n$ blocks are competed, \eg~$Z_0=a$. We show that $\{Z_n\}$ is a stochastic approximation algorithm. In particular, the update of $Z_n$ is directed by the probability that the miner wins the next block characterized by $f(\cdot)$. Moreover, the update step size of $Z_n$ also decreases along with $n$ because the more stakes are issued during the mining game, the less one block outcome can affect $Z_n$. Then, we apply the stochastic approximation algorithm to study the asymptotic behavior of $Z_n$. Interestingly, we find that $Z_n$ will finally converge to $0$ or $1$ almost surely, which indicates that SL-PoS cannot achieve robust fairness.  
\begin{theorem}\label{thm:SL-PoS-robust}
For SL-PoS, the proportion reward $\lambda_{A}$ of miner $A$ converges to either $0$ or $1$ almost surely when $n\to\infty$. This indicates that SL-PoS cannot achieve robust fairness.
\end{theorem}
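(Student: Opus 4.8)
The plan is to track miner $A$'s \emph{stake fraction} rather than her reward fraction directly, cast its evolution as a stochastic approximation algorithm, and then invoke Lemmas~\ref{lemma:sa-zero} and~\ref{lemma:sa-unstable}. Let $Z_n$ denote the fraction of the total stakes held by $A$ after $n$ blocks, so $Z_0=a$ and the total stake is $T_n=1+nw$. If $A$ wins block $n+1$ her stake grows by $w$, otherwise $B$'s does; a short computation of $Z_{n+1}-Z_n=S_A^{(n+1)}/T_{n+1}-S_A^{(n)}/T_n$ gives
\begin{equation*}
Z_{n+1}-Z_n=\gamma_{n+1}\cdot\begin{cases}1-Z_n,&A\text{ wins},\\-Z_n,&B\text{ wins},\end{cases}\qquad\gamma_{n+1}=\frac{w}{1+(n+1)w}.
\end{equation*}
By \eqref{prob:SL-PoS}, the conditional probability that $A$ wins, as a function of $z=Z_n$, is $p(z)=\frac{z}{2(1-z)}$ for $z\le\frac12$ and $p(z)=1-\frac{1-z}{2z}$ for $z>\frac12$. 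Writing $f(z)=p(z)-z$ and letting $U_{n+1}$ be the mean-zero deviation from this drift, I would put the recursion in the canonical form $Z_{n+1}-Z_n=\gamma_{n+1}(f(Z_n)+U_{n+1})$ of Definition~\ref{def:SA}.

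First I would verify the four conditions of Definition~\ref{def:SA}. Condition~\ref{cond1} is immediate since $n\gamma_n=nw/(1+nw)$ lies between two positive constants; conditions~\ref{cond2} and~\ref{cond3} hold because $Z_n\in[0,1]$ and both $p$ and $f$ are bounded on $[0,1]$; and condition~\ref{cond4} holds trivially---in fact with $0$ on the left---because $\gamma_{n+1}$ is deterministic and $\E[U_{n+1}\mid\mathcal{F}_n]=0$ by construction. Next I would locate the zeros and classify their stability. Factoring gives $f(z)=\frac{z(2z-1)}{2(1-z)}$ on $[0,\frac12]$ and $f(z)=\frac{(1-z)(2z-1)}{2z}$ on $(\frac12,1]$, which is continuous at $\frac12$, so $Q_f=\{0,\frac12,1\}$. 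The sign analysis shows $f<0$ on $(0,\frac12)$ and $f>0$ on $(\frac12,1)$, whence $0$ and $1$ are stable while $\frac12$ is unstable in the senses of Lemmas~\ref{lemma:sa-stable} and~\ref{lemma:sa-unstable}.

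By Lemma~\ref{lemma:sa-zero}, $Z_n$ converges almost surely to a point of $Q_f$. To rule out the limit $\frac12$ through Lemma~\ref{lemma:sa-unstable}, I would check the noise-variance lower bound: near $z=\frac12$ the two possible increments $1-Z_n$ and $-Z_n$ differ by $1$ and $A$ wins with probability $\approx\frac12$, so $\E[U_{n+1}^2\mid\mathcal{F}_n]$ is bounded below by a positive constant $K_L$. Hence $\Pr[Z_n\to\frac12]=0$ and $Z_n$ converges almost surely to $0$ or $1$. Finally I would translate this back: since $A$'s reward after $n$ blocks is her stake gain $Z_nT_n-a$ while the total reward is $nw$, we have $\lambda_A=\frac{Z_n(1+nw)-a}{nw}=Z_n+\frac{Z_n-a}{nw}\to\lim_n Z_n$ almost surely, so $\lambda_A$ converges to $0$ or $1$, establishing the claim.

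I expect the main obstacle to be the piecewise setup rather than the limit argument: one must correctly derive the two-regime drift $f$ from \eqref{prob:SL-PoS} (which was stated only under $S_A\le S_B$, so the regime $z>\frac12$ requires swapping the roles of $A$ and $B$), confirm continuity and the unstable sign pattern at $\frac12$, and establish the variance lower bound $K_L$ needed to invoke Lemma~\ref{lemma:sa-unstable}. Once the process is correctly identified as a stochastic approximation with this drift, the convergence lemmas do the remaining work.
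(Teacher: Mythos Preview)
Your proposal is correct and follows essentially the same approach as the paper: cast the stake fraction $Z_n$ as a stochastic approximation with the same step $\gamma_{n+1}=w/(1+(n+1)w)$, the same piecewise drift $f(z)=p(z)-z$ with zeros $\{0,\tfrac12,1\}$, verify the four SA conditions, use Lemma~\ref{lemma:sa-zero} together with Lemma~\ref{lemma:sa-unstable} (via the variance lower bound near $\tfrac12$) to force the limit into $\{0,1\}$, and then transfer to $\lambda_A$. The paper additionally invokes Lemma~\ref{lemma:sa-stable} to argue $\Pr[Z_n\to 0]>0$ (and similarly for $1$), but this is not needed for the stated theorem, so your slightly leaner route is fine.
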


\begin{figure}[!tbp]
	\centering  
	\vspace{-2mm}
	\includegraphics[width=0.3\textwidth]{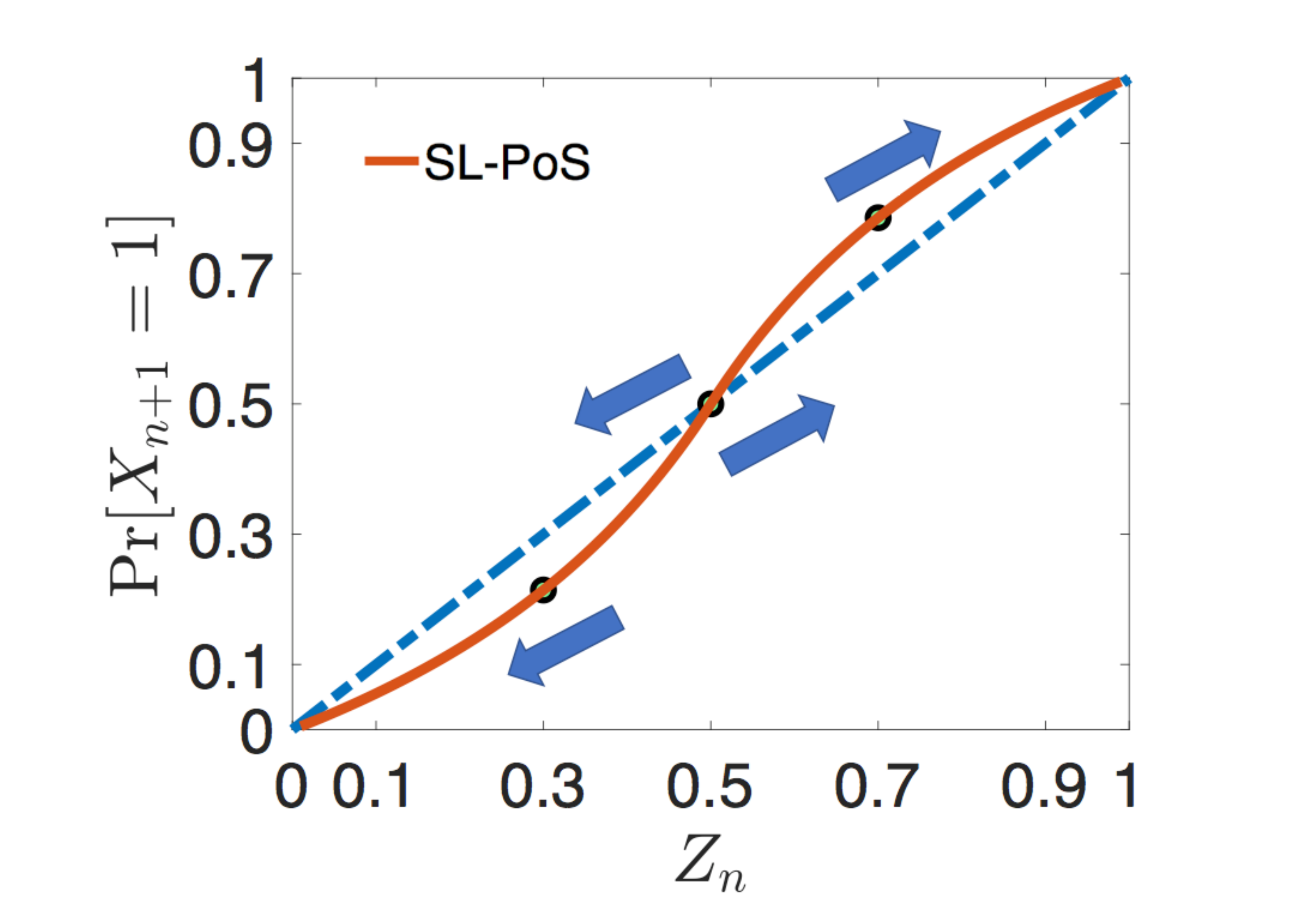}
	\vspace{-3mm}
	\caption{Probability of winning the next block for SL-PoS.} \label{fig:SL-PoS} 
	\vspace{-4mm}
\end{figure}

Theorem~\ref{thm:SL-PoS-robust} states that $\lambda_A$ will finally converge to either $\lambda_A = 0$ or  $\lambda_A = 1$, no matter how much staking power is initially controlled by miner $A$. In other words, the mining game ends with the fact that one monopoly miner acquires almost $100\%$ staking power. \figurename~\ref{fig:SL-PoS} illustrates how the fraction of staking power evolves during the mining game. When $Z_n = 0.3$, the probability miner $A$ can win another block is less than $30\%$ thus her expected fraction of stakes decreases as the mining game proceeds. Finally, $Z_n$ tends to $0$. Vice versa, when $Z_n = 0.7$, the miner will win a block with a  probability higher than $70\%$. As a consequence, her fraction of stakes tends to $1$. Specially, when $Z_n$ initiates on $0.5$, the fraction of staking power possessed by the miner gradually leaves $Z_n = 1/2$ towards either the left side or the right side and then converges to $0$ or $1$ with a fifty-fifty chance.



\subsection{Robust Fairness for C-PoS}
C-PoS provides both inflation and proposer rewards. The income uncertainty comes from the proposer reward, which is reduced by the inflation reward that is distributed proportionally to miners' shares of stakes. Therefore, compared with ML-PoS, C-PoS is more likely to achieve robust fairness. 


\begin{theorem}\label{thm:C-PoS-robust}
	C-PoS preserves an $(\varepsilon,\delta)$-fairness for miner $A$ with the staking power of $a$ if the total number $n$ of epochs for competing, the shard size $P$ of each epoch, the proposer reward $w$ and inflation reward $v$ for each mining epoch satisfy $\frac{w^2({1}/{n} + w+v)}{(w+v)^2P} \le \frac{2a^2\varepsilon^2}{\ln\frac{2}{\delta}}$.
\end{theorem}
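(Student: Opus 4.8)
The plan is to model the evolution of miner $A$'s stake share as a bounded martingale and then apply a \emph{variance-sensitive} concentration inequality; the shard structure of C-PoS is exactly what makes the variance (and hence the $1/P$ factor in the statement) available.

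First I would set up the state process. Let $Z_k$ denote the fraction of the total stakes held by $A$ after $k$ epochs, so $Z_0=a$, and let $S_k=1+k(w+v)$ be the total stakes after $k$ epochs, since each epoch injects exactly $w+v$ in rewards. In epoch $k+1$ miner $A$ collects $vZ_k$ as inflation reward and $\tfrac{w}{P}X_k$ as proposer reward, where $X_k\sim\Bin(P,Z_k)$ given $\mathcal{F}_k$. Writing $A_k=Z_kS_k$ for $A$'s stake, a direct computation gives
\[ Z_{k+1}-Z_k=\frac{w\left(X_k/P-Z_k\right)}{S_{k+1}}, \]
and since $\E[X_k/P\mid\mathcal{F}_k]=Z_k$ the increment is conditionally mean zero, so $\{Z_k\}$ is a martingale. (This is the same computation that yields $\E[Z_n]=a$, underlying Theorem~\ref{thm:C-PoS-fair}.)

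Second, I would extract the two quantities a martingale tail bound consumes. The increment is bounded, $|Z_{k+1}-Z_k|\le w/S_{k+1}$, exactly as in ML-PoS; but the decisive new feature of C-PoS is that its \emph{conditional variance is small}, namely
\[ \Var(Z_{k+1}\mid\mathcal{F}_k)=\frac{w^2}{S_{k+1}^2}\cdot\frac{Z_k(1-Z_k)}{P}\le\frac{w^2}{4P\,S_{k+1}^2}, \]
because averaging the proposer reward over $P$ shards shrinks its variance by a factor $P$. Summing and telescoping, the predictable quadratic variation is bounded by $V:=\sum_{k=1}^{n}\frac{w^2}{4P\,S_k^2}\le\frac{w^2}{4P}\cdot\frac{n}{1+n(w+v)}$.

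Third, I would feed these into a variance-aware martingale inequality of Freedman/Bernstein type (rather than the plain Azuma bound used for ML-PoS), which in the variance-dominated regime delivers a sub-Gaussian tail $\Pr[|Z_n-a|\ge t]\le 2\exp(-t^2/(2V))$. It then remains to translate the target event on $\lambda_A$ into one on $Z_n$: since $A$'s total reward is $A_n-a$ out of a grand total $n(w+v)$, we have $\lambda_A=(A_n-a)/(n(w+v))$, and substituting $A_n=Z_nS_n$ gives $\lambda_A-a=(Z_n-a)\bigl(1+\tfrac{1}{n(w+v)}\bigr)$. Hence $|\lambda_A-a|\le\varepsilon a$ is equivalent to $|Z_n-a|\le t$ with $t=\varepsilon a\,n(w+v)/(1+n(w+v))$. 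Imposing $2\exp(-t^2/(2V))\le\delta$, i.e.\ $V\le t^2/(2\ln\tfrac{2}{\delta})$, the factors $1+n(w+v)$ cancel and the inequality simplifies precisely to $\frac{w^2(1/n+w+v)}{(w+v)^2P}\le\frac{2a^2\varepsilon^2}{\ln\frac{2}{\delta}}$.

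The step I expect to be the main obstacle is the concentration itself. Plain Azuma using only the bounded-difference $w/S_{k+1}$ reproduces the ML-PoS condition and entirely misses the $1/P$ gain, so the conditional variance must be used; this forces a Freedman/Bernstein-type argument, and to recover the clean sub-Gaussian constant appearing in the statement one must argue that its lower-order (Bernstein) correction term is negligible in the relevant large-$n$, large-$P$ regime, or absorb it explicitly. By contrast, the telescoping of the variance sum and the $\lambda_A$-to-$Z_n$ conversion are routine algebra.
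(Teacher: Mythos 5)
Your setup is right and you have correctly located the source of the $1/P$ gain (averaging the proposer reward over $P$ shards), but the step you yourself flag as the main obstacle is a genuine gap, and it does not resolve the way you hope. Freedman's inequality gives $\Pr[|Z_n-a|\ge t]\le 2\exp\bigl(-t^2/(2(V+Rt/3))\bigr)$ with $R=\max_k w/S_{k+1}$, not the clean sub-Gaussian tail $2\exp(-t^2/(2V))$. In the regime the theorem targets, the correction term is not negligible --- it dominates. Indeed $V\le\frac{w^2}{4P}\cdot\frac{n}{1+n(w+v)}\approx\frac{w^2}{4P(w+v)}$ while $Rt/3\approx\frac{w\varepsilon a}{3}$, so $V/(Rt/3)\approx\frac{3w}{4P(w+v)\varepsilon a}$, which tends to $0$ as $P$ grows; with the paper's representative values ($w=0.01$, $v=0.1$, $P=32$, $a=0.2$, $\varepsilon=0.1$) the Bernstein correction is already an order of magnitude larger than $V$. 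So absorbing it would destroy exactly the $1/P$ improvement the theorem asserts, and "large $P$" makes matters worse, not better.

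The fix is to get the variance reduction through the filtration rather than through a variance-aware inequality, which is what the paper does. Refine each epoch into its $P$ shard outcomes $Y_{i,1},\dotsc,Y_{i,P}$ and form the Doob martingale $M_{i,j}=\E[S_n\mid Y_{1,1},\dotsc,Y_{i,j}]$. Each of the $nP$ increments then has range at most $\frac{1+(w+v)n}{1+(w+v)i}\cdot\frac{w}{P}$, so plain Azuma (the bounded-range version, as in the ML-PoS proof) gives
\begin{equation*}
\Pr\big[\abs{M_{n,P}-M_0}\ge\gamma\big]\le 2\exp\bigg(-\frac{2\gamma^2}{\sum_{i=1}^n\sum_{j=1}^{P}\big(\frac{1+(w+v)n}{1+(w+v)i}\cdot\frac{w}{P}\big)^2}\bigg)
=2\exp\bigg(-\frac{2\gamma^2P}{w^2\big(1+(w+v)n\big)n}\bigg),
\end{equation*}
where the inner sum contributes $P\cdot(w/P)^2=w^2/P$ and the outer sum telescopes exactly as in your variance computation. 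Setting $\gamma=na(w+v)\varepsilon$ yields the stated condition with no leftover term. Your $\lambda_A$-to-$Z_n$ conversion and the telescoping bound are correct and carry over unchanged; only the concentration step needs to be replaced.
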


\begin{figure*}[!hbpt]
	\centering  
	\includegraphics[width=0.6\textwidth]{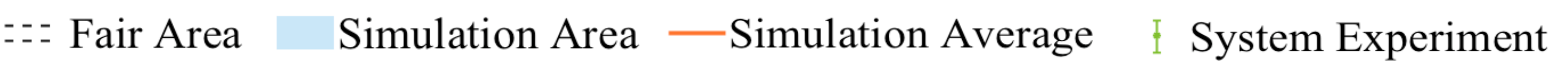}\vspace{-4mm}\\
	\subfloat[PoW]{\includegraphics[width=0.25\textwidth]{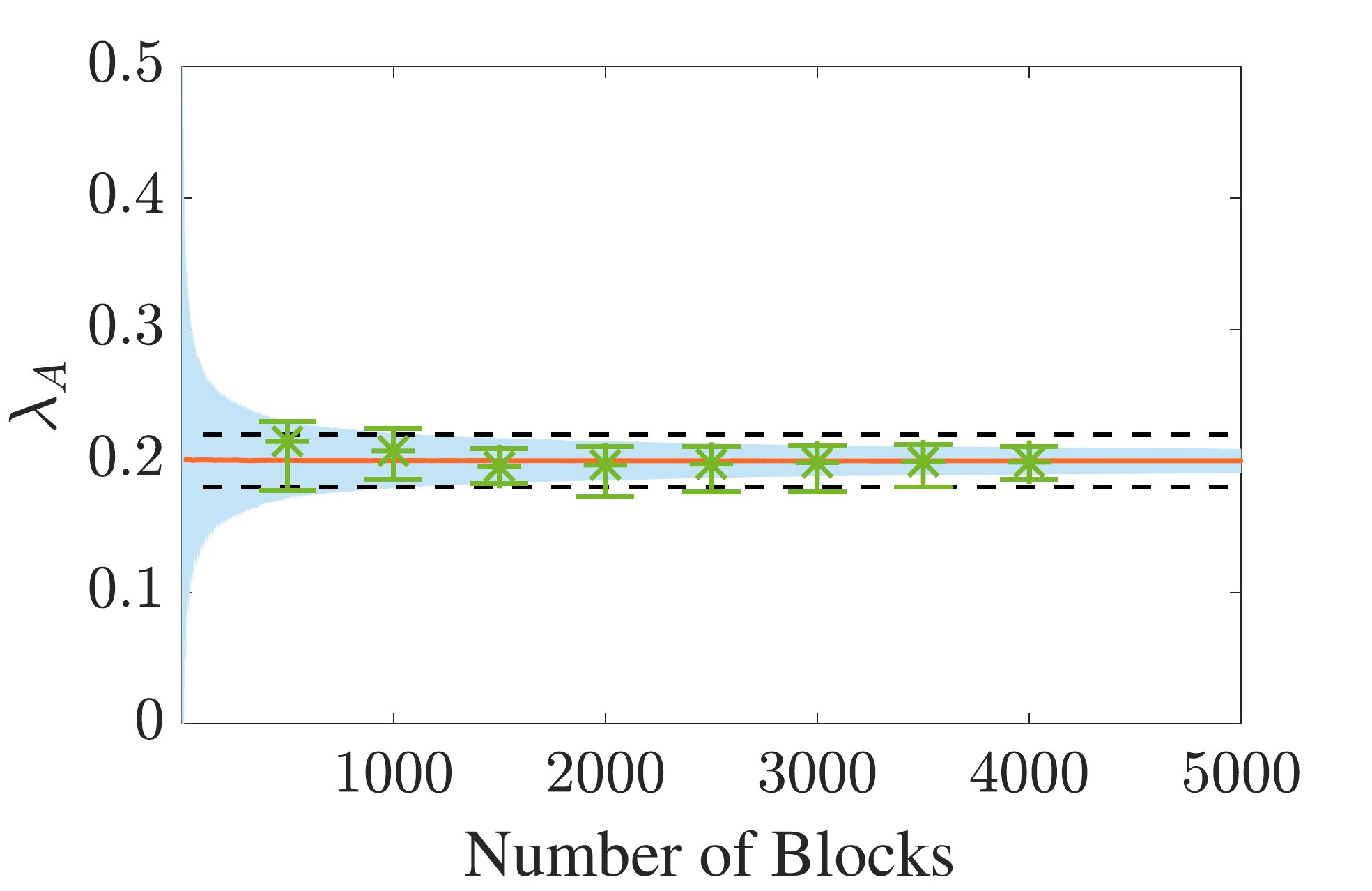}\label{fig:powsyssim}}\hfil
	\subfloat[ML-PoS ]{\includegraphics[width=0.25\textwidth]{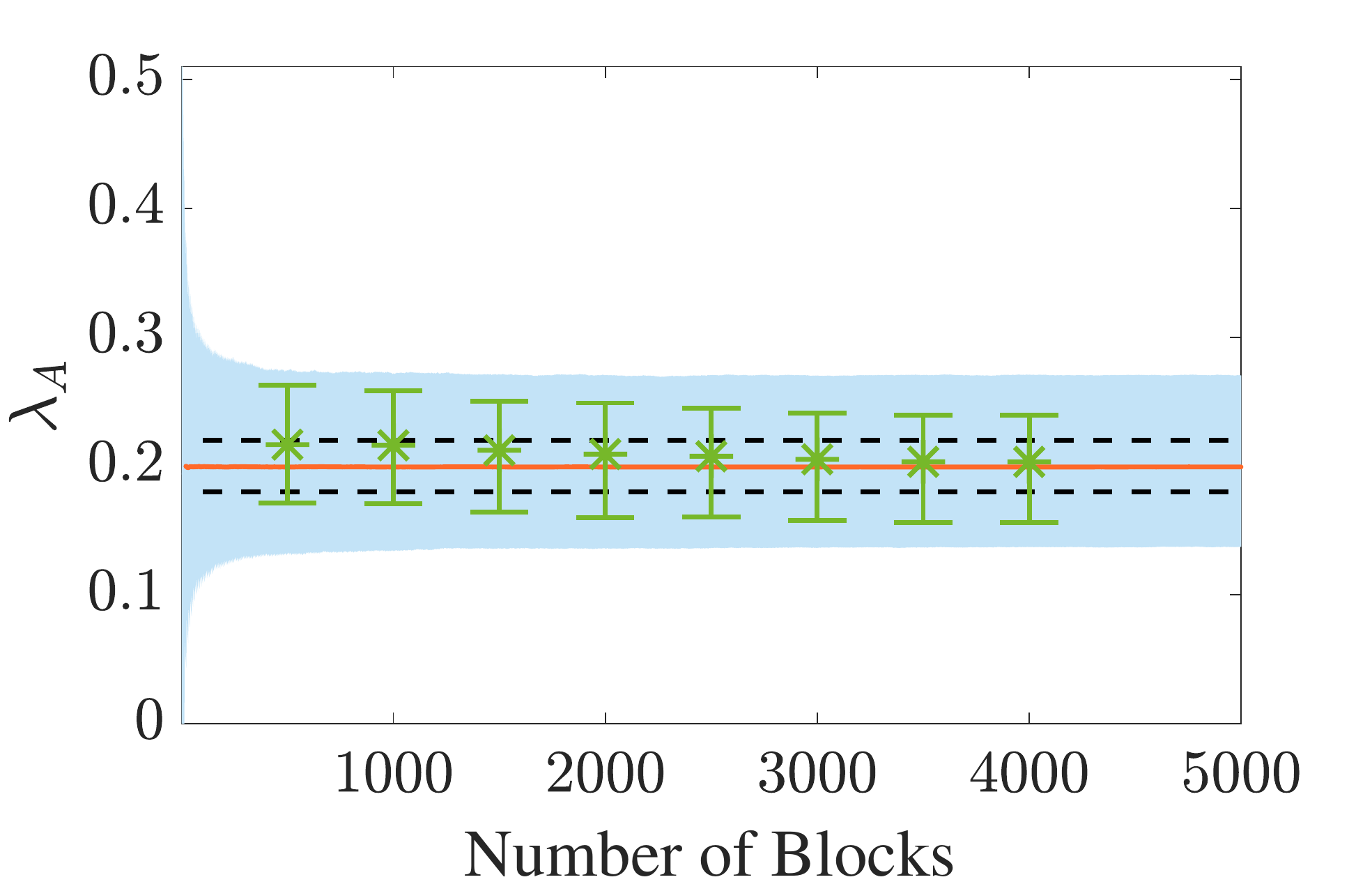}\label{fig:mlpossyssim}}\hfil
	\subfloat[SL-PoS]{\includegraphics[width=0.25\textwidth]{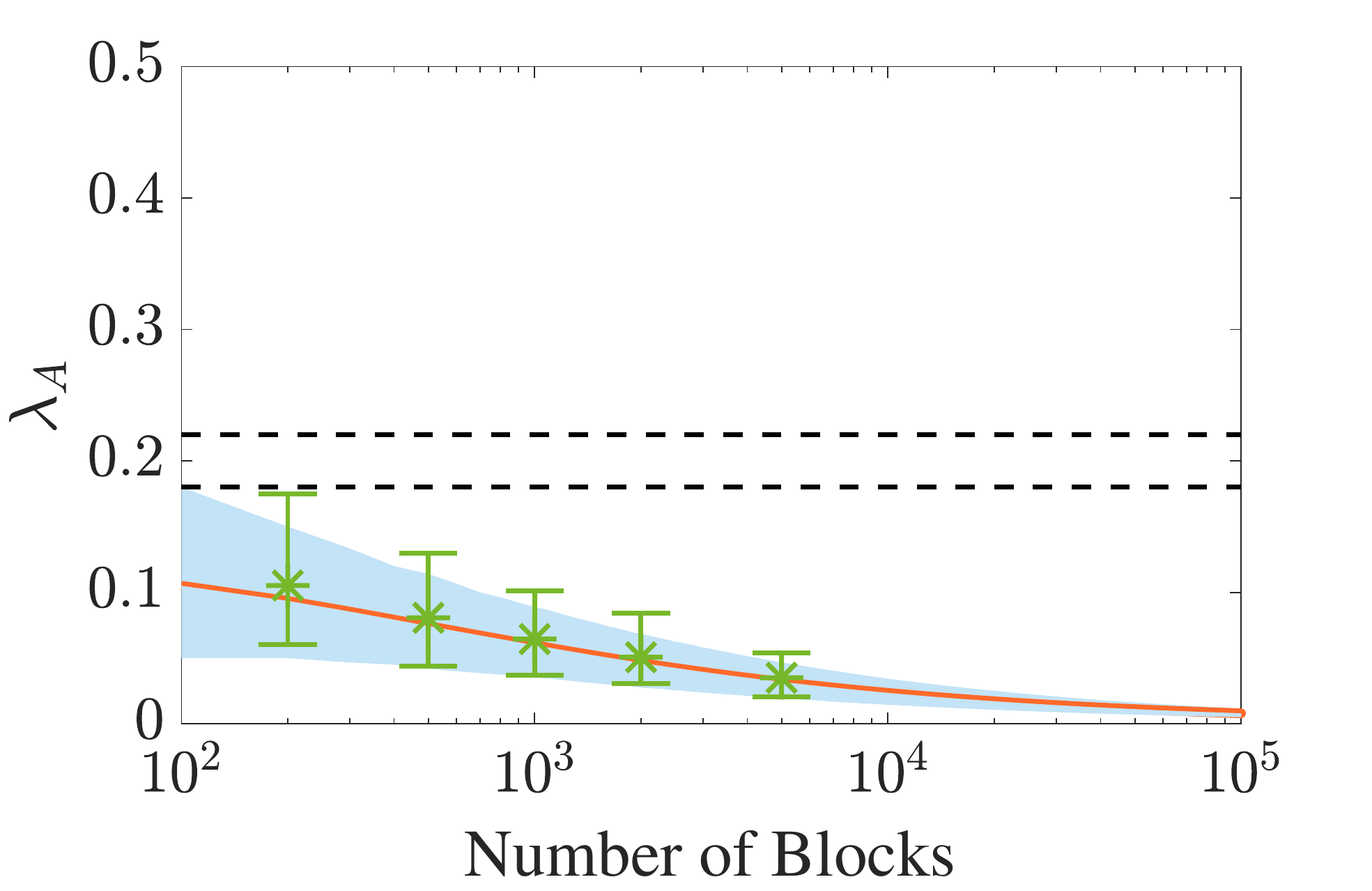}\label{fig:slpossyssim}}\hfil
	\subfloat[C-PoS]{\includegraphics[width=0.25\textwidth]{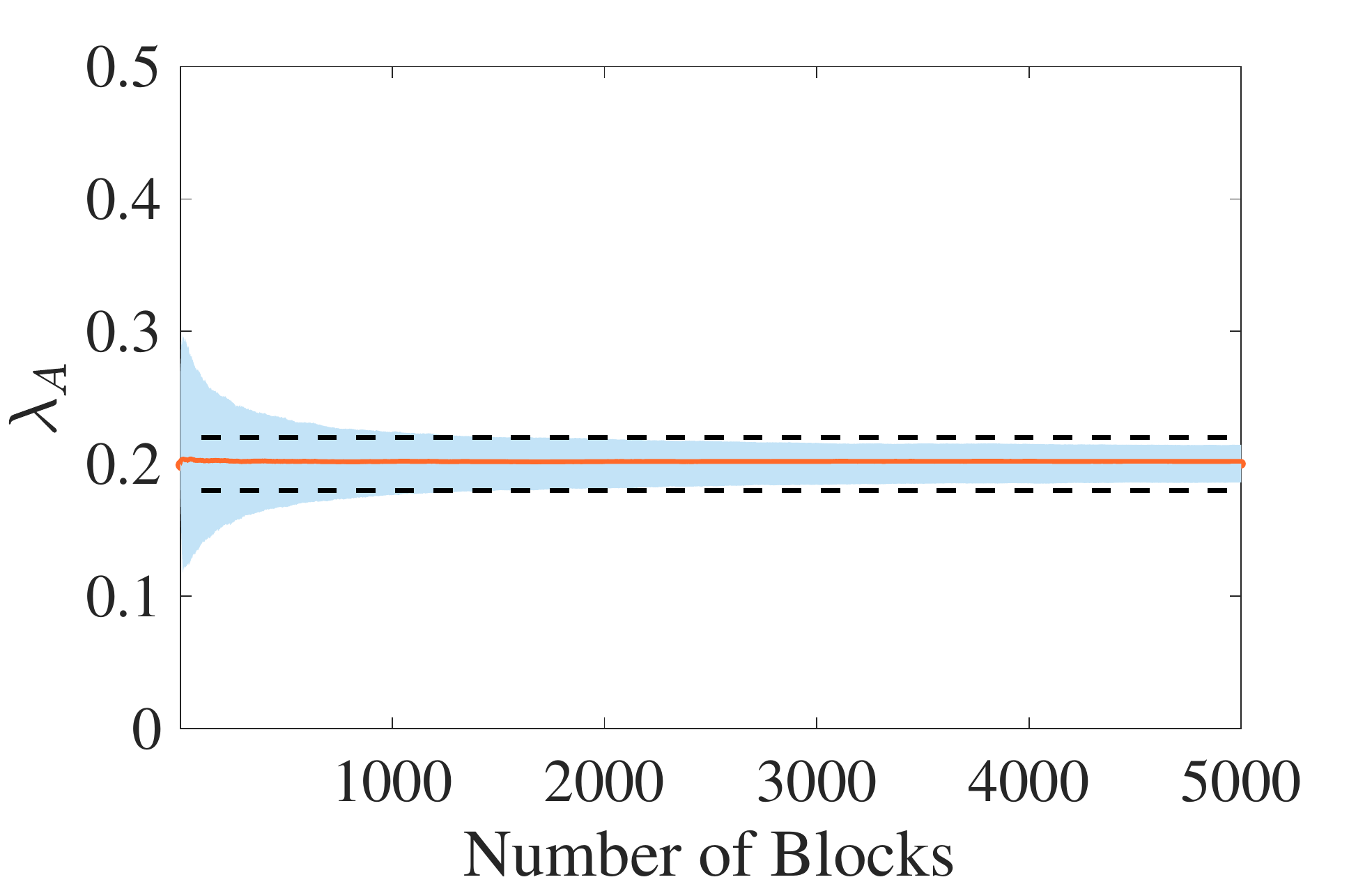}\label{fig:cpossim}}
	\vspace{-3mm}
	\caption{Evolution of $\lambda_A$ along with the number $n$ of blocks under $a = 0.2$,  $w=0.01$ and $v=0.1$.}
	\label{fig:syssim}
	\vspace{-4mm}
\end{figure*}

Theorem~\ref{thm:C-PoS-robust} states that C-PoS is more likely to achieve an $(\varepsilon,\delta)$-fairness than ML-PoS because $\frac{w^2(1/n+w+v)}{(w+v)^2P}$ is notably smaller than $1/n+w$ due to the inflation reward $v$ and $P$ shards in each epoch. In particular, to achieve robust fairness for C-PoS, we can increase inflation reward $v$ and shard size $P$, and meanwhile reduce proposer reward $w$. Note also that Theorem~\ref{thm:C-PoS-robust} degenerates to Theorem~\ref{thm:ML-PoS-robust} if no inflation reward is provided and there is one shard in each epoch, \ie~$v = 0$ and $P=1$. 
\section{Experimental Evaluation}
\label{sec:eval}
In this section, we evaluate the fairness for PoW and three PoS protocols (\ie~ML-PoS, SL-PoS and C-PoS) with both real system experiments and numerical simulations. 
\subsection{Experimental Setup}
In the real system deployment, we select Geth client (v1.9.11) \cite{Gethcode}, Qtum core (v0.19.0.1) \cite{Qtumcode} and NXT client (v1.12.2) \cite{Nxtcode} as representatives of PoW, ML-PoS and SL-PoS mechanisms, respectively. Note that C-PoS is proposed by Ethereum 2.0 \cite{ETH20} which is still under development, due to which we cannot evaluate with real system experiments. All experiments are conducted on Amazon AWS EC2 Services. Specifically, PoW experiments are deployed on M5.4xlarge instances each with a 16-core Intel Xeon Platinum 8175M CPU and 32GB RAM. Since PoS protocols are computational insensitive, ML-PoS and SL-PoS experiments are conducted on M5.large instances each with a 2-core Intel Xeon Platinum 8175M CPU and 4GB RAM. In each test case, we implement a two-miner network and each miner is deployed on an individual EC2 instance. We repeat the experiments $10$ times for PoW and $500$ times for PoS, and report the statistical results. 

Numerical simulations are also carried out to supplement the experiments especially when real system evaluations are limited in computational resource or mining time. In particular, we validate the fairness for C-PoS completely based on numerical simulations due to the lack of the real system. We repeat the simulations $10{,}000$ times and report the statistical results. 

For the evaluations of robust fairness, we set $\varepsilon = 0.1$ and $\delta = 10\%$ by default. That is, with a probability of at least $90\%$, the return on investment of a miner in a random outcome is in the range of $[0.9,1.1]$ of the average over all miners. For convenience, we refer to the range $[(1-\varepsilon)a,(1+\varepsilon)a]$ as \textit{fair area} and the range $[0,(1-\varepsilon)a)\cup((1+\varepsilon)a,1]$ as \textit{unfair area}. We measure the likelihood of $\lambda_{A}$ locating in the fair area to reveal the robust fairness, \ie~$(\varepsilon,\delta)$-fairness is achieved if such a likelihood is no less than $1-\delta$.
\vspace{-0.5mm}
\subsection{Fairness Results}
\begin{figure*}[!h]
	\centering 
	\includegraphics[width=0.9\textwidth]{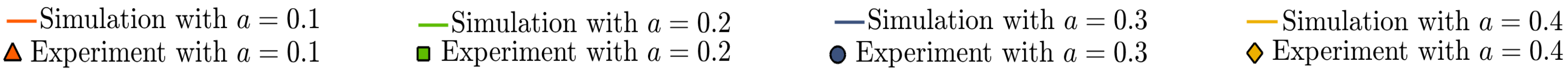}\vspace{-4mm}\\
	\subfloat[PoW]{\includegraphics[width=0.25\textwidth]{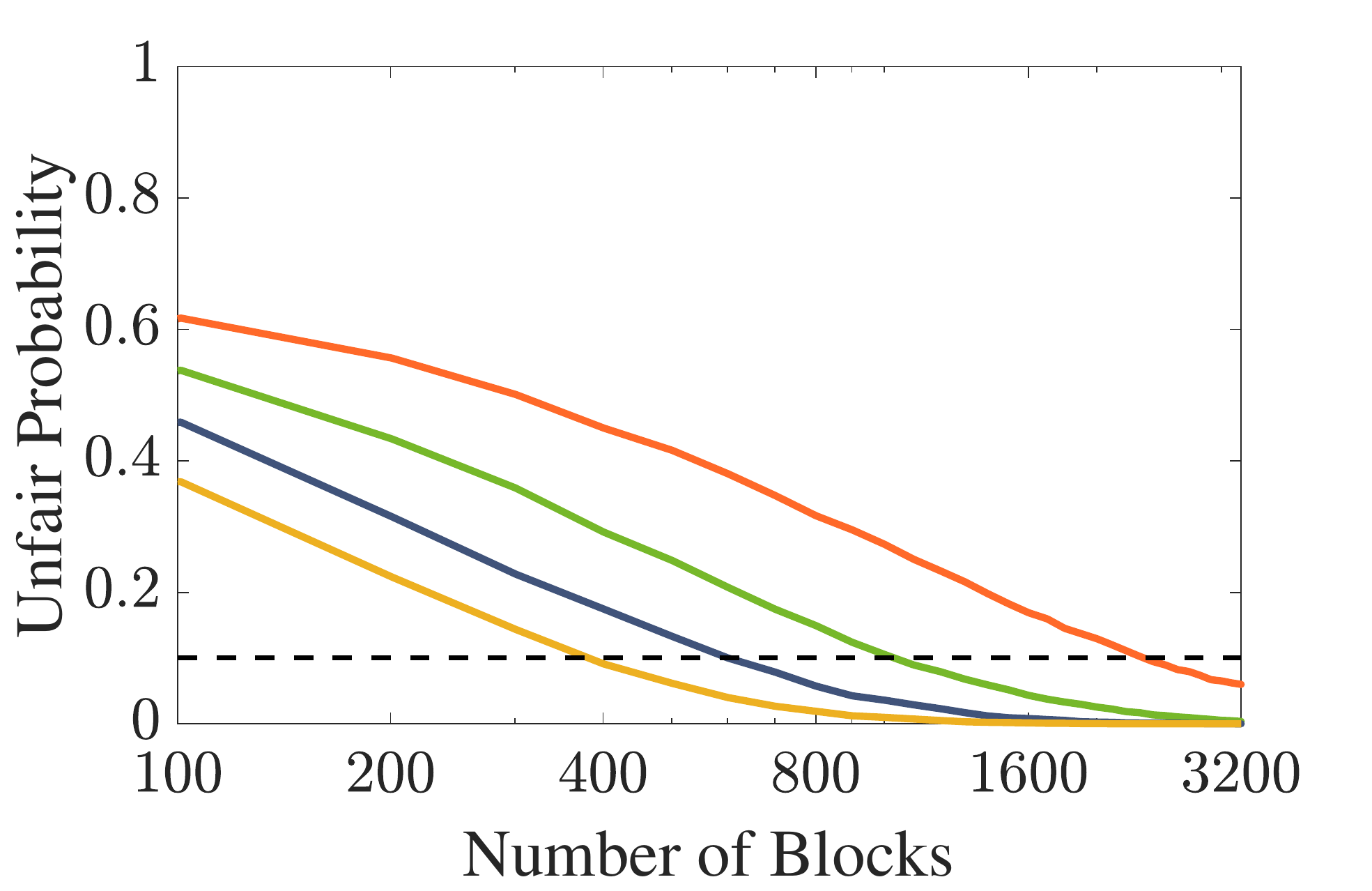}\label{fig:powdifset}}\hfil
	\subfloat[ML-PoS]{\includegraphics[width=0.25\textwidth]{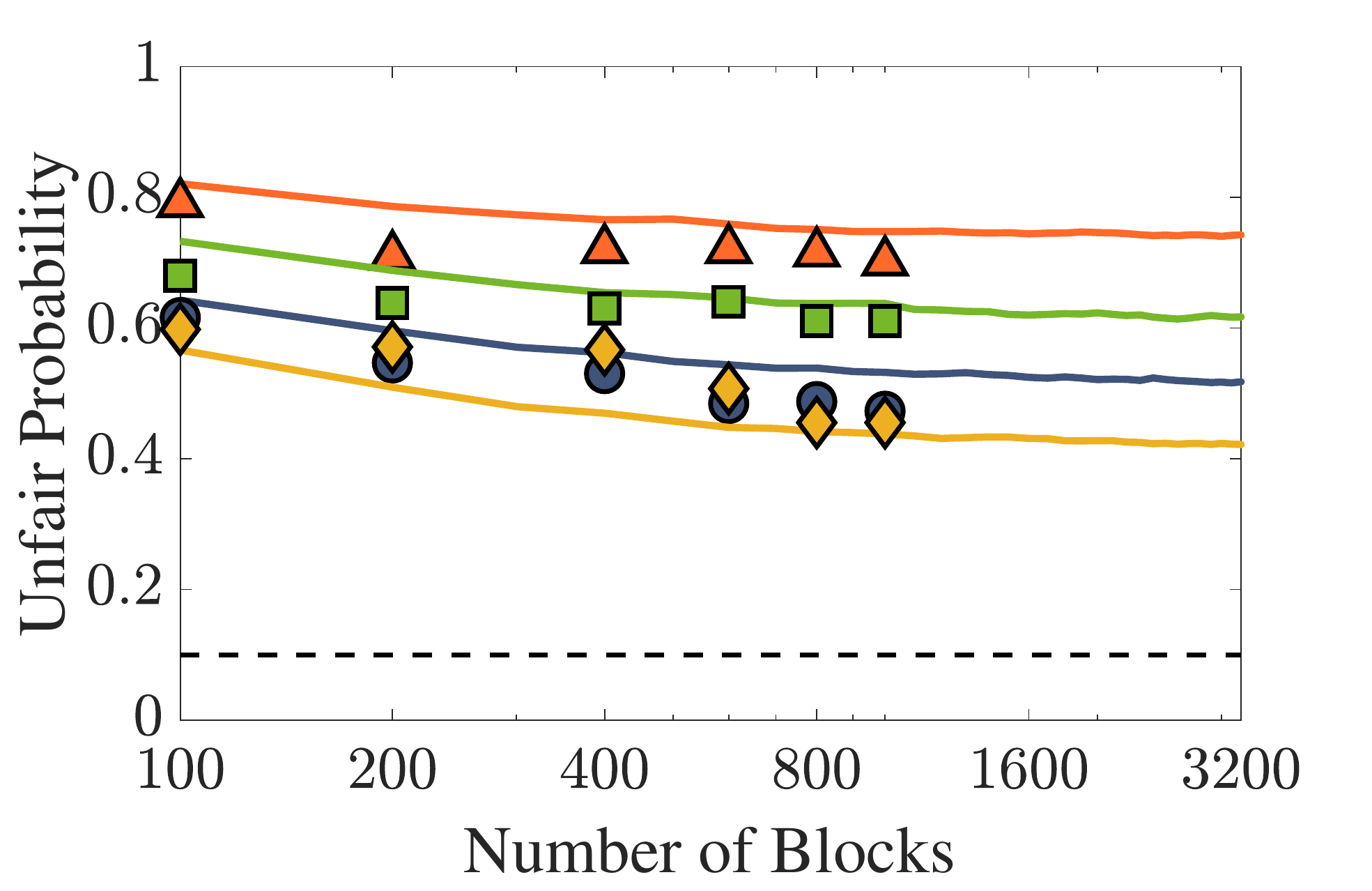}\label{fig:mlposwealth}}\hfil
	\subfloat[SL-PoS]{\includegraphics[width=0.25\textwidth]{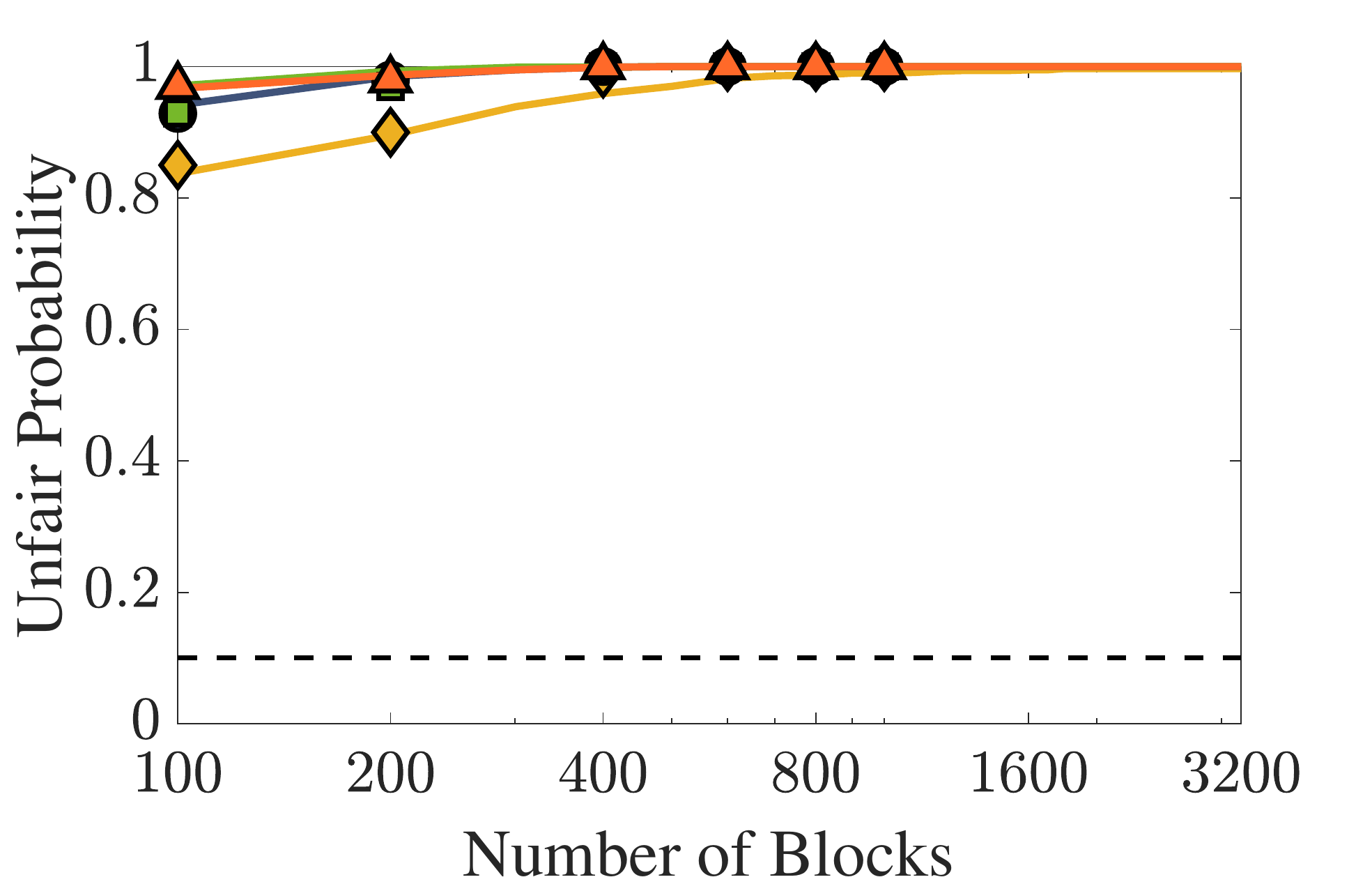}\label{fig:sl-posdifwealth}}\hfil
	\subfloat[C-PoS]{\includegraphics[width=0.25\textwidth]{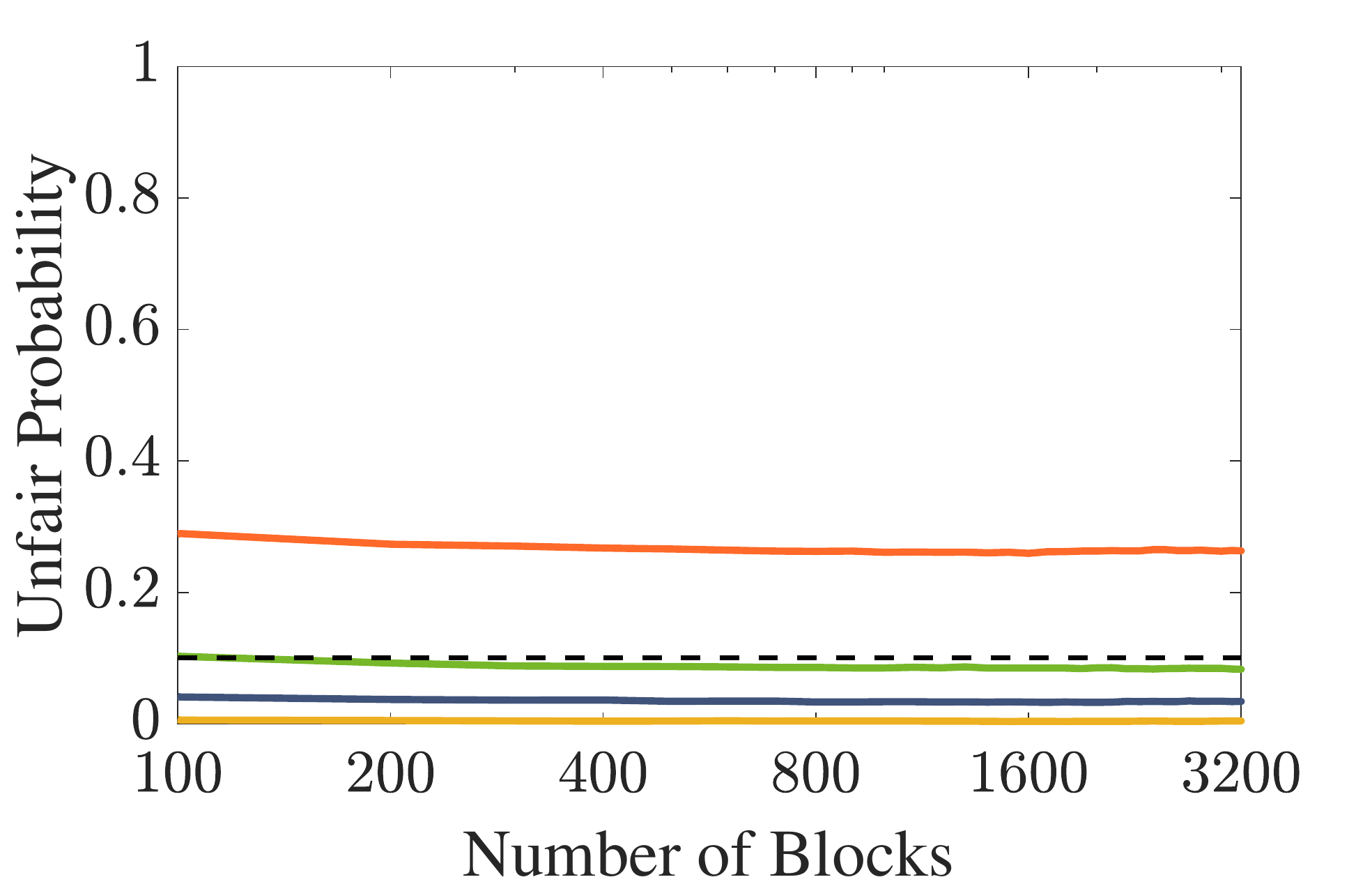}\label{fig:cposdifwealth}}
	\vspace{-3mm}
	\caption{Unfair probabilities for PoW, ML-PoS, SL-PoS and C-PoS under $w=0.01$, $v=0.1$ and different settings of $a$.}
	\label{robustwelath}
	\vspace{-4mm}
\end{figure*}

\figurename~\ref{fig:syssim} illustrates the evolution of $\lambda_A$ with the number of blocks $n$ that miners compete, which can capture both expectational fairness and robust fairness. Initially, miner $A$ controls $a=20\%$ of the total mining power for all the tested cases. For ML-PoS and SL-PoS, the reward of each block is set as $w=0.01$ (which is normalized against the total initial stakes). For C-PoS, the proposer and inflation rewards of each mining epoch are set as $w=0.01$ and $v=0.1$ where the inflation reward is $10$ times of the proposer reward like the settings in Etherum 2.0. In \figurename~\ref{fig:syssim}, the area between two black dash lines is the fair area, \ie~$[(1-\varepsilon)a,(1+\varepsilon)a]$. Meanwhile, the orange line represents the sample average of all the simulation results, and the bottom and top edges of the blue area indicate the 5th and 95th percentiles, respectively. That is, expectational fairness is achieved if the orange line matches the horizontal line with a value of $0.2$, while robust fairness is achieved if the blue area locates within the fair area. Similarly, the green bar shows the result of the real system experiments, where the central mark indicates the mean, and the bottom and top edges indicate the 5th and 95th percentiles. 




\figurename~\ref{fig:powsyssim} reports the evolution of $\lambda_A$ for PoW. As can be seen, the average of both the experiment and the simulation is very close to $0.2$, which confirms that PoW achieves expectational fairness as stated in Theorem~\ref{thm:PoW-fair}. The result also shows that $\lambda_A$ of both the system and the simulation gradually converges to the fair area as $n$ increases. Specifically, when $n < 100$, there are a noticeable fraction of cases locating in the unfair area. On the other hand, when $n > 1{,}000$, almost all cases locate in the fair area, which confirms Theorem~\ref{thm:PoW-robust}. In the Ethereum mining protocol, the average time interval between two blocks is around $15$ seconds and thus the total reward obtained by a miner will concentrate to the expectation with a high probability after around $4.1$ hours.

\figurename~\ref{fig:mlpossyssim} shows the result for ML-PoS. Again, the average of $\lambda_{A}$ of both the experiment and the simulation is close to $0.2$, which demonstrates the expectational fairness of ML-PoS (\ie~Theorem~\ref{thm:ML-PoS-fair}). However, unlike PoW, we find that there are a large number of cases locating in the unfair area even though miners have competed for a great number of blocks, \eg~$n=5{,}000$ or equivalently $8.6$ days in our experiment. That is, for certain settings of block reward $w$, \eg~$w=0.01$, miner $A$ is likely to feel unfair no matter how long $A$ participates the mining game, which is not robustly fair. Recall that Theorem~\ref{thm:ML-PoS-robust} shows that if $1/n+w\leq \frac{2a^2\varepsilon^2}{\ln({2}/{\delta})}$, ML-PoS can achieve an $(\varepsilon,\delta)$-fairness, which is consistent with our observation. However, in our evaluation, $\frac{2a^2\varepsilon^2}{\ln({2}/{\delta})}\approx 0.00027\ll w=0.01$, which does not satisfy the requirement of robust fairness for ML-PoS. 

\figurename~\ref{fig:slpossyssim} plots the evolution of $\lambda_A$ for SL-PoS. We observe that different from the three other protocols, $\lambda_A$ continuously decreases as the mining game proceeds. Specifically, the average of $\lambda_A$ for the first block is $0.2/(2\times 0.8)=12.5\%$ (Section~\ref{subsec:SL-PoS}), and it decreases to $2\%$ quickly after $10^4$ new blocks are proposed (around $9.2$ days on NXT). Furthermore, $\lambda_A$ even approaches $0$ when $n$ reaches $10^{5}$ (around $92$ days on NXT). This indicates that when $a=0.2$ and $w=0.01$, after the mining game operates for a period of time, the poor miner (\ie~miner $A$) will completely lose her stake share and the rich miner (\ie~miner $B$) will monopolize the future generation of new blocks. This phenomenon exhibits a clear unfairness. These observations confirm our analysis in Theorem~\ref{thm:SL-PoS-fair} and Theorem~\ref{thm:SL-PoS-robust}.

\figurename~\ref{fig:cpossim} gives the result for C-PoS where there are $P=32$ shards. We observe that the average of $\lambda_{A}$, as expected by Theorem~\ref{thm:C-PoS-fair}, is almost $0.2$. Compared with ML-PoS, the distribution of $\lambda_{A}$ for C-PoS has a significantly narrower range. In fact, C-PoS is superior to ML-PoS, leveraging the advantages of sharding and inflation reward to reduce the uncertainty of reward allocation, as analyzed in Theorem~\ref{thm:C-PoS-robust}. Therefore, we conclude that C-PoS is the best PoS protocol among the three tested in terms of fairness.


\subsection{Study on Expectational Fairness for SL-PoS}
In \figurename~\ref{fig:syssim}, we show that among the four examined protocols, only SL-PoS does not ensure expectational fairness. In this section, we further study some factors, including initial stake allocation $a$ and block reward $w$, that may affect the expectational fairness for SL-PoS. The result is given in \figurename~\ref{fig:slposfair1}, where markers and lines represent the experimental and simulation results, respectively.

\begin{figure}[!t]
	\centering  
	 \vspace{-3mm}
	\subfloat[Different stake allocation $a$]{\includegraphics[width=0.25\textwidth]{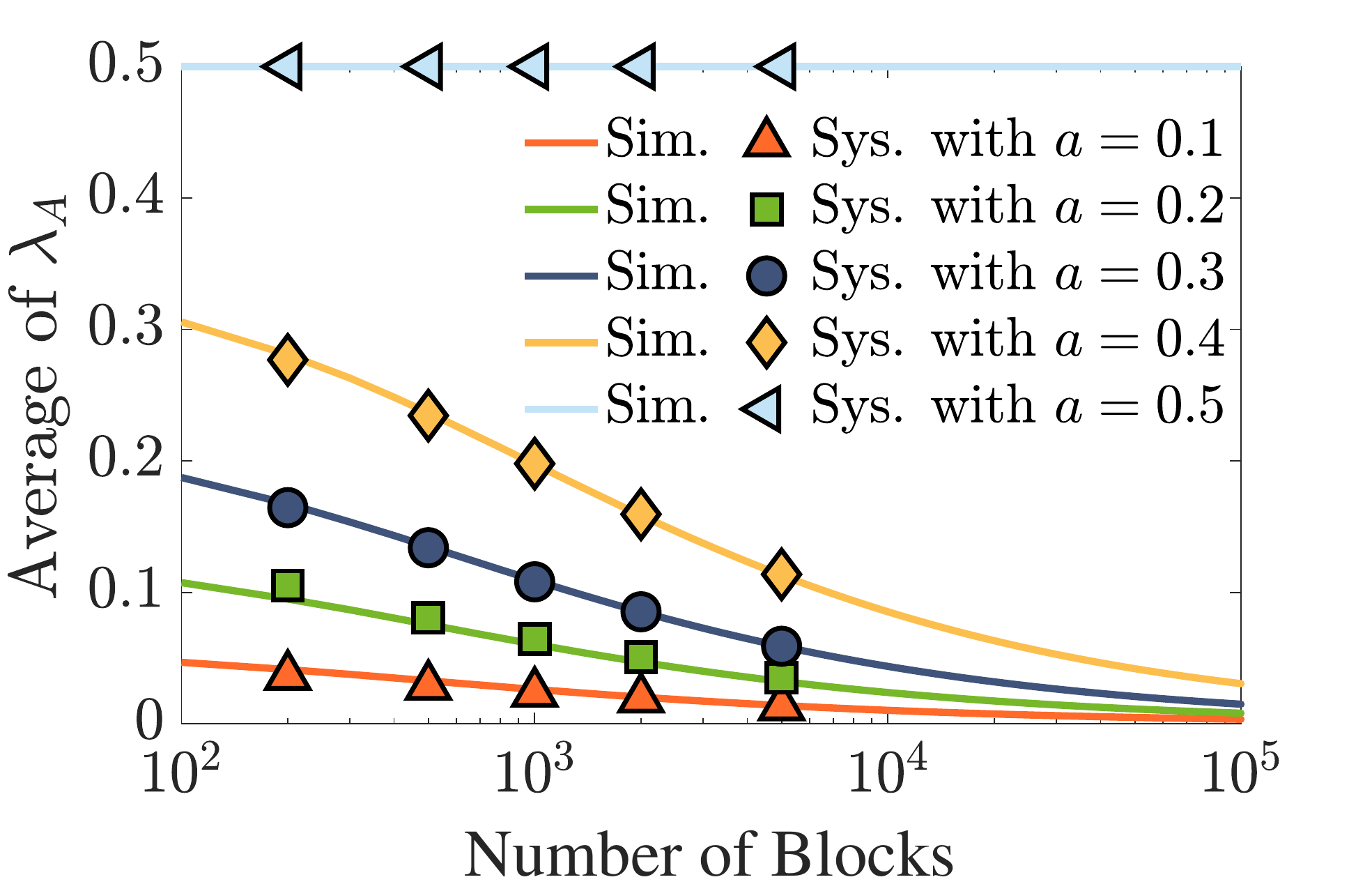}\label{fig:slposdifini}}\hspace*{-0.1in}
	\subfloat[Different block reward $w$]{\includegraphics[width=0.25\textwidth]{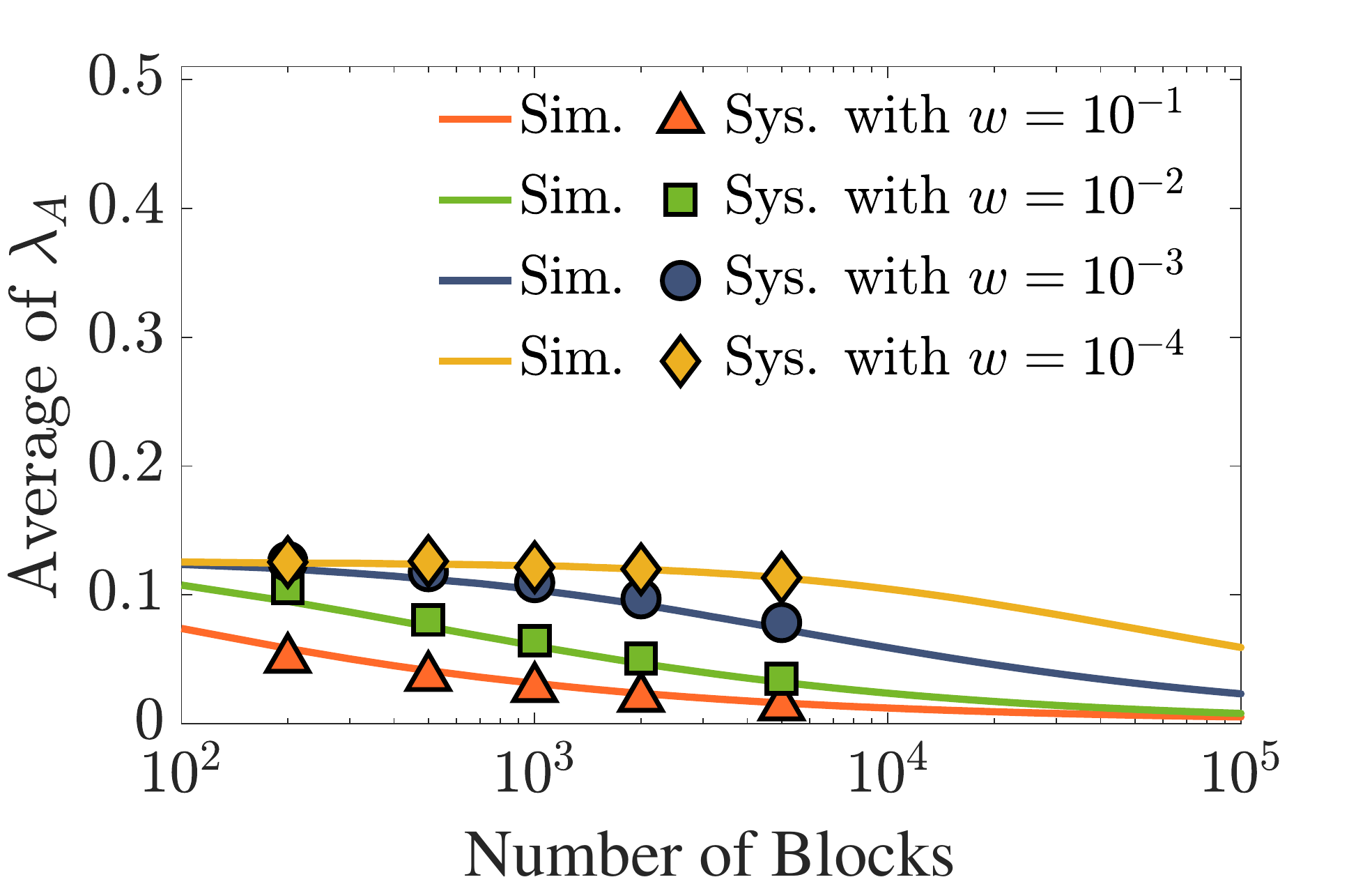}\label{fig:slposdifblock}}
	\vspace{-3mm}
	\caption{Average of reward proportion $\lambda_A$ for SL-PoS.}\label{fig:slposfair1}
	\vspace{-4mm}
\end{figure}

\subsubsection{Impact of Initial Stake Allocation}
\figurename~\ref{fig:slposdifini} reports the average of the reward proportion ${\lambda_A}$ under $w=0.01$ and different staking power $a$ of miner $A$ with values in $\{0.1,0.2,0.3,0.4,0.5\}$. Interestingly, the average reward proportion of $A$ reduces to $0$ for all settings except the one at $a=0.5$. This indicates that no matter how much staking power (once $a<0.5$) miner $A$ controls initially, the miner will own zero staking power finally. We also observe that the average of ${\lambda_A}$ increases along with $a$, which implies that it takes longer time to completely lose competitiveness for the miner with a larger initial staking power.

\subsubsection{Impact of Block Reward}
\figurename~\ref{fig:slposdifblock} shows the average of the reward proportion ${\lambda_A}$ under $a=0.2$ and different block reward $w$ with values in $\{10^{-4},10^{-3},10^{-2},10^{-1}\}$. We observe that the average of $\lambda_A$ decreases along with both (i) the number $n$ of blocks and (ii) block reward $w$. The reason is that if the block reward $w$ is smaller, the fraction of staking power controlled by $A$ reduces slower, so as to the average of $\lambda_{A}$.

\subsection{Study on Robust Fairness}
We further explore the effects of initial resource allocation $a$, block proposer reward $w$ and inflation reward $v$ on robust fairness. We measure the likelihood of $\lambda_{A}$ locating in the unfair area, referred to as \textit{unfair probability}, \ie~$\Pr[\lambda_A<(1-\varepsilon)a \vee \lambda_A>(1+\varepsilon)a]$. This metric reveals that $(\varepsilon,\delta)$-fairness is achieved only if the unfair probability is no more than $\delta$. \figurename~\ref{robustwelath} and \figurename~\ref{robustblock} show the results, where the experimental and simulation results are indicated by markers and lines, respectively. Note that we only show experimental results for ML-PoS and SL-PoS, as repeating PoW experiments $10$ times is insufficient to calculate unfair probability and C-PoS is under the development of Ethereum 2.0.

\begin{figure*}[!ht]
\centering  
\includegraphics[width=0.75\textwidth]{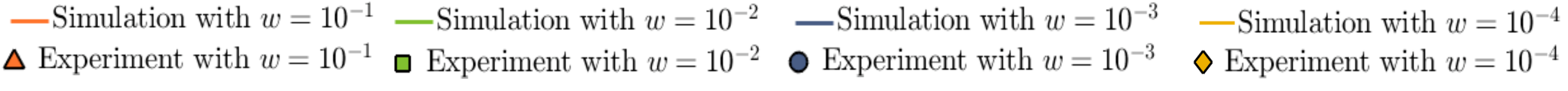}\hfill
\includegraphics[width=0.16\textwidth]{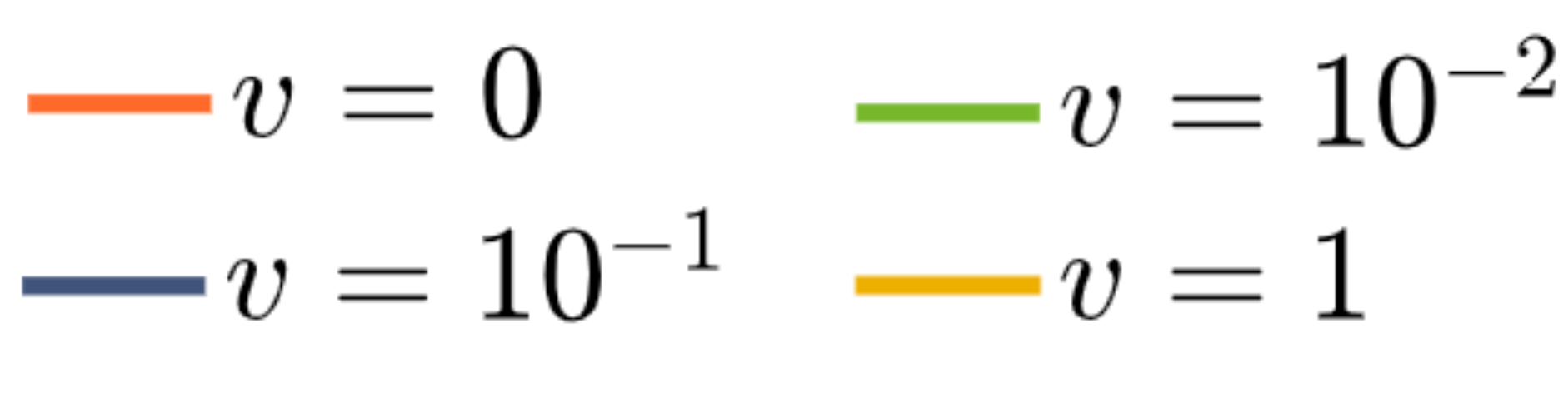}\hspace{0.3in}\vspace{-4mm}\\
\subfloat[ML-PoS]{\includegraphics[width=0.25\textwidth]{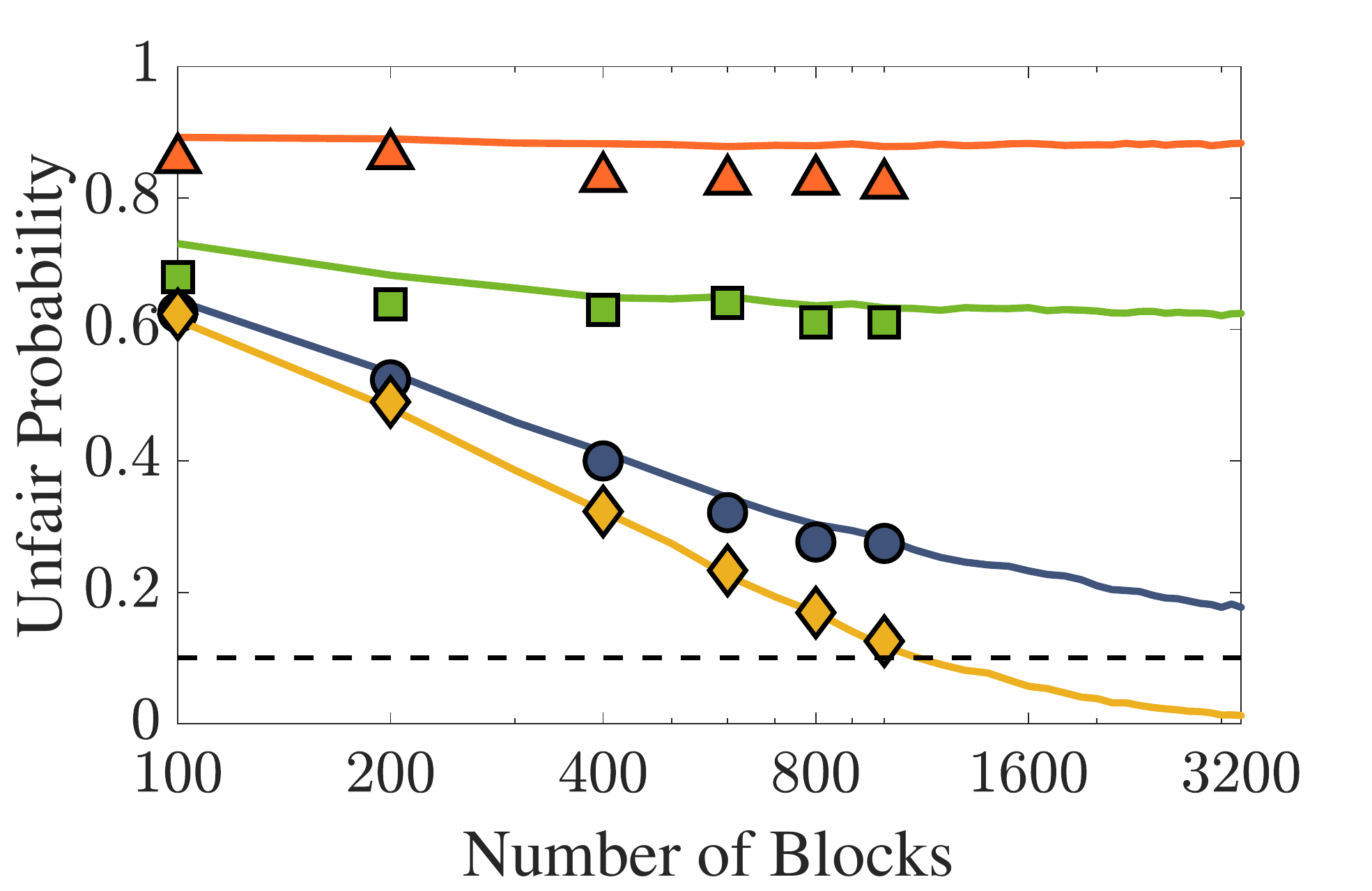}\label{fig:mlposblock}}\hfil
\subfloat[SL-PoS]{\includegraphics[width=0.25\textwidth]{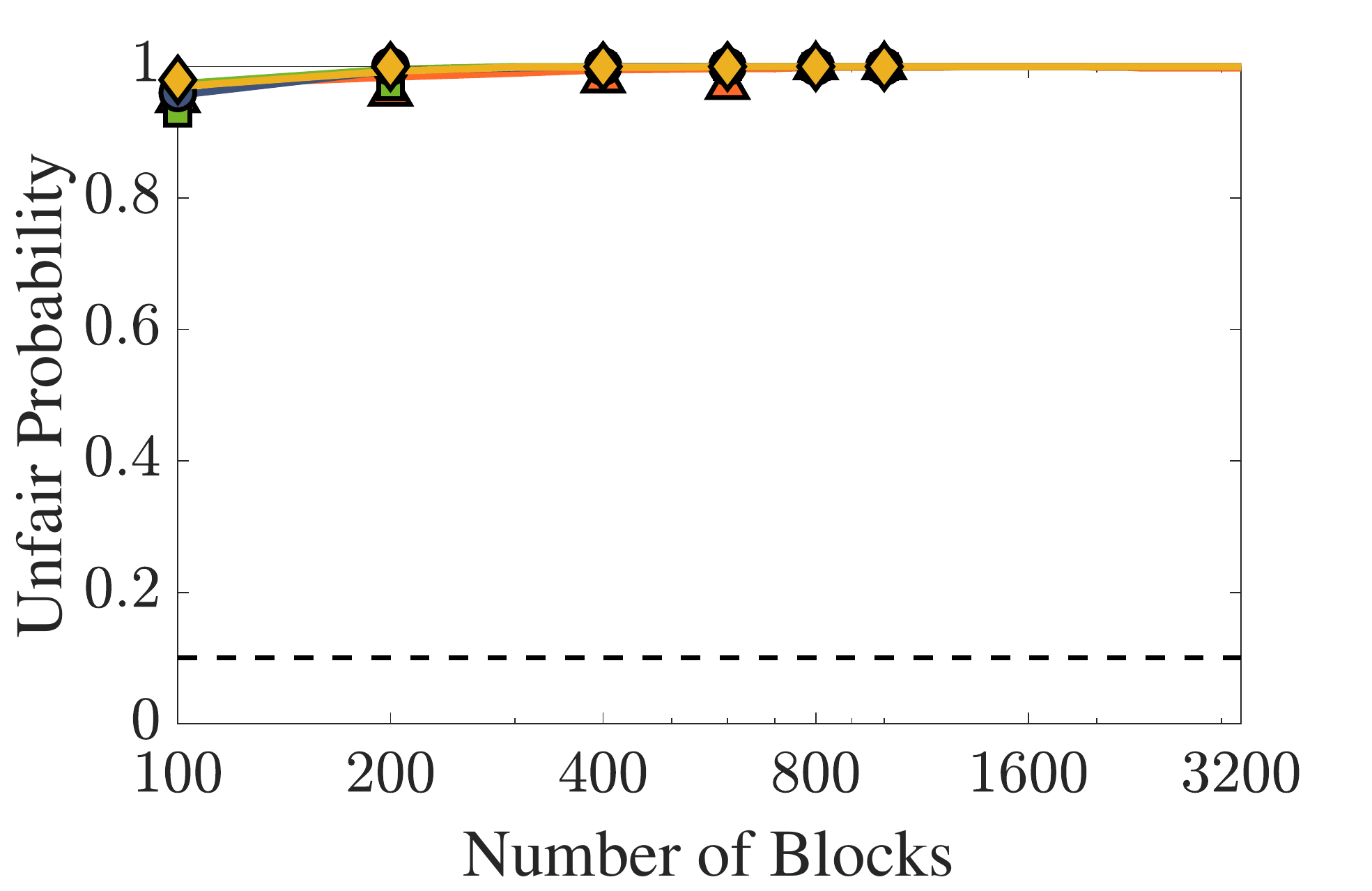}\label{fig:slposdifbloc}}\hfil
\subfloat[C-PoS with $v=0.1$ and different $w$]{\includegraphics[width=0.25\textwidth]{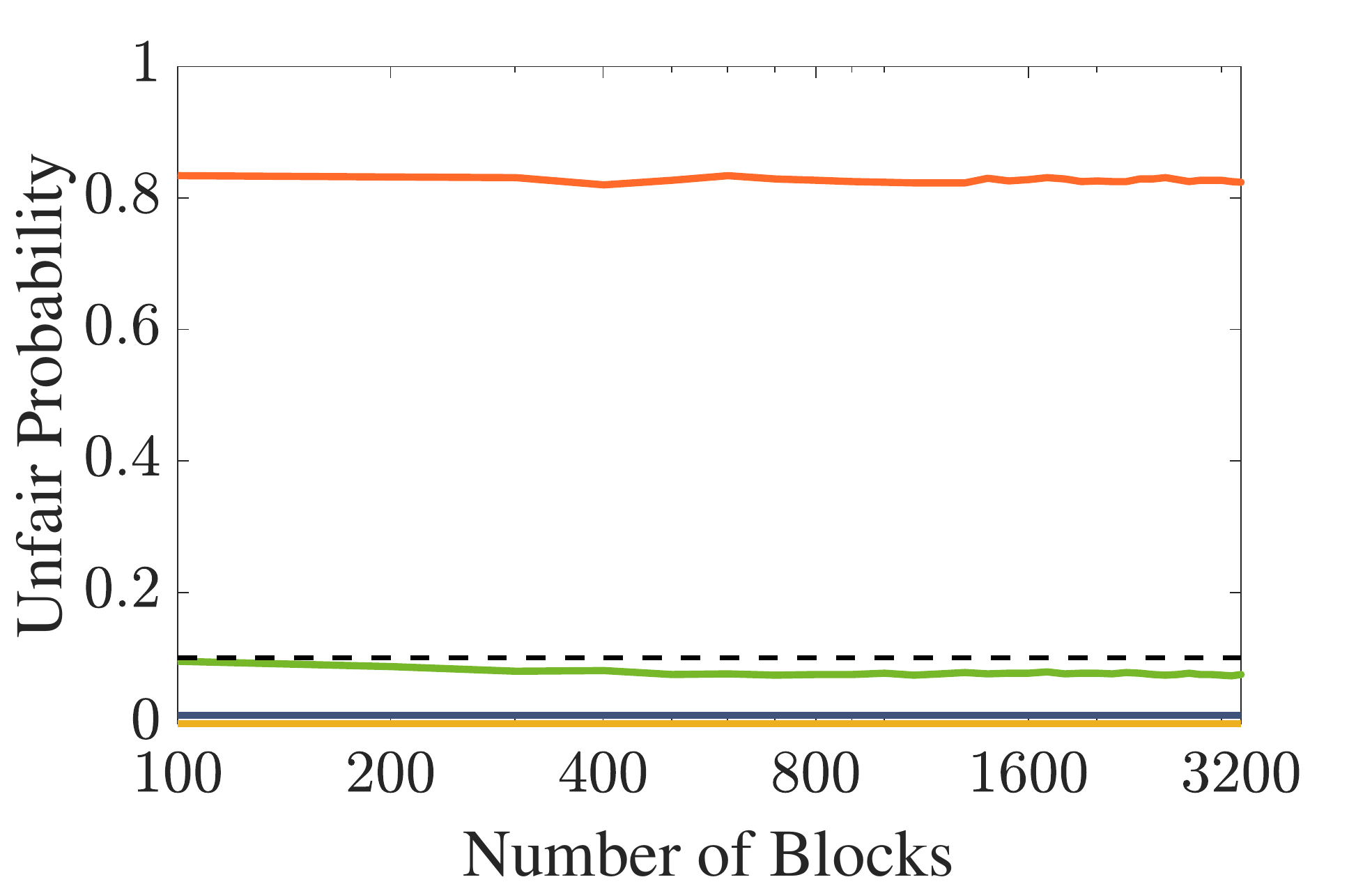}\label{fig:cposdifblock}}\hfil
\subfloat[C-PoS with $w=0.01$ and different $v$]{\includegraphics[width=0.25\textwidth]{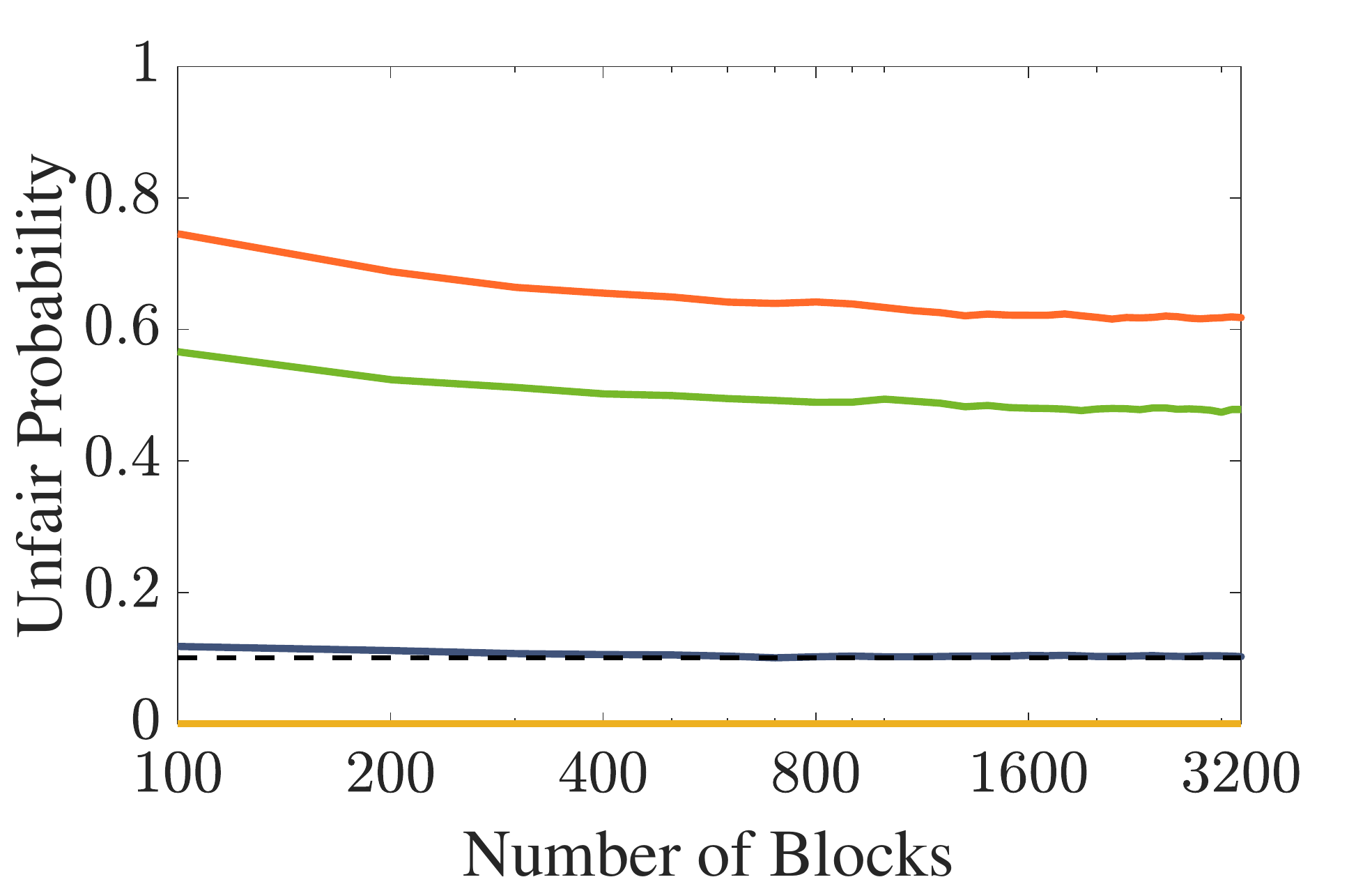}\label{fig:cposinf}}
\vspace{-3mm}
\caption{Unfair probabilities for PoW, ML-PoS, SL-PoS and C-PoS under $a=0.2$ and different settings of $w$ and $v$.}
\label{robustblock}
\vspace{-4mm}
\end{figure*}

\subsubsection{Impact of Initial Resource Allocation}
\figurename~\ref{fig:powdifset} shows the unfair probability for PoW under various $a$, where the black dash line represents the probability threshold $\delta=0.1$ for achieving the $(\varepsilon,\delta)$-fairness. We observe that the unfair probability under all settings reduces along with the number $n$ in general, which again implies that an $(\varepsilon,\delta)$-fairness is always achievable by PoW when $n$ is sufficiently large. We also find that PoW achieves an $(\varepsilon,\delta)$-fairness faster for larger value of $a$. As an example, the number $n$ of blocks required for preserving robust fairness for a medium miner with $a=0.3$ is less than $800$ while that for a tiny miner with $a=0.1$ is more than $2{,}000$. This indicates that if the majority of miners in a PoW-based blockchain system control a small fraction of the total hash power (which is the usual case), it requires a relatively long time period to preserve an $(\varepsilon,\delta)$-fairness for every miner.

\figurename~\ref{fig:mlposwealth} plots the unfair probability for ML-PoS. We observe that at the beginning, the unfair probability decreases along with the number $n$ of blocks. However, when $n$ reaches some thresholds, \eg~$n=1{,}000$, the unfair probability converges to certain constants that are likely to be larger than the threshold of $\delta$. This indicates that a long period of competing time does not suffice the requirement of $(\varepsilon,\delta)$-fairness. Meanwhile, we observe that the unfair probability is smaller for a miner controlling more stakes $a$. This implies that rich miners are more likely to feel fair than poor miners.

\figurename~\ref{fig:sl-posdifwealth} compares the unfair probability for SL-PoS. We observe that the unfair probability of miner $A$ initiates over a wide range, \ie~a tiny miner with $a=0.1$ starts with an unfair probability of $98\%$ and the unfair probability for a large miner with $a=0.4$ is $82\%$. However, in all settings, the unfair probability gradually increases with a larger block number and eventually converges to $100\%$. Moreover, the result shows that the unfair probability of rich miners deteriorates slower than that of poor miners. As an example, the unfair probability of a tiny miner with $a =0.1$ converges to $100\%$ when $n$ reaches $200$ but a large miner with $a=0.4$ turns to $100\%$ unfair when $n$ exceeds $800$.

\figurename~\ref{fig:cposdifwealth} reports the simulation result for C-PoS. In general, C-PoS has similar trends with ML-PoS but the unfair probability of the former is much lower and converges more rapidly. Specifically, for a medium miner with $a=0.3$, the unfair probability of ML-PoS is as high as ${\sim}50\%$ but that of C-PoS is less than $10\%$ such that the $(\varepsilon,\delta)$-fairness is achieved for C-PoS but not for ML-PoS.

\subsubsection{Impact of Block Reward}
We further study how a block reward affects the robust fairness for ML-PoS, SL-PoS and C-PoS. (Note that PoW is insensitive to block reward $w$.) \figurename~\ref{fig:mlposblock} shows the unfair probability for ML-PoS under $a=0.2$ and different block reward settings. The setting of a large block with $w=0.1$, where the block reward is close to the initial stake circulation, suffers from a severe fairness issue. In particular, the unfair probability of miner $A$ is at least $85\%$. The reason is that the mining outcome of the first few blocks will significantly change the distribution of staking power among miners, which in turn heavily affects the mining game subsequently. As a contrary, in the setting of tiny block with $w=10^{-4}$, achieving an $(\varepsilon,\delta)$-fairness for miner $A$ is easy. The reason is that if block reward is much smaller compared with the initial stakes, the earned stakes from the mining game have a negligible contribution to the staking power. In other words, the probability for a miner to propose a new block remains roughly unchanged when time evolves. Therefore, to improve the fairness for ML-PoS, we may set a small reward for each block or release more stakes at the very beginning of the mining game. 

\figurename~\ref{fig:slposdifbloc} illustrates the unfair probability for SL-PoS, which is relatively insensitive to the block reward $w$. In particular, the unfair probabilities for SL-PoS initiate around $95\%$ and then increase to $100\%$ after $200$ blocks for all the settings of $w$ tested. \figurename~\ref{fig:cposdifblock} reports the result for C-PoS when $w$ varies. Again, C-PoS outperforms ML-PoS significantly, though they have similar trends under different settings of $w$. Moreover, we also compare the unfair probability under different settings of the inflation reward $v$ in \figurename~\ref{fig:cposinf}. As can be seen, the unfair probability decreases along with inflation reward $v$. Specifically, the unfair probability under $v=0$ is ${\sim}70\%$ whereas this value sharply reduces to ${\sim}50\%$ under $v=0.01$ and even to ${\sim}10\%$ under $v=0.1$. In intuition, the inflation reward distributed to every miner is completely proportional to their staking power and hence the income uncertainty from the proposer reward is significantly diluted. 
\section{Discussion}\label{sec:discussion}
The previous analysis is based on a two-miner scenario, and this section first discusses how to extend our analysis to a general setting with multiple miners. In addition, the previous analysis reveals that SL-PoS can accomplish neither expectational fairness nor robust fairness while ML-PoS cannot easily achieve robust fairness. This section also discusses some remedies and improvements for these two protocols. Finally, we will discuss more incentives and some practicalities that can benefit from our fairness analysis.
\subsection{Extension to Multiple Miners}\label{dis:exntend-multiple}
\hym{Our analysis above is based on a simple two-miner scenario. In the following, we discuss the fairness in a general setting with multiple miners for the four aforementioned blockchain incentives. 
	
In PoW, ML-PoS and C-PoS, using similar arguments in Section~\ref{sec:preliminaries}, it is trivial to get that the probability of a miner proposing a block is proportional to her computation/staking power regardless of the resource distribution of the other miners. Consequently, according to our analysis in Section~\ref{sec:expectationfariness} and Section~\ref{sec:robustfair}, one can easily verify that the results of both expectational fairness and robust fairness still hold for the three incentives, by considering $B$ as a set of miners. 

On the other hand, such an extension for SL-PoS is non-trivial. In fact, we show that the probability of proposing a block is not proportional to the miner’s staking power in general (unless all miners possess an identical amount of staking power).
\begin{lemma}
\label{lem:C-PoS-multi-player}
In SL-PoS with multiple miners, there exists a miner such that the probability of the miner proposing a block is not proportional to her staking power unless all miners possess an identical amount of staking power. 
\end{lemma}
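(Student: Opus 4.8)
The plan is to reduce the multi-miner SL-PoS selection to a tractable integral and then to show that the \emph{per-stake} winning probability is strictly monotone in a miner's stake; this immediately rules out proportionality unless all stakes coincide. First I would set up the continuous approximation already used for the two-miner case in~\eqref{prob:SL-PoS}: rescale each hash so that $h_i=\operatorname{Hash}(\mathtt{pk}_i,\dotsc)/2^{256}$ is, up to negligible error, an independent uniform random variable on $[0,1]$, and let $s_1,\dots,s_m$ denote the stakes with total $S=\sum_j s_j$. Miner $i$ proposes the block exactly when $h_i/s_i=\min_j h_j/s_j$, and since ties occur with probability zero, the winning probabilities $p_i$ sum to one. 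Conditioning on $h_i$ and using independence, I would record the closed form
\[
\frac{p_i}{s_i}=\int_0^{1/s_i}\ \prod_{j\neq i}\,(1-s_j y)_+\,\diff y =: R_i ,
\]
where $(\cdot)_+=\max(\cdot,0)$. Because $\sum_i p_i=1=\sum_i s_i/S$, proportionality ($p_i\propto s_i$) is equivalent to $R_i=1/S$ for \emph{every} $i$; hence it suffices to exhibit two miners whose $R_i$ differ.

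The crux is a direct comparison of $R_a$ and $R_b$ for any two miners with $s_a<s_b$. I would factor out the product common to both integrands, $C(y):=\prod_{j\neq a,b}(1-s_j y)_+$, so that the $i=a$ integrand is $C(y)(1-s_b y)_+$ and the $i=b$ integrand is $C(y)(1-s_a y)_+$. The key bookkeeping step is to align the two domains: since $(1-s_b y)_+$ already vanishes for $y>1/s_b$, the integral $R_a$ is unchanged if its upper limit is cut from $1/s_a$ down to $1/s_b$; and on $[0,1/s_b]$ both truncations are inactive, because $s_a y\le s_b y\le 1$ there. This reduces both quantities to integrals over the \emph{same} interval $[0,1/s_b]$ with no positive-part operators, whence
\[
R_b-R_a=\int_0^{1/s_b} C(y)\bigl[(1-s_a y)-(1-s_b y)\bigr]\,\diff y =(s_b-s_a)\int_0^{1/s_b} C(y)\,y\,\diff y .
\]
Since $C(y)\ge 0$ with $C(0)=1$ and $C$ continuous, the final integral is strictly positive, so $R_b>R_a$ whenever $s_b>s_a$.

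Finally I would conclude by contraposition: if the stakes are not all identical, pick $a,b$ with $s_a<s_b$; the strict inequality $R_a<R_b$ contradicts $R_a=R_b=1/S$, so at least one of these two miners has $p_i\neq s_i/S$, which is exactly the asserted existence of a miner whose winning probability is not proportional to her stake. I expect the main obstacle to be the careful handling of the positive-part truncations together with the differing integration limits across miners---getting both $R_a$ and $R_b$ onto the common window $[0,1/s_b]$ is precisely what makes the difference collapse to the manifestly positive expression above, after which monotonicity is immediate. A minor point to dispatch is justifying the continuous-uniform approximation, which I would import verbatim from the two-miner derivation, where the discretization error is $O(2^{-256})$.
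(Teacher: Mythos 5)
Your proof is correct, and while it starts from the same place as the paper's --- the continuous-uniform approximation of the hashes and the identity $p_i = s_i\int_0^{1/s_i}\prod_{j\neq i}(1-s_j y)_+\,\diff y$ --- the decisive step is genuinely different. The paper singles out the miner with the \emph{minimum} stake $S^1$ and upper-bounds her winning probability by symmetrizing the competitors: for each fixed $z$, AM--GM gives $\prod_{j\geq 2}(1-S^j z)\le\bigl(1-\tfrac{1-S^1}{m-1}z\bigr)^{m-1}$, and evaluating the resulting integral in closed form yields $\Pr[\text{miner }1\text{ wins}]\le\frac{m-1}{m}\cdot\frac{S^1}{1-S^1}\le S^1$, with equality only when all stakes coincide. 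You instead compare two arbitrary miners $a,b$ with $s_a<s_b$ and show the per-stake rate $R_i=p_i/s_i$ is strictly increasing in $s_i$, via the clean cancellation $R_b-R_a=(s_b-s_a)\int_0^{1/s_b}C(y)\,y\,\diff y>0$; your handling of the positive-part truncations and the domain alignment to $[0,1/s_b]$ is sound, and the conclusion follows because $\sum_i p_i=1$ forces $R_i\equiv 1/S$ under proportionality. Your route avoids both the optimization step and the explicit integral evaluation, and it buys a strictly stronger structural fact --- the return per unit stake is monotone in stake, which directly exhibits the rich-get-richer effect for \emph{every} pair of unequal miners --- whereas the paper's computation buys an explicit quantitative bound on how badly the smallest miner is disadvantaged. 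Either argument fully establishes the lemma.
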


\eat{
\begin{proof}[Proof of Lemma~\ref{lem:C-PoS-multi-player}]Specifically, suppose that there are $m$ miners. Denote by $S^i$ the fraction of stakes possessed by miner $i$ such that $\sum_{k=1}^{m}S^i=1$ and by $T^i$ the waiting time of miner $i$'s candidate block becoming valid. Without loss of generality, we assume that $S^1\leq S^2\leq\dotsb\leq S^m$. As discussed in Section~\ref{subsec:SL-PoS}, $T^i=\mathtt{basetime} \cdot X^i/S^i$, where $X^i$ is a random hash value uniformly distributed in the range of $[0,2^{256}-1]$ such that $\frac{X^i}{2^{256}}$ follows the continuous uniform distribution $U(0,1)$ asymptotically. Let $Z^i=\frac{X^i}{2^{256}\cdot S^i}$ such that $Z^i\sim U(0,\frac{1}{S^i})$. Then, given $Z^i=z$, we have
\begin{equation*}
\Pr\Big[\bigwedge\nolimits_{j\neq i} (Z^j\geq z)\Big]=\prod\nolimits_{j\neq i}(1-S^jz)^{+},
\end{equation*}
where $(1-S_jz)^{+}=\max\{1-S_jz,0\}$. Therefore, the probability of miner $i$ winning the next block is
\begin{align*}
	&\Pr\Big[\bigwedge\nolimits_{j\neq i} (T^j\geq T^i)\Big]
	=\Pr\Big[\bigwedge\nolimits_{j\neq i} (Z^j\geq Z^i)\Big]\\
	&=\int_{0}^{\frac{1}{S^i}}S^i\prod\nolimits_{j\neq i}(1-S^jz)^{+}\diff z
	=\int_{0}^{\frac{1}{S^m}}S^i\prod\nolimits_{j\neq i}(1-S^jz)\diff z.
\end{align*}
We consider miner $1$ with the minimum staking power. We have
\begin{align*}
	&\Pr\Big[\bigwedge\nolimits_{j\neq i} (T^j\geq T^1)\Big]
	=\int_{0}^{\frac{1}{S^m}}S^1\prod\nolimits_{j=2}^{m}(1-S^jz)\diff z\\
	&\leq \int_{0}^{\frac{m-1}{1-S^1}}S^1\Big(1-\frac{1-S^1}{m-1}\cdot z\Big)^{m-1}\diff z
	=\frac{m-1}{m}\cdot \frac{S^1}{1-S^1}
	\leq S^1,
\end{align*}
where the first inequality is because the maximum is achieved at $S^2=\dotsb=S^m=(1-S^1)/m$ and the second inequality is from the fact that $\frac{1}{1-S^1}\leq \frac{m}{m-1}$ since $S^1\leq 1/m$. Moreover, in the above inequality, ``$=$'' holds if and only if $S^1=S^2=\dotsb=S^m=1/m$, and when $S^1<1/m$, such a probability is less than $S^1$. 
\end{proof}
}
By Lemma~\ref{lem:C-PoS-multi-player}, analogous to our analysis in Section~\ref{subsec:SL-PoS-EF} and Section~\ref{subsec:SL-PoS-RF}, we can get that neither expectational fairness nor robust fairness is accomplished by SL-PoS when there are multiple miners.

\begin{table}[!t]
	\centering
	\caption{Results for Multi-Miner Game.}\label{tab:multi}
	\vspace{-0.1in}
	\begin{tabular}{llccccccc}
		\toprule
		& \multicolumn{1}{c}{No.\ of Miners}   & PoW      & ML-PoS     & SL-PoS   & C-PoS    \\
		\midrule
		\multirow{5}{*}{\rotatebox{90}{Avg.\ of $\lambda_{A}$}}
		& 2 Miners &  0.20     &  0.20     &   0.00 &  0.20  \\
		& 3 Miners          & 0.20  & 0.20  &0.00 & 0.20  \\
		& 4 Miners  & 0.20 & 0.20   &0.00& 0.20        \\ 
		& 5 Miners            & 0.20 & 0.20 & 0.20  & 0.20          \\ 
		& 10 Miners          & 0.20 & 0.20  & 0.98 & 0.20         \\ 
		\midrule
		\multirow{5}{*}{\rotatebox{90}{Unfair Prob.}} 
		& 2 Miners            &  0     &  0.14     & 1   & 0.08   \\
		& 3 Miners             & 0  & 0.13   & 1& 0.09  \\
		& 4 Miners          & 0 & 0.14  & 1   &  0.08      \\ 
		& 5 Miners         & 0 & 0.15   & 0.98 & 0.08         \\ 
		& 10 Miners      & 0 & 0.13 & 1   &  0.08       \\ 
		\midrule
		\multirow{5}{*}{\rotatebox{90}{Cvg.\ Time}} 
		& 2 Miners       &  1055     &  Never    & Never   &  110  \\
		& 3 Miners   & 1016  & Never  & Never   & 104  \\
		& 4 Miners      & 1087 & Never & Never   & 115      \\ 
		& 5 Miners    & 1010 & Never & Never   & 122          \\ 
		& 10 Miners      & 1030 & Never & Never   & 137        \\
		\bottomrule
	\end{tabular}
	\vspace{-0.1in}
\end{table}
Table~\ref{tab:multi} shows the empirical results via simulations. We compare the average value of $\lambda_A$ and the corresponding unfair probability. We also record the number of blocks when the $(\varepsilon,\delta)$-fairness is achieved. In our simulation, miner $A$ controls $20\%$ of the initial mining resource and the other miners equally share the $80\%$ remaining mining resource. By default, $w=0.01$ and $v=0.1$. We can see that for PoW, ML-PoS and C-PoS, the results of multi-miner are similar to those of two-miner. This indicates that our analysis on these three protocols also holds when more than two players are included in the mining game. However, for SL-PoS, we observe that the average of ${\lambda_A}$ remains $0$ when $2$--$4$ miners are considered in the mining game, and it suddenly increases to $0.2$ and $0.98$ under 5 and 10 miners, respectively. The result implies that miner $A$'s reward depends not only on her staking power but also on the staking distribution of $A$'s competitors. Specifically, only the biggest miner will monopolize the network with a high probability and the rest miners will finally lose their wealth, which extends the conclusion on two-miner to multi-miner. For example, when $5$ miners compete in the network, all miners including miner $A$ have identical $20\%$ stakes initially. Thus, as discussed above, the average income of miner $A$ should be $20\%$ of the total reward. However, when more than $5$ miners are included, miner $A$ controls more stakes than the others and thus the average income of miner $A$ improves dramatically. 
}
\subsection{Treatment for SL-PoS}\label{dis:slpos}
SL-PoS will finally turn to monopolization due to the unfair winning probability for each block. One potential treatment is to adjust the $\mathtt{time}$ function so that the winning probability becomes asymptotically proportional to the staking power. 

Specifically, assume that $X$ and $Y$ denote the hash values of candidate blocks issued by $A$ and $B$, which are uniformly distributed in the range of $[0,2^{256}-1]$ such that $\frac{X}{2^{256}}$ and $\frac{Y}{2^{256}}$ follow uniform distribution $U(0,1)$ asymptotically. To ensure both expectational and robust fairness, the $\mathtt{time}$ function $T(\cdot)$ is required to satisfy $\Pr[T(X;S_A)<T(Y;S_B)]=S_A/(S_A+S_B)$, where $S_A$ and $S_B$ are the staking power of $A$ and $B$. Motivated by the PoW incentive model in Section~\ref{subsec:PoW}, if $T_A=T(X;S_A)$ and $T_B=T(X;S_B)$ follow negative exponential distributions with rate parameters $S_A$ and $S_B$, \ie~the probability density function of $T_A$ is $f(t_A; S_A)=S_A\e^{-S_At_A}$ for $t_A\geq 0$. To achieve this goal, we make use of inverse transform sampling. In particular, consider the cumulative distribution function of $T_A$ as $F(t_A;S_A)=1-\e^{-S_At_A}$. Now, let $F(T_A;S_A)=X$ such that $T_A=\frac{-\ln(1-X)}{S_A}$. We know that if $X$ is uniformly distributed in the range of $[0,1]$, $T_A$ is a random variable following the negative exponential distribution with a rate parameter $S_A$. Applying the same approach for $T_B$, we can obtain that $\Pr[T_A<T_B]=\frac{S_A}{S_A+S_B}$. To conclude, our treatment for SL-PoS is to set $\mathtt{time} = \mathtt{basetime}\cdot \frac{-\ln(1 - {\operatorname{Hash}(\mathtt{pk},...)}/{2^{256}})}{\mathtt{stake}}$.
\hym{We conduct both experiments and simulations on NXT to evaluate our treatment, referred to as FSL-PoS (\ie~fair-single-lottery PoS). \figurename~\ref{fig:improved_nxt_simreal} shows the evolution of $\lambda_A$ along with the number $n$ of blocks that miners compete. In contrast to the original SL-PoS in \figurename~\ref{fig:slpossyssim}, the average income of miner $A$ in FSL-PoS is $20\%$ of the total reward, which confirms the expectational fairness of our treatment. However, we observe that quite a few cases of $\lambda_A$ locate outside the fair area, which indicates that robust fairness is not achieved yet. In what follows, we further discuss how to improve robust fairness.}
\begin{figure}[!t]
	\centering  
	\includegraphics[width=0.33\textwidth]{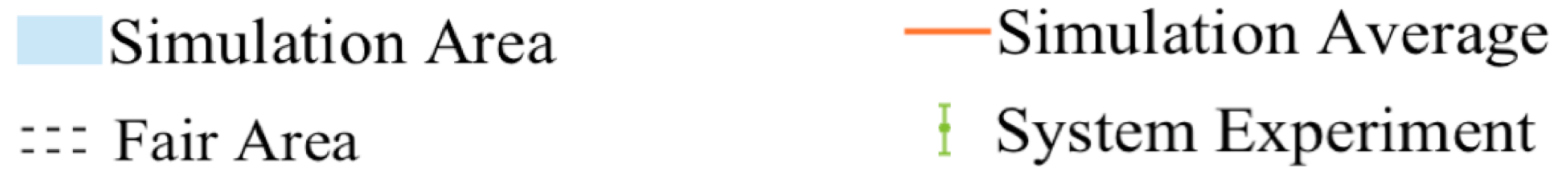}\vspace{-4mm}\\
	\subfloat[FSL-PoS]{\includegraphics[width=0.25\textwidth]{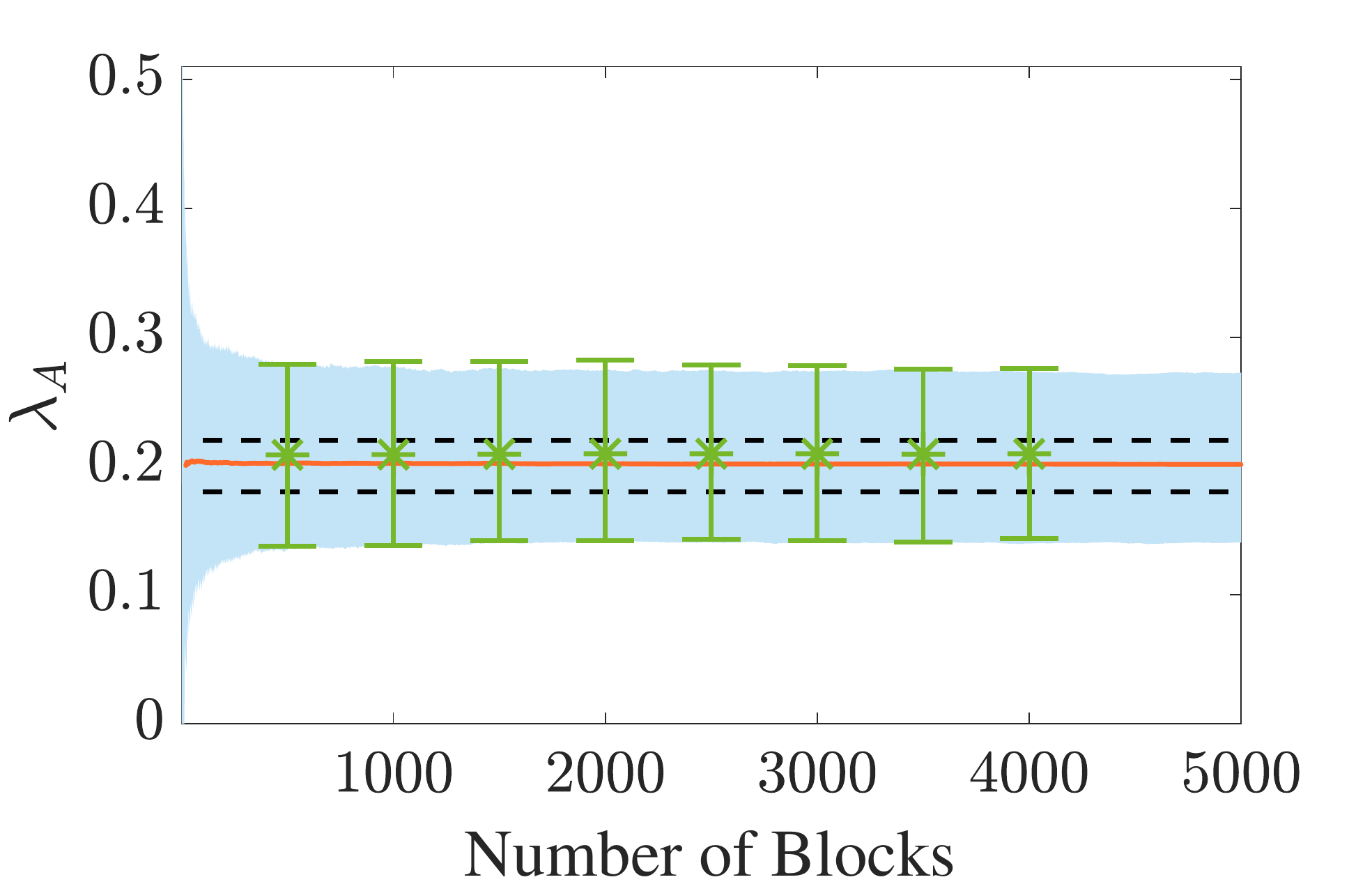}\label{fig:improved_nxt_simreal}}\hspace*{-0.1in}
	\subfloat[FSL-PoS Reward Withholding]{\includegraphics[width=0.25\textwidth]{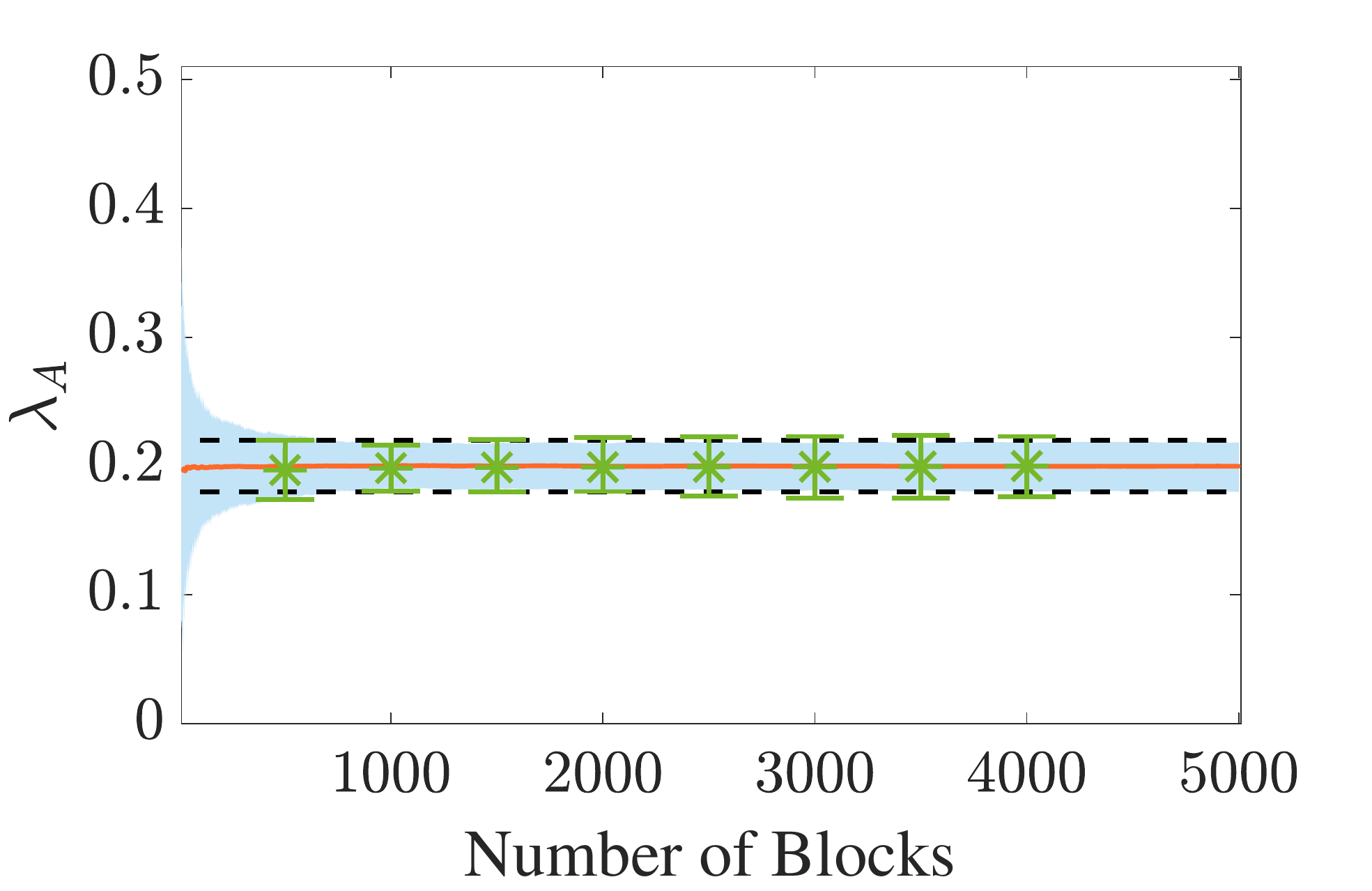}\label{fig:improved_2_nxt}}
	\vspace{-3mm}
	\caption{Evolution of $\lambda_A$ along with the number $n$ of blocks under $a = 0.2$,  $w=0.01$.}\label{fig:improve}
	\vspace{-4mm}
\end{figure}
%
%
\subsection{Improvement for Robust Fairness}\label{dis:impro-robust}
\eat{In this section, we discuss several solutions to improve robust fairness based on our analysis in previous sections. From section \ref{sec:robustfair}, we reveal that two settings significantly affect the fairness: the initial allocation of stakes ($a$ and $b$) and the number of stakes issued in each block ($w$). Therefore, the countermeasures centers around improving initial stakes and/or reducing stake reward in each block. }

\vspace{-1mm}
\spara{Reward Withholding}
\label{dis:reward-withhold}
One potential solution that may improve robust fairness is to withhold block rewards that will take effect periodically. As an example, the block reward will be issued to the proposer immediately but only take effect at the next $1{,}000$-th block, \eg~the reward is issued at the $1{,}024$-th block but takes effect at the $2{,}000$-th block. The incentives obtained by a miner during two successive effective time points should concentrate to the expectation due to the law of large numbers. As a consequence, fairness will be improved. \hym{We perform experiments and simulations that apply reward withholding to FSL-PoS, where the reward takes effect at the block of the next thousand. \figurename~\ref{fig:improved_2_nxt} reports the evolution of $\lambda_A$ when time evolves. Clearly, almost all cases locate in the fair area, which demonstrates the effectiveness of our improvement.}

\spara{Less Block Reward}
As analyzed theoretically and empirically, a small block reward $w$ is in favor of fairness for ML-PoS. However, this action should be carefully performed since less subsidy will reduce miners' motivation. 
In addition, increasing the initial circulation of stakes can indirectly reduce the relative block reward $w$, which will eventually benefit the fairness of ML-PoS. Initial Coin Offering (ICO) and airdrop are two common ways to allocate initial stake circulation. ICO allows a project team to sell a fraction of stakes to investors before the mining competition. Airdrop allocates cryptocurrencies towards community users for free during the early stage of mining. 


\eat{
\subsection{PoS mining without risk}
\label{sec:riskfree}
In section~\ref{sec:preliminaries}, our model assumes both PoS miners hold all stakes during the mining competition. Some consider that the PoS miners may sell their mining rewards to avoid market risk. We show there are many solutions to hedge the fluctuation of the value of stakes. Thus, the price fluctuation does not affect significantly in PoS mining. Here we introduce a solution using cryptocurrency future contract.

The futures contract refers to an agreement to buy or sell something at a predetermined price at a specified time in the future. Traders can use a small amount of money (margin) to hold a large amount of long or short position. 

To avoid the price fluctuation affecting the  income of miners, the miner can hold two opposite positions at the same time. A spot position (available for immediate withdraw to a mining wallet) works as mining resource and a future short position (win money when price fall) can avoid the price fluctuation. In this way, if the spot price downfall, the miner loses money from the spot position but wins the same amount of money from the short future position side. \footnote{Note that the price of the future contract fluctuates around the price of commodity spot.}

\begin{figure}[htbp]
\includegraphics[width=82mm]{figs/riskfreem.eps}
\caption{Riskfree PoS mining strategy using future contract} \label{fig:riskfree} 
\end{figure}

Figure~\ref{fig:riskfree} refers to the trading operations to hedge capital risk. The upper figure shows how to achieve risk-free mining that maintains a fix asset value in fiat currency. Assume the miner holds $100$ USD initially and the price of coin $A$ is $p(t)$. When the miner open a position on coin $A$, he buys $\frac{100}{p(t)}$ coin $A$ spot and sells $\frac{100}{p(t)}$ future position simultaneously. After that, the miner uses the spot position on mining and avoids exposing to up and down on the price of $A$. 

Figure \ref{fig:riskfree} shows that PoS miners can also hedge the risk of mining subsidy. We assume at $t_1$ and $t_2$, the miner separately receives the coin $A$ mining rewards. To hedge his risk, he directly sells the same amount of coin $A$ future on derivative exchange at $t_1$ and $t_2$. After that, the subsidy coin can work as the mining resource but the miner does not necessary to take the risk of the price fluctuation. 
}

\eat{
\subsection{PoW miners update equipment}
\label{sec:powhoard}
Recall the assumption we use in PoW mining model: the PoW miners simply collect and hoard the cryptocurrencies they produce during the mining period. One might consider another assumption: the PoW miner can acquire more profit if he sells the mining subsidy and purchases more mining rigs. Due to new mining equipment improves his hash power, he may obtain more money in the future. 

\eat{
There is no denying that mine-and-update can possibly become a better strategy for PoW miners. There are still many obstacles to achieve mine-and-update strategy ideally in reality. Thus, we consider this strategy should not become mainstream among PoW miners.

Firstly, the expected benefit of miners in mine-and-update is less than mine-and-hold. If both miner $A$ and $B$ use mine-and-update, the expected benefit in this scenario should be the benefit in mine-and-hold subtracts their rig updating cost. From the point of view of miners, mine-and-update may not be rational. 

Secondly, even the miners use mine-and-update strategy; it may not be possible to update equipment in real-time. As an example,  if it takes $7$ days to update the mining equipment, more than $1{,}000$ new blocks are produced in Bitcoin during this period. Thus, in reality, mine-and-update does not affect our assumption in section~\ref{sec:powfair}.}

\begin{figure}[htbp] \centering
\includegraphics[width=82mm]{figs/minerwallet.png}
\caption{Daily Inward and Outward Bitcoin Amount from June-24 to October-24. Account: 3GyHeSgS4s8JyKWrNXmZysB1LaPwGH5KTp}
\label{fig:wallet} \end{figure}

From the historical transaction data, we also confirm that PoW miners do not intend to sell their cryptocurrency once they received. Figure \ref{fig:wallet} refers to the bitcoin cash flow of the largest individual miner in BTC.COM mining pool. The figure shows that, from July 2019 to October 2019, he received more than 90 incoming transactions ($1$ tx per day) but transferred out for only 8 times ($2$ tx per month) \cite{Balance}. His average withdrawal interval exceeds $2,{000}$ blocks, which shows the major miners do not reinvest in a short period of time. 
}

\subsection{Fairness of More Incentive Protocols}
\label{sec:more-protocols}

The metrics and insights of fairness that we learned from PoW and three PoS protocols can be applied to more incentive protocols. In the following, we discuss six more incentives. 


\spara{NEO and NEO Gas} NEO \cite{Neo} is a PoS based blockchain that adopts a decentralized Byzantine fault tolerance consensus algorithm among authenticated stakers. The stakers compete for rewards depending on the share of \textit{base asset} (\eg~NEO token) that they possess. Different from other PoS protocols, the rewards and transaction fees are paid for by a separate \textit{reward asset} (\eg NEO gas) that does not affect the future mining power.
Therefore, such a PoS incentive works as same as the conventional PoW protocol, which preserves both types of fairness in a long-term mining game. 

\spara{Algorand} Algorand \cite{gilad2017algorand} is a scalable blockchain adopting verifiable random functions and the Byzantine agreement. It just provides inflation rewards to the stakers who possess Algorand in wallet while no proposer reward is released in the mining process. As a result, the stakers will always obtain fair rewards without uncertainty. Despite its fairness, the incentive model has been questioned by criticisms, since consensus participants may lose motivation to maintain the ledger.

\spara{EOS} EOS \cite{EOS} is a delegated PoS protocol based on the practical Byzantine fault tolerance. It achieves consensus among a committee with $21$ elected delegates who propose blocks by turns. Every delegate proposes the same amount of blocks in a consensus round if she is active and honest. As for the incentive, each delegate receives an inflation reward proportional to her stakes (or votes) and a proposer reward which is a constant for everyone regardless of her stake share. Therefore, in general, neither expectational fairness nor robust fairness is achieved in EOS.

\spara{Wave and Vixify} \citet{begicheva2018fair} proposed a variant of PoS, called Wave, on the basis of NXT (\ie~SL-PoS), which improves the $\mathtt{time}$ function in NXT in a way similar to our treatment FSL-PoS. \citet{orlicki2020sequential} proposed Vixify by imitating the Nakamoto consensus in PoS leveraging verifiable random functions and verifiable delay functions. These two protocols ensure that a miner proposes a new block with a probability proportional to her stakes and only provide a proposer reward that will constitute future mining power.
Therefore, analogous to FSL-PoS or ML-PoS, both Wave and Vixify can achieve expectational fairness but do not ensure robust fairness.

\spara{Filecoin} Filecoin \cite{Filecoin} aims to build a storage network where clients upload and retrieve data in a decentralized way. The system utilizes a  Proof-of-Storage-and-Time protocol to ensure the retrievability of stored data. The incentive is based on the miners' contributions on both storage space and pledge stakes, which constitute mining power. Hence, our analysis of PoW and PoS protocols is useful for understanding the fairness of the Filecoin incentive.

\subsection{Practicality of Fairness Analysis}\label{subsec:security}
 \vspace{-1mm}
\spara{Protecting Data Reliability and Integrity}\label{dis:pract-effect}
Transaction processing and data provenance in permission-less blockchains rely on decentralized governance. A fair incentive is an essential component of public ledgers, since if a system is unfair by design, attackers or whale miners may easily accumulate their wealth during a mining game so that the network becomes centralized gradually. Monopoly miners can maliciously rollback transactions and tamper with data by concentrating mining power on launching a $51\%$ attack. Recently, transactions in Ethereum Classic were rollbacked due to the $51\%$ attack, resulting in $1.68$ million dollars of loss \cite{ETC_doublespend}. Therefore, as one of the FAT principles of responsible data science \cite{Getoor_RDS_2019}, fairness protects data reliability and integrity in practice.

\spara{Preventing Mining Pools}
\label{sec:poolstaking}
To reduce the uncertainty of reward, an effective strategy for miners is to join mining pools, which however encourages a centralized network and hence betrays the foundation of blockchain. Arguably, large mining pools are bad since they may concentrate power on launching severe attacks, \eg~$51\%$ attack. This issue can be well addressed by leveraging the concept of robust fairness. In particular, an incentive preserving robust fairness ensures that miners receive stable rewards, \ie~the random outcome of a miner's reward is concentrated to its initial investment with high probability. With such an incentive mechanism, miners will lose motivation to join mining pools.

\spara{Enhancing Security}
\label{dis:sec-persp}
As discussed above, improving (expectational and robust) fairness can prevent miners from monopolizing the network or joining mining pools, which decreases the risk of adversarial control of a blockchain. In addition, there are several malicious attacks directly targeting on incentives so as to obtain an unfair profit, such as selfish mining~\cite{kwon2017selfish,eyal2018majority,nayak2016stubborn}, block withholding~\cite{eyal2015miner,luu2015power} and bribery~\cite{shangaobribe,mccorry2018smart}. Our analysis provides insight into further study of the incentive-based attacks, especially in PoS protocols which are rarely explored due to technical challenges. This will eventually be useful for developing secure blockchains resistant to these attacks.

\section{Related work}\label{sec:relate}
\vspace{-1mm}
\spara{Incentive and Fairness}
To ensure data immutability and security of permission-less blockchains, a fair incentive mechanism is often required. To our knowledge, rare research work studied the fairness of blockchain incentives, though there have been some concerns raised by cryptocurrency communities. \citet{fanti2019compounding} introduced the concept of \textit{equitability} defined as the ratio of the incentive variance to the initial resource variance, which unfortunately cannot answer the fairness concern directly. \citet{rosu2019evolution} analyzed the rich-get-richer phenomenon for ML-PoS using martingale and Dirichlet distribution, and claimed that ML-PoS will not face fairness issue as the fraction of rewards obtained by a miner in expectation is equal to her initial resource share, \ie~expectational fairness in our context. However, we introduce a new concept of robust fairness that can better capture the uncertainty of a mining game, and show that ML-PoS may not achieve robust fairness. Moreover, these studies~\cite{fanti2019compounding,rosu2019evolution} focused on the classical ML-PoS protocol deployed on earlier PoS implementations such as Qtum~\cite{Qtum} and Blackcoin~\cite{Blackcoin}, whereas our analysis, in addition to ML-PoS, covers more state-of-the-art implementations including SL-PoS by NXT \cite{NXT} and C-PoS by Ethereum 2.0 \cite{ETH20}. \citet{pass2017fruitchains} designed a fair protocol in similar spirit of Nakamoto's PoW protocol. Different from their work that targeted at protocol design, we analyze the fairness of blockchain incentives for several popular protocols, including PoW, ML-PoS, SL-PoS and C-PoS. 

\eat{However, the equitatbility definition is not only a lack of intuition but also incomparable among different systems. We focus on a robust fairness that can better characterize the relation between resource allocation and reward distribution.} 

The incentives of permission-less blockchains have attracted broader interests from researchers. The attacks on blockchain incentives may result in resource accumulation and further increase the risk of transaction tampering. \citet{Kwon2019BitcoinVB} discussed the movement of miners when mining rigs are applicable on two PoW networks. \citet{tsabary2018gap} and \citet{carlsten2016instability} found that miners may periodically suspend their mining rigs if no block reward is provided by the Bitcoin protocol. Some work studied the attacks on incentives, including selfish mining~\cite{kwon2017selfish,feng2019selfish,eyal2018majority,sapirshtein2016optimal,nayak2016stubborn}, block withholding~\cite{eyal2015miner,luu2015power} and bribery~\cite{shangaobribe,mccorry2018smart}. Our work is from the perspective of fairness that complements these existing studies on blockchain incentives. 

\spara{Transaction Processing}
Designing a transaction processing pipeline with high performance under large scale while ensuring security has been a major research topic \cite{cohen2020reasoning}. \citet{zakhary2019atomic} proposed an atomic cross-chain commitment for permission-less ledgers which ensures an all-or-nothing atomicity property. \citet{herlihy2019cross} extended such an atomicity to the cross-chain deal which can be applied to more types of adversarial commerce. \citet{maiyya2019unifying} integrated fault tolerant replication into atomic commitment for cloud data management. \citet{tao2020sharding} adopted a dynamic sharding algorithm on smart contracts to avoid empty blocks and waste of energy. \citet{amiri2019caper} adopted a directed acyclic graph on permissioned blockchain which supports both confidential transactions and cross-application transactions. In addition, some benchmark evaluations studied the transaction throughput and network latency of various blockchain systems \cite{dinh2018untangling,dinh2017blockbench}, the performance of blockchain index structures \cite{yue2020analysis}, and the performance of memory intensive PoW hash functions \cite{feng2020evaluating}. The security of transaction processing relies on the decentralization of the resource, which is heavily affected by the fairness of incentives, \eg~a $51\%$ attack is likely to occur if the rich get richer. Our work evaluates the fairness of various commonly used incentives and provides insights into blockchain designs to ensure reliable data.

\vspace{-0.5mm}
\spara{Blockchain-as-a-Database}
Blockchain, as a distributed database, becomes popular for various applications.
vChain~\cite{wang2020vchain,xu2019vchain} and GEM$^2$~\cite{zhang2019gem} applied an authenticated data structure to blockchain to ensure query integrity. Merkle$^{inv}$ and Chameleon$^{inv}$~\cite{zhangauthenticated} further reduced the maintenance cost of data authentication on hybridstorage blockchains by leveraging cryptographic proof and chameleon commitment. ResilientDB~\cite{gupta2020resilientdb} utilized a network-topology-aware consensus algorithm to achieve both lower communication latency and network decentralization. FalconDB~\cite{peng2020falcondb} adopted database servers with verification interfaces accessible to clients and stored the digests for query/update authentications on a blockchain to enable efficient and secure collaboration. \citet{buchnik13fireledger} proposed FireLedger, a new communication frugal optimistic permissioned blockchain protocol, to improve throughput. \citet{abadi2020anylog} introduced AnyLog, a decentralized data sharing and publishing platform for IoT data. \citet{ruan2019fine} developed simple interfaces that support smart contracts based provenance information. \citet{qi2020bft} improved storage scalability for blockchain systems by integrating erasure coding. \citet{ruan2020transactional} enhanced the execute-order-validate architecture inspired by the optimistic concurrency control in modern databases. \citet{nawab2019blockplane} designed a middleware and communication infrastructure to ensure byzantine fault-tolerance in datacenter. \citet{amiri2020seemore} leveraged a hybrid state machine replication protocol that avoids crash and malicious failures in cloud environment. Blockchain database usually requires incentives to attract participants, though it is not the main focus of the aforementioned studies. Our work provides insights into incentive designs to further expand the applicability of these blockchain databases.

\spara{Polya Urn Process}
Our analysis utilizes some useful tools, including Azuma inequality \cite{azuma1967weighted} for martingales \cite{doob1953stochastic} and stochastic approximation~\cite{robbins1951stochastic,renlund2010generalized}. In particular, we use Doob's martingale and Azuma's inequality to derive the tail probability on concentration. Moreover, the mining process of PoS is related to the (nonlinear) generalized P\'{o}lya urn~\cite{collevecchio2013preferential,renlund2010generalized,laruelle2019nonlinear,arthur1987non}. For example, the mining process of ML-PoS can be modeled by a classical P\'{o}lya urn \cite{mahmoud2008polya}, where the fraction $ \lambda_A$ of blocks proposed by miner $A$ will converge to a beta distribution almost surely. In addition, to solve the asymptotic convergence for the generalized P\'{o}lya urn, various methods are proposed, including stochastic approximation \cite{kaniovski1995non}, brownian motion embedding \cite{collevecchio2013preferential} and exponential embedding \cite{drinea2002balls,mitzenmacher2004scaling}. In this paper, we apply stochastic approximation to SL-PoS, which shows that $ \lambda_A$ will converge to either $1$ or $0$ no matter how much initial staking power miner $A$ possesses. Using these approaches, our analysis may be extended to more complicated scenarios, \eg~with malicious attacks and games.


\section{Conclusion}
\label{sec:con}
We study the fairness of incentives for several blockchain protocols, including PoW, ML-PoS, SL-PoS and C-PoS. We define two types of fairness, including expectational fairness and robust fairness. Our results show that all the protocols except SL-PoS can preserve expectational fairness. We also find that robust fairness is always achievable for PoW as long as the mining process runs for a sufficiently long time. Meanwhile, ML-PoS is difficult to achieve robust fairness while C-PoS can more easily achieve robust fairness thanks to inflation reward and sharding. Unfortunately, SL-PoS will finally turn to monopolization no matter how the initial stakes distribute, which never achieves robust fairness. Both real system experiments and numerical simulations are carried out to demonstrate our analysis. We provide some insights, \eg~increasing inflation reward and reducing proposer reward, to shed light on future study of blockchain incentives. For future work, we aim to take into account malicious attacks on incentives that can change reward distribution so that more fairness issues will be raised.

\begin{acks}
    We are grateful to Michel van Kessel from Blackcoin, Wenbin Zhong from Qtum, Prysmatic Labs and NXT community for their technical supports. This research is supported by \grantsponsor{R-252-000-A27-490}{Singapore National Research Foundation}{} under grant~\grantnum{R-252-000-A27-490}{R-252-000-A27-490}, and by \grantsponsor{12201520}{HK-RGC GRF}{} projects \grantnum{12201520}{12201520} and \grantnum{12200819}{12200819}.
\end{acks}

\balance
\bibliographystyle{ACM-Reference-Format}
\bibliography{reference}


\begin{thebibliography}{82}


\ifx \showCODEN    \undefined \def \showCODEN     #1{\unskip}     \fi
\ifx \showDOI      \undefined \def \showDOI       #1{#1}\fi
\ifx \showISBNx    \undefined \def \showISBNx     #1{\unskip}     \fi
\ifx \showISBNxiii \undefined \def \showISBNxiii  #1{\unskip}     \fi
\ifx \showISSN     \undefined \def \showISSN      #1{\unskip}     \fi
\ifx \showLCCN     \undefined \def \showLCCN      #1{\unskip}     \fi
\ifx \shownote     \undefined \def \shownote      #1{#1}          \fi
\ifx \showarticletitle \undefined \def \showarticletitle #1{#1}   \fi
\ifx \showURL      \undefined \def \showURL       {\relax}        \fi
\providecommand\bibfield[2]{#2}
\providecommand\bibinfo[2]{#2}
\providecommand\natexlab[1]{#1}
\providecommand\showeprint[2][]{arXiv:#2}

\bibitem[\protect\citeauthoryear{Abadi, Arden, Nawab, and Shadmon}{Abadi
  et~al\mbox{.}}{2020}]%
        {abadi2020anylog}
\bibfield{author}{\bibinfo{person}{Daniel Abadi}, \bibinfo{person}{Owen Arden},
  \bibinfo{person}{Faisal Nawab}, {and} \bibinfo{person}{Moshe Shadmon}.}
  \bibinfo{year}{2020}\natexlab{}.
\newblock \showarticletitle{{AnyLog}: A Grand Unification of the Internet of
  Things}. In \bibinfo{booktitle}{\emph{Proc. CIDR}}.
\newblock


\bibitem[\protect\citeauthoryear{Agency}{Agency}{2019}]%
        {eleclist}
\bibfield{author}{\bibinfo{person}{Central~Intelligence Agency}.}
  \bibinfo{year}{2019}\natexlab{}.
\newblock \bibinfo{title}{List of countries by electricity consumption}.
\newblock
\newblock
\urldef\tempurl%
\url{https://www.cia.gov/library/publications/the-world-factbook/rankorder/2233rank.html}
\showURL{%
\tempurl}


\bibitem[\protect\citeauthoryear{Amiri, Agrawal, and Abbadi}{Amiri
  et~al\mbox{.}}{2019}]%
        {amiri2019caper}
\bibfield{author}{\bibinfo{person}{Mohammad~Javad Amiri},
  \bibinfo{person}{Divyakant Agrawal}, {and} \bibinfo{person}{Amr~El Abbadi}.}
  \bibinfo{year}{2019}\natexlab{}.
\newblock \showarticletitle{Caper: A Cross-Application Permissioned
  Blockchain}.
\newblock \bibinfo{journal}{\emph{Proc. VLDB Endowment}} \bibinfo{volume}{12},
  \bibinfo{number}{11} (\bibinfo{year}{2019}), \bibinfo{pages}{1385--1398}.
\newblock


\bibitem[\protect\citeauthoryear{Amiri, Maiyya, Agrawal, and El~Abbadi}{Amiri
  et~al\mbox{.}}{2020}]%
        {amiri2020seemore}
\bibfield{author}{\bibinfo{person}{Mohammad~Javad Amiri},
  \bibinfo{person}{Sujaya Maiyya}, \bibinfo{person}{Divyakant Agrawal}, {and}
  \bibinfo{person}{Amr El~Abbadi}.} \bibinfo{year}{2020}\natexlab{}.
\newblock \showarticletitle{Seemore: A Fault-Tolerant Protocol for Hybrid Cloud
  Environments}. In \bibinfo{booktitle}{\emph{Proc. IEEE ICDE}}.
  \bibinfo{pages}{1345--1356}.
\newblock


\bibitem[\protect\citeauthoryear{Arthur, Ermoliev, and Kaniovski}{Arthur
  et~al\mbox{.}}{1987}]%
        {arthur1987non}
\bibfield{author}{\bibinfo{person}{W~Brian Arthur}, \bibinfo{person}{Yu~M
  Ermoliev}, {and} \bibinfo{person}{Yu~M Kaniovski}.}
  \bibinfo{year}{1987}\natexlab{}.
\newblock \bibinfo{title}{Non-linear Urn Processes: Asymptotic Behavior and
  Applications}.
\newblock
\newblock


\bibitem[\protect\citeauthoryear{Azuma}{Azuma}{1967}]%
        {azuma1967weighted}
\bibfield{author}{\bibinfo{person}{Kazuoki Azuma}.}
  \bibinfo{year}{1967}\natexlab{}.
\newblock \showarticletitle{Weighted Sums of Certain Dependent Random
  Variables}.
\newblock \bibinfo{journal}{\emph{Tohoku Mathematical Journal, Second Series}}
  \bibinfo{volume}{19}, \bibinfo{number}{3} (\bibinfo{year}{1967}),
  \bibinfo{pages}{357--367}.
\newblock


\bibitem[\protect\citeauthoryear{Begicheva and Kofman}{Begicheva and
  Kofman}{2018}]%
        {begicheva2018fair}
\bibfield{author}{\bibinfo{person}{A Begicheva} {and} \bibinfo{person}{A
  Kofman}.} \bibinfo{year}{2018}\natexlab{}.
\newblock \bibinfo{title}{Fair Proof of Stake}.
\newblock
\newblock


\bibitem[\protect\citeauthoryear{Bonneau, Miller, Clark, Narayanan, Kroll, and
  Felten}{Bonneau et~al\mbox{.}}{2015}]%
        {bonneau2015sok}
\bibfield{author}{\bibinfo{person}{Joseph Bonneau}, \bibinfo{person}{Andrew
  Miller}, \bibinfo{person}{Jeremy Clark}, \bibinfo{person}{Arvind Narayanan},
  \bibinfo{person}{Joshua~A Kroll}, {and} \bibinfo{person}{Edward~W Felten}.}
  \bibinfo{year}{2015}\natexlab{}.
\newblock \showarticletitle{Sok: Research Perspectives and Challenges for
  Bitcoin and Cryptocurrencies}. In \bibinfo{booktitle}{\emph{Proc. IEEE
  S\&P}}. \bibinfo{pages}{104--121}.
\newblock


\bibitem[\protect\citeauthoryear{Buchnik and Friedman}{Buchnik and
  Friedman}{2020}]%
        {buchnik13fireledger}
\bibfield{author}{\bibinfo{person}{Yehonatan Buchnik} {and}
  \bibinfo{person}{Roy Friedman}.} \bibinfo{year}{2020}\natexlab{}.
\newblock \showarticletitle{FireLedger: A High Throughput Blockchain Consensus
  Protocol}.
\newblock \bibinfo{journal}{\emph{Proc. VLDB Endowment}} \bibinfo{volume}{13},
  \bibinfo{number}{9} (\bibinfo{year}{2020}), \bibinfo{pages}{1525--1539}.
\newblock


\bibitem[\protect\citeauthoryear{Buckkets}{Buckkets}{2019}]%
        {richricherbloom}
\bibfield{author}{\bibinfo{person}{Buckkets}.} \bibinfo{year}{2019}\natexlab{}.
\newblock \bibinfo{booktitle}{\emph{The Rich Get Richer Concept in PoS}}.
\newblock
\urldef\tempurl%
\url{https://medium.com/peercoin/the-rich-get-richer-concept-in-proof-of-stake-systems-82c5ceeff326}
\showURL{%
\tempurl}


\bibitem[\protect\citeauthoryear{Carlsten, Kalodner, Weinberg, and
  Narayanan}{Carlsten et~al\mbox{.}}{2016}]%
        {carlsten2016instability}
\bibfield{author}{\bibinfo{person}{Miles Carlsten}, \bibinfo{person}{Harry
  Kalodner}, \bibinfo{person}{S~Matthew Weinberg}, {and}
  \bibinfo{person}{Arvind Narayanan}.} \bibinfo{year}{2016}\natexlab{}.
\newblock \showarticletitle{On the Instability of Bitcoin Without the Block
  Reward}. In \bibinfo{booktitle}{\emph{Proc. ACM CCS}}.
  \bibinfo{pages}{154--167}.
\newblock


\bibitem[\protect\citeauthoryear{Cohen, Rosenthal, and Zohar}{Cohen
  et~al\mbox{.}}{2020}]%
        {cohen2020reasoning}
\bibfield{author}{\bibinfo{person}{Sara Cohen}, \bibinfo{person}{Adam
  Rosenthal}, {and} \bibinfo{person}{Aviv Zohar}.}
  \bibinfo{year}{2020}\natexlab{}.
\newblock \showarticletitle{Reasoning about the Future in Blockchain
  Databases}. In \bibinfo{booktitle}{\emph{Proc. IEEE ICDE}}.
  \bibinfo{pages}{1930--1933}.
\newblock


\bibitem[\protect\citeauthoryear{Collevecchio, Cotar, and LiCalzi}{Collevecchio
  et~al\mbox{.}}{2013}]%
        {collevecchio2013preferential}
\bibfield{author}{\bibinfo{person}{Andrea Collevecchio},
  \bibinfo{person}{Codina Cotar}, {and} \bibinfo{person}{Marco LiCalzi}.}
  \bibinfo{year}{2013}\natexlab{}.
\newblock \showarticletitle{On a Preferential Attachment and Generalized
  P{\'o}lya's Urn Model}.
\newblock \bibinfo{journal}{\emph{The Annals of Applied Probability}}
  \bibinfo{volume}{23}, \bibinfo{number}{3} (\bibinfo{year}{2013}),
  \bibinfo{pages}{1219--1253}.
\newblock


\bibitem[\protect\citeauthoryear{Digiconomist}{Digiconomist}{2019}]%
        {BTCelectic}
\bibfield{author}{\bibinfo{person}{Digiconomist}.}
  \bibinfo{year}{2019}\natexlab{}.
\newblock \bibinfo{booktitle}{\emph{Bitcoin Energy Consumption Index}}.
\newblock
\urldef\tempurl%
\url{https://digiconomist.net/bitcoin-energy-consumption}
\showURL{%
\tempurl}


\bibitem[\protect\citeauthoryear{Dinh, Liu, Zhang, Chen, Ooi, and Wang}{Dinh
  et~al\mbox{.}}{2018}]%
        {dinh2018untangling}
\bibfield{author}{\bibinfo{person}{Tien Tuan~Anh Dinh}, \bibinfo{person}{Rui
  Liu}, \bibinfo{person}{Meihui Zhang}, \bibinfo{person}{Gang Chen},
  \bibinfo{person}{Beng~Chin Ooi}, {and} \bibinfo{person}{Ji Wang}.}
  \bibinfo{year}{2018}\natexlab{}.
\newblock \showarticletitle{Untangling Blockchain: A Data Processing View of
  Blockchain Systems}.
\newblock \bibinfo{journal}{\emph{IEEE Transactions on Knowledge and Data
  Engineering}} \bibinfo{volume}{30}, \bibinfo{number}{7}
  (\bibinfo{year}{2018}), \bibinfo{pages}{1366--1385}.
\newblock


\bibitem[\protect\citeauthoryear{Dinh, Wang, Chen, Liu, Ooi, and Tan}{Dinh
  et~al\mbox{.}}{2017}]%
        {dinh2017blockbench}
\bibfield{author}{\bibinfo{person}{Tien Tuan~Anh Dinh}, \bibinfo{person}{Ji
  Wang}, \bibinfo{person}{Gang Chen}, \bibinfo{person}{Rui Liu},
  \bibinfo{person}{Beng~Chin Ooi}, {and} \bibinfo{person}{Kian-Lee Tan}.}
  \bibinfo{year}{2017}\natexlab{}.
\newblock \showarticletitle{Blockbench: A Framework for Analyzing Private
  Blockchains}. In \bibinfo{booktitle}{\emph{Proc. ACM SIGMOD}}.
  \bibinfo{pages}{1085--1100}.
\newblock


\bibitem[\protect\citeauthoryear{Doob}{Doob}{1953}]%
        {doob1953stochastic}
\bibfield{author}{\bibinfo{person}{Joseph~Leo Doob}.}
  \bibinfo{year}{1953}\natexlab{}.
\newblock \bibinfo{booktitle}{\emph{Stochastic Processes}}.
  Vol.~\bibinfo{volume}{101}.
\newblock \bibinfo{publisher}{New York Wiley}.
\newblock


\bibitem[\protect\citeauthoryear{Drinea, Frieze, and Mitzenmacher}{Drinea
  et~al\mbox{.}}{2002}]%
        {drinea2002balls}
\bibfield{author}{\bibinfo{person}{Eleni Drinea}, \bibinfo{person}{Alan
  Frieze}, {and} \bibinfo{person}{Michael Mitzenmacher}.}
  \bibinfo{year}{2002}\natexlab{}.
\newblock \showarticletitle{Balls and Bins Models with Feedback}. In
  \bibinfo{booktitle}{\emph{Proc. SODA}}. \bibinfo{pages}{308--315}.
\newblock


\bibitem[\protect\citeauthoryear{Ethereum}{Ethereum}{2020}]%
        {Gethcode}
\bibfield{author}{\bibinfo{person}{Ethereum}.} \bibinfo{year}{2020}\natexlab{}.
\newblock \bibinfo{title}{Geth v1.9.11}.
\newblock
\newblock
\urldef\tempurl%
\url{https://github.com/ethereum/go-ethereum}
\showURL{%
\tempurl}


\bibitem[\protect\citeauthoryear{Eyal}{Eyal}{2015}]%
        {eyal2015miner}
\bibfield{author}{\bibinfo{person}{Ittay Eyal}.}
  \bibinfo{year}{2015}\natexlab{}.
\newblock \showarticletitle{The Miner's Dilemma}. In
  \bibinfo{booktitle}{\emph{Proc. IEEE S\&P}}. \bibinfo{pages}{89--103}.
\newblock


\bibitem[\protect\citeauthoryear{Eyal and Sirer}{Eyal and Sirer}{2018}]%
        {eyal2018majority}
\bibfield{author}{\bibinfo{person}{Ittay Eyal} {and}
  \bibinfo{person}{Emin~G{\"u}n Sirer}.} \bibinfo{year}{2018}\natexlab{}.
\newblock \showarticletitle{Majority Is Not Enough: Bitcoin Mining Is
  Vulnerable}.
\newblock \bibinfo{journal}{\emph{Commun. ACM}} \bibinfo{volume}{61},
  \bibinfo{number}{7} (\bibinfo{year}{2018}), \bibinfo{pages}{95--102}.
\newblock


\bibitem[\protect\citeauthoryear{Fanti, Kogan, Oh, Ruan, Viswanath, and
  Wang}{Fanti et~al\mbox{.}}{2019}]%
        {fanti2019compounding}
\bibfield{author}{\bibinfo{person}{Giulia Fanti}, \bibinfo{person}{Leonid
  Kogan}, \bibinfo{person}{Sewoong Oh}, \bibinfo{person}{Kathleen Ruan},
  \bibinfo{person}{Pramod Viswanath}, {and} \bibinfo{person}{Gerui Wang}.}
  \bibinfo{year}{2019}\natexlab{}.
\newblock \showarticletitle{Compounding of Wealth in Proof-of-Stake
  Cryptocurrencies}. In \bibinfo{booktitle}{\emph{Proc. FC}}.
  \bibinfo{pages}{42--61}.
\newblock


\bibitem[\protect\citeauthoryear{Feng and Niu}{Feng and Niu}{2019}]%
        {feng2019selfish}
\bibfield{author}{\bibinfo{person}{Chen Feng} {and} \bibinfo{person}{Jianyu
  Niu}.} \bibinfo{year}{2019}\natexlab{}.
\newblock \showarticletitle{Selfish Mining in Ethereum}. In
  \bibinfo{booktitle}{\emph{Proc. IEEE ICDCS}}. \bibinfo{pages}{1306--1316}.
\newblock


\bibitem[\protect\citeauthoryear{Feng and Luo}{Feng and Luo}{2020}]%
        {feng2020evaluating}
\bibfield{author}{\bibinfo{person}{Zonghao Feng} {and} \bibinfo{person}{Qiong
  Luo}.} \bibinfo{year}{2020}\natexlab{}.
\newblock \showarticletitle{Evaluating Memory-Hard Proof-of-Work Algorithms on
  Three Processors}.
\newblock \bibinfo{journal}{\emph{Proc. VLDB Endowment}} \bibinfo{volume}{13},
  \bibinfo{number}{6} (\bibinfo{year}{2020}), \bibinfo{pages}{898--911}.
\newblock


\bibitem[\protect\citeauthoryear{Foundation}{Foundation}{2020}]%
        {ETH20}
\bibfield{author}{\bibinfo{person}{Ethereum Foundation}.}
  \bibinfo{year}{2020}\natexlab{}.
\newblock \bibinfo{title}{Github: Ethereum 2.0 Specifications}.
\newblock
\newblock
\urldef\tempurl%
\url{https://github.com/ethereum/eth2.0-specs}
\showURL{%
\tempurl}


\bibitem[\protect\citeauthoryear{Gao, Li, Peng, and Xiao}{Gao
  et~al\mbox{.}}{2019}]%
        {shangaobribe}
\bibfield{author}{\bibinfo{person}{Shang Gao}, \bibinfo{person}{Zecheng Li},
  \bibinfo{person}{Zhe Peng}, {and} \bibinfo{person}{Bin Xiao}.}
  \bibinfo{year}{2019}\natexlab{}.
\newblock \showarticletitle{Power Adjusting and Bribery Racing: Novel Mining
  Attacks in the Bitcoin System}. In \bibinfo{booktitle}{\emph{Proc. ACM CCS}}.
  \bibinfo{pages}{833--850}.
\newblock


\bibitem[\protect\citeauthoryear{Garay, Kiayias, and Leonardos}{Garay
  et~al\mbox{.}}{2015}]%
        {garay2015bitcoin}
\bibfield{author}{\bibinfo{person}{Juan Garay}, \bibinfo{person}{Aggelos
  Kiayias}, {and} \bibinfo{person}{Nikos Leonardos}.}
  \bibinfo{year}{2015}\natexlab{}.
\newblock \showarticletitle{The Bitcoin Backbone Protocol: Analysis and
  Applications}. In \bibinfo{booktitle}{\emph{Proc. EUROCRYPT}}.
  \bibinfo{pages}{281--310}.
\newblock


\bibitem[\protect\citeauthoryear{Getoor}{Getoor}{2019}]%
        {Getoor_RDS_2019}
\bibfield{author}{\bibinfo{person}{Lise Getoor}.}
  \bibinfo{year}{2019}\natexlab{}.
\newblock \bibinfo{title}{Responsible Data Science}.
\newblock \bibinfo{howpublished}{SIGMOD Keynote}.
\newblock


\bibitem[\protect\citeauthoryear{Gilad, Hemo, Micali, Vlachos, and
  Zeldovich}{Gilad et~al\mbox{.}}{2017}]%
        {gilad2017algorand}
\bibfield{author}{\bibinfo{person}{Yossi Gilad}, \bibinfo{person}{Rotem Hemo},
  \bibinfo{person}{Silvio Micali}, \bibinfo{person}{Georgios Vlachos}, {and}
  \bibinfo{person}{Nickolai Zeldovich}.} \bibinfo{year}{2017}\natexlab{}.
\newblock \showarticletitle{Algorand: Scaling Byzantine Agreements for
  Cryptocurrencies}. In \bibinfo{booktitle}{\emph{Proc. ACM SOSP}}.
  \bibinfo{pages}{51--68}.
\newblock


\bibitem[\protect\citeauthoryear{Gupta, Rahnama, Hellings, and Sadoghi}{Gupta
  et~al\mbox{.}}{2020}]%
        {gupta2020resilientdb}
\bibfield{author}{\bibinfo{person}{Suyash Gupta}, \bibinfo{person}{Sajjad
  Rahnama}, \bibinfo{person}{Jelle Hellings}, {and} \bibinfo{person}{Mohammad
  Sadoghi}.} \bibinfo{year}{2020}\natexlab{}.
\newblock \showarticletitle{{ResilientDB}: Global Scale Resilient Blockchain
  Fabric}.
\newblock \bibinfo{journal}{\emph{Proc. VLDB Endowment}} \bibinfo{volume}{13},
  \bibinfo{number}{6} (\bibinfo{year}{2020}), \bibinfo{pages}{868--883}.
\newblock


\bibitem[\protect\citeauthoryear{Herlihy, Liskov, and Shrira}{Herlihy
  et~al\mbox{.}}{2019}]%
        {herlihy2019cross}
\bibfield{author}{\bibinfo{person}{Maurice Herlihy}, \bibinfo{person}{Barbara
  Liskov}, {and} \bibinfo{person}{Liuba Shrira}.}
  \bibinfo{year}{2019}\natexlab{}.
\newblock \showarticletitle{Cross-chain Deals and Adversarial Commerce}.
\newblock \bibinfo{journal}{\emph{Proc. VLDB Endowment}} \bibinfo{volume}{13},
  \bibinfo{number}{2} (\bibinfo{year}{2019}), \bibinfo{pages}{100--113}.
\newblock


\bibitem[\protect\citeauthoryear{Hoeffding}{Hoeffding}{1994}]%
        {hoeffding1994probability}
\bibfield{author}{\bibinfo{person}{Wassily Hoeffding}.}
  \bibinfo{year}{1994}\natexlab{}.
\newblock \showarticletitle{Probability Inequalities for Sums of Bounded Random
  Variables}.
\newblock In \bibinfo{booktitle}{\emph{The Collected Works of Wassily
  Hoeffding}}. \bibinfo{publisher}{Springer}, \bibinfo{pages}{409--426}.
\newblock


\bibitem[\protect\citeauthoryear{Huang, Tang, Cong, Lim, and Xu}{Huang
  et~al\mbox{.}}{2021}]%
        {Missing_Proofs}
\bibfield{author}{\bibinfo{person}{Yuming Huang}, \bibinfo{person}{Jing Tang},
  \bibinfo{person}{Qianhao Cong}, \bibinfo{person}{Andrew Lim}, {and}
  \bibinfo{person}{Jianliang Xu}.} \bibinfo{year}{2021}\natexlab{}.
\newblock \bibinfo{title}{Do the Rich Get Richer? Fairness Analysis for
  Blockchain Incentives}.
\newblock \bibinfo{howpublished}{arXiv preprint,
  \url{https://arxiv.org/abs/2103.14713}}.
\newblock


\bibitem[\protect\citeauthoryear{Huberman, Leshno, and Moallemi}{Huberman
  et~al\mbox{.}}{2017}]%
        {ciamac2017monopoly}
\bibfield{author}{\bibinfo{person}{Ciamac~Gur Huberman},
  \bibinfo{person}{Jacob~D Leshno}, {and} \bibinfo{person}{C Moallemi}.}
  \bibinfo{year}{2017}\natexlab{}.
\newblock \bibinfo{title}{Monopoly without a Monopolist: An Economic Analysis
  of the Bitcoin Payment System}.
\newblock
\newblock


\bibitem[\protect\citeauthoryear{Jakobsson and Juels}{Jakobsson and
  Juels}{1999}]%
        {jakobsson1999proofs}
\bibfield{author}{\bibinfo{person}{Markus Jakobsson} {and} \bibinfo{person}{Ari
  Juels}.} \bibinfo{year}{1999}\natexlab{}.
\newblock \showarticletitle{Proofs of Work and Bread Pudding Protocols}.
\newblock In \bibinfo{booktitle}{\emph{Secure Information Networks}}.
  \bibinfo{pages}{258--272}.
\newblock


\bibitem[\protect\citeauthoryear{Jelurida}{Jelurida}{2020}]%
        {Nxtcode}
\bibfield{author}{\bibinfo{person}{Jelurida}.} \bibinfo{year}{2020}\natexlab{}.
\newblock \bibinfo{title}{NXT Evaluation Toolkit v1.12.2}.
\newblock
\newblock
\urldef\tempurl%
\url{https://bitbucket.org/Jelurida/nxt-clone-starter/src/master/}
\showURL{%
\tempurl}


\bibitem[\protect\citeauthoryear{Kaniovski and Pflug}{Kaniovski and
  Pflug}{1995}]%
        {kaniovski1995non}
\bibfield{author}{\bibinfo{person}{Yu Kaniovski} {and} \bibinfo{person}{Georg
  Pflug}.} \bibinfo{year}{1995}\natexlab{}.
\newblock \showarticletitle{Non-standard Limit Theorems for Urn Models and
  Stochastic Approximation Procedures}.
\newblock \bibinfo{journal}{\emph{Comm. in Statistics}} \bibinfo{volume}{11},
  \bibinfo{number}{1} (\bibinfo{year}{1995}), \bibinfo{pages}{79--102}.
\newblock


\bibitem[\protect\citeauthoryear{Kiffer and Rajaraman}{Kiffer and
  Rajaraman}{2018}]%
        {kiffer2018better}
\bibfield{author}{\bibinfo{person}{Lucianna Kiffer} {and}
  \bibinfo{person}{Rajmohan Rajaraman}.} \bibinfo{year}{2018}\natexlab{}.
\newblock \showarticletitle{A Better Method to Analyze Blockchain Consistency}.
  In \bibinfo{booktitle}{\emph{Proc. ACM CCS}}. \bibinfo{pages}{729--744}.
\newblock


\bibitem[\protect\citeauthoryear{Kroll, Davey, and Felten}{Kroll
  et~al\mbox{.}}{2013}]%
        {kroll2013economics}
\bibfield{author}{\bibinfo{person}{Joshua~A Kroll}, \bibinfo{person}{Ian~C
  Davey}, {and} \bibinfo{person}{Edward~W Felten}.}
  \bibinfo{year}{2013}\natexlab{}.
\newblock \showarticletitle{The Economics of Bitcoin Mining, or Bitcoin in the
  Presence of Adversaries}. In \bibinfo{booktitle}{\emph{Proc. WEIS}}.
\newblock


\bibitem[\protect\citeauthoryear{Kwon, Kim, Son, Vasserman, and Kim}{Kwon
  et~al\mbox{.}}{2017}]%
        {kwon2017selfish}
\bibfield{author}{\bibinfo{person}{Yujin Kwon}, \bibinfo{person}{Dohyun Kim},
  \bibinfo{person}{Yunmok Son}, \bibinfo{person}{Eugene Vasserman}, {and}
  \bibinfo{person}{Yongdae Kim}.} \bibinfo{year}{2017}\natexlab{}.
\newblock \showarticletitle{Be Selfish and Avoid Dilemmas: Fork After
  Withholding (FAW) Attacks on Bitcoin}. In \bibinfo{booktitle}{\emph{Proc. ACM
  CCS}}. \bibinfo{pages}{195--209}.
\newblock


\bibitem[\protect\citeauthoryear{Kwon, Kim, Shin, and Kim}{Kwon
  et~al\mbox{.}}{2019a}]%
        {Kwon2019BitcoinVB}
\bibfield{author}{\bibinfo{person}{Yujin Kwon}, \bibinfo{person}{Hyoungshick
  Kim}, \bibinfo{person}{Jinwoo Shin}, {and} \bibinfo{person}{Yongdae Kim}.}
  \bibinfo{year}{2019}\natexlab{a}.
\newblock \showarticletitle{Bitcoin vs. Bitcoin Cash: Coexistence or Downfall
  of Bitcoin Cash?}
\newblock \bibinfo{journal}{\emph{Proc. IEEE S\&P}} (\bibinfo{year}{2019}),
  \bibinfo{pages}{935--951}.
\newblock


\bibitem[\protect\citeauthoryear{Kwon, Liu, Kim, Song, and Kim}{Kwon
  et~al\mbox{.}}{2019b}]%
        {Kwon:2019}
\bibfield{author}{\bibinfo{person}{Yujin Kwon}, \bibinfo{person}{Jian Liu},
  \bibinfo{person}{Minjeong Kim}, \bibinfo{person}{Dawn Song}, {and}
  \bibinfo{person}{Yongdae Kim}.} \bibinfo{year}{2019}\natexlab{b}.
\newblock \showarticletitle{Impossibility of Full Decentralization in
  Permissionless Blockchains}. In \bibinfo{booktitle}{\emph{Proc. ACM AFT}}.
  \bibinfo{pages}{110--123}.
\newblock


\bibitem[\protect\citeauthoryear{Laruelle and Pag{\`e}s}{Laruelle and
  Pag{\`e}s}{2019}]%
        {laruelle2019nonlinear}
\bibfield{author}{\bibinfo{person}{Sophie Laruelle} {and}
  \bibinfo{person}{Gilles Pag{\`e}s}.} \bibinfo{year}{2019}\natexlab{}.
\newblock \showarticletitle{Nonlinear Randomized Urn Models: A Stochastic
  Approximation Viewpoint}.
\newblock \bibinfo{journal}{\emph{Electronic Journal of Probability}}
  \bibinfo{volume}{24}, \bibinfo{number}{98} (\bibinfo{year}{2019}),
  \bibinfo{pages}{1--47}.
\newblock


\bibitem[\protect\citeauthoryear{Luu, Saha, Parameshwaran, Saxena, and
  Hobor}{Luu et~al\mbox{.}}{2015}]%
        {luu2015power}
\bibfield{author}{\bibinfo{person}{Loi Luu}, \bibinfo{person}{Ratul Saha},
  \bibinfo{person}{Inian Parameshwaran}, \bibinfo{person}{Prateek Saxena},
  {and} \bibinfo{person}{Aquinas Hobor}.} \bibinfo{year}{2015}\natexlab{}.
\newblock \showarticletitle{On Power Splitting Games in Distributed
  Computation: The Case of Bitcoin Pooled Mining}. In
  \bibinfo{booktitle}{\emph{Proc. IEEE CSF}}. \bibinfo{pages}{397--411}.
\newblock


\bibitem[\protect\citeauthoryear{Mahmoud}{Mahmoud}{2008}]%
        {mahmoud2008polya}
\bibfield{author}{\bibinfo{person}{Hosam Mahmoud}.}
  \bibinfo{year}{2008}\natexlab{}.
\newblock \bibinfo{booktitle}{\emph{P{\'o}lya Urn Models}}.
\newblock \bibinfo{publisher}{CRC press}.
\newblock


\bibitem[\protect\citeauthoryear{Maiyya, Nawab, Agrawal, and Abbadi}{Maiyya
  et~al\mbox{.}}{2019}]%
        {maiyya2019unifying}
\bibfield{author}{\bibinfo{person}{Sujaya Maiyya}, \bibinfo{person}{Faisal
  Nawab}, \bibinfo{person}{Divyakant Agrawal}, {and} \bibinfo{person}{Amr~El
  Abbadi}.} \bibinfo{year}{2019}\natexlab{}.
\newblock \showarticletitle{Unifying Consensus and Atomic Commitment for
  Effective Cloud Data Management}.
\newblock \bibinfo{journal}{\emph{Proc. VLDB Endowment}} \bibinfo{volume}{12},
  \bibinfo{number}{5} (\bibinfo{year}{2019}), \bibinfo{pages}{611--623}.
\newblock


\bibitem[\protect\citeauthoryear{McCorry, Hicks, and Meiklejohn}{McCorry
  et~al\mbox{.}}{2018}]%
        {mccorry2018smart}
\bibfield{author}{\bibinfo{person}{Patrick McCorry}, \bibinfo{person}{Alexander
  Hicks}, {and} \bibinfo{person}{Sarah Meiklejohn}.}
  \bibinfo{year}{2018}\natexlab{}.
\newblock \showarticletitle{Smart Contracts for Bribing Miners}. In
  \bibinfo{booktitle}{\emph{Proc. FC}}. \bibinfo{pages}{3--18}.
\newblock


\bibitem[\protect\citeauthoryear{Mitzenmacher, Oliveira, and
  Spencer}{Mitzenmacher et~al\mbox{.}}{2004}]%
        {mitzenmacher2004scaling}
\bibfield{author}{\bibinfo{person}{Michael Mitzenmacher},
  \bibinfo{person}{Roberto Oliveira}, {and} \bibinfo{person}{Joel Spencer}.}
  \bibinfo{year}{2004}\natexlab{}.
\newblock \showarticletitle{A Scaling Result for Explosive Processes}.
\newblock \bibinfo{journal}{\emph{Electronic Journal of Combinatorics}}
  \bibinfo{volume}{11}, \bibinfo{number}{1} (\bibinfo{year}{2004}),
  \bibinfo{pages}{1--14}.
\newblock


\bibitem[\protect\citeauthoryear{M{\"o}ser and B{\"o}hme}{M{\"o}ser and
  B{\"o}hme}{2015}]%
        {moser2015trends}
\bibfield{author}{\bibinfo{person}{Malte M{\"o}ser} {and}
  \bibinfo{person}{Rainer B{\"o}hme}.} \bibinfo{year}{2015}\natexlab{}.
\newblock \showarticletitle{Trends, Tips, Tolls: A Longitudinal Study of
  Bitcoin Transaction Fees}. In \bibinfo{booktitle}{\emph{Proc. FC}}.
  \bibinfo{pages}{19--33}.
\newblock


\bibitem[\protect\citeauthoryear{Nakamoto}{Nakamoto}{2008}]%
        {nakamoto2008bitcoin}
\bibfield{author}{\bibinfo{person}{Satoshi Nakamoto}.}
  \bibinfo{year}{2008}\natexlab{}.
\newblock \bibinfo{title}{Bitcoin: A Peer-to-Peer Electronic Cash System}.
\newblock
\newblock


\bibitem[\protect\citeauthoryear{Nasdaq}{Nasdaq}{2020}]%
        {ETC_doublespend}
\bibfield{author}{\bibinfo{person}{Nasdaq}.} \bibinfo{year}{2020}\natexlab{}.
\newblock \bibinfo{title}{Ethereum Classic Attacker Successfully Double-Spends
  $1.68M$ Dollars in Second Attack: Report}.
\newblock
\newblock
\urldef\tempurl%
\url{https://www.nasdaq.com/articles/ethereum-classic-attacker-successfully-double-spends-\%241.68m-in-second-attack\%3A-report-2020}
\showURL{%
\tempurl}


\bibitem[\protect\citeauthoryear{Nawab and Sadoghi}{Nawab and Sadoghi}{2019}]%
        {nawab2019blockplane}
\bibfield{author}{\bibinfo{person}{Faisal Nawab} {and}
  \bibinfo{person}{Mohammad Sadoghi}.} \bibinfo{year}{2019}\natexlab{}.
\newblock \showarticletitle{Blockplane: A Global-Scale Byzantizing Middleware}.
  In \bibinfo{booktitle}{\emph{Proc. IEEE ICDE}}. \bibinfo{pages}{124--135}.
\newblock


\bibitem[\protect\citeauthoryear{Nayak, Kumar, Miller, and Shi}{Nayak
  et~al\mbox{.}}{2016}]%
        {nayak2016stubborn}
\bibfield{author}{\bibinfo{person}{Kartik Nayak}, \bibinfo{person}{Srijan
  Kumar}, \bibinfo{person}{Andrew Miller}, {and} \bibinfo{person}{Elaine Shi}.}
  \bibinfo{year}{2016}\natexlab{}.
\newblock \showarticletitle{Stubborn Mining: Generalizing Selfish Mining and
  Combining with an Eclipse Attack}. In \bibinfo{booktitle}{\emph{Proc. IEEE
  EuroS\&P}}. \bibinfo{pages}{305--320}.
\newblock


\bibitem[\protect\citeauthoryear{Orlicki}{Orlicki}{2020}]%
        {orlicki2020sequential}
\bibfield{author}{\bibinfo{person}{Jos{\'e}~I Orlicki}.}
  \bibinfo{year}{2020}\natexlab{}.
\newblock \bibinfo{title}{Sequential Proof-of-Work for Fair Staking and
  Distributed Randomness Beacons}.
\newblock \bibinfo{howpublished}{arXiv preprint,
  \url{https://arxiv.org/abs/2008.10189}}.
\newblock


\bibitem[\protect\citeauthoryear{Pass and Shi}{Pass and Shi}{2017}]%
        {pass2017fruitchains}
\bibfield{author}{\bibinfo{person}{Rafael Pass} {and} \bibinfo{person}{Elaine
  Shi}.} \bibinfo{year}{2017}\natexlab{}.
\newblock \showarticletitle{Fruitchains: A Fair Blockchain}. In
  \bibinfo{booktitle}{\emph{Proc. ACM PODC}}. \bibinfo{pages}{315--324}.
\newblock


\bibitem[\protect\citeauthoryear{Peng, Du, Li, Cheng, and Song}{Peng
  et~al\mbox{.}}{2020}]%
        {peng2020falcondb}
\bibfield{author}{\bibinfo{person}{Yanqing Peng}, \bibinfo{person}{Min Du},
  \bibinfo{person}{Feifei Li}, \bibinfo{person}{Raymond Cheng}, {and}
  \bibinfo{person}{Dawn Song}.} \bibinfo{year}{2020}\natexlab{}.
\newblock \showarticletitle{FalconDB: Blockchain-based Collaborative Database}.
  In \bibinfo{booktitle}{\emph{Proc. ACM SIGMOD}}. \bibinfo{pages}{637--652}.
\newblock


\bibitem[\protect\citeauthoryear{Qi, Zhang, Jin, and Zhou}{Qi
  et~al\mbox{.}}{2020}]%
        {qi2020bft}
\bibfield{author}{\bibinfo{person}{Xiaodong Qi}, \bibinfo{person}{Zhao Zhang},
  \bibinfo{person}{Cheqing Jin}, {and} \bibinfo{person}{Aoying Zhou}.}
  \bibinfo{year}{2020}\natexlab{}.
\newblock \showarticletitle{BFT-Store: Storage Partition for Permissioned
  Blockchain via Erasure Coding}. In \bibinfo{booktitle}{\emph{Proc. IEEE
  ICDE}}. \bibinfo{pages}{1926--1929}.
\newblock


\bibitem[\protect\citeauthoryear{Qtum}{Qtum}{2020}]%
        {Qtumcode}
\bibfield{author}{\bibinfo{person}{Qtum}.} \bibinfo{year}{2020}\natexlab{}.
\newblock \bibinfo{title}{Qtum Core v0.19.0.1}.
\newblock
\newblock
\urldef\tempurl%
\url{https://github.com/qtumproject/qtum}
\showURL{%
\tempurl}


\bibitem[\protect\citeauthoryear{Rammeloo}{Rammeloo}{2019}]%
        {ecopos}
\bibfield{author}{\bibinfo{person}{Gert Rammeloo}.}
  \bibinfo{year}{2019}\natexlab{}.
\newblock \bibinfo{title}{The Economics of the Proof of Stake Consensus}.
\newblock
\newblock


\bibitem[\protect\citeauthoryear{Renlund}{Renlund}{2010}]%
        {renlund2010generalized}
\bibfield{author}{\bibinfo{person}{Henrik Renlund}.}
  \bibinfo{year}{2010}\natexlab{}.
\newblock \bibinfo{title}{Generalized P{\'o}lya Urns via Stochastic
  Approximation}.
\newblock \bibinfo{howpublished}{arXiv preprint,
  \url{https://arxiv.org/abs/1002.3716}}.
\newblock


\bibitem[\protect\citeauthoryear{Robbins and Monro}{Robbins and Monro}{1951}]%
        {robbins1951stochastic}
\bibfield{author}{\bibinfo{person}{Herbert Robbins} {and}
  \bibinfo{person}{Sutton Monro}.} \bibinfo{year}{1951}\natexlab{}.
\newblock \showarticletitle{A Stochastic Approximation Method}.
\newblock \bibinfo{journal}{\emph{The Annals of Mathematical Statistics}}
  \bibinfo{volume}{22}, \bibinfo{number}{3} (\bibinfo{year}{1951}),
  \bibinfo{pages}{400--407}.
\newblock


\bibitem[\protect\citeauthoryear{Rosenfeld}{Rosenfeld}{2014}]%
        {rosenfeld2014analysis}
\bibfield{author}{\bibinfo{person}{Meni Rosenfeld}.}
  \bibinfo{year}{2014}\natexlab{}.
\newblock \bibinfo{title}{Analysis of Hashrate-Based Double Spending}.
\newblock \bibinfo{howpublished}{arXiv preprint,
  \url{https://arxiv.org/abs/1402.2009}}.
\newblock


\bibitem[\protect\citeauthoryear{Ross, Kelly, Sullivan, Perry, Mercer, Davis,
  Washburn, Sager, Boyce, and Bristow}{Ross et~al\mbox{.}}{1996}]%
        {ross1996stochastic}
\bibfield{author}{\bibinfo{person}{Sheldon~M Ross}, \bibinfo{person}{John~J
  Kelly}, \bibinfo{person}{Roger~J Sullivan}, \bibinfo{person}{William~James
  Perry}, \bibinfo{person}{Donald Mercer}, \bibinfo{person}{Ruth~M Davis},
  \bibinfo{person}{Thomas~Dell Washburn}, \bibinfo{person}{Earl~V Sager},
  \bibinfo{person}{Joseph~B Boyce}, {and} \bibinfo{person}{Vincent~L Bristow}.}
  \bibinfo{year}{1996}\natexlab{}.
\newblock \bibinfo{booktitle}{\emph{Stochastic Processes}}.
  Vol.~\bibinfo{volume}{2}.
\newblock \bibinfo{publisher}{Wiley New York}.
\newblock


\bibitem[\protect\citeauthoryear{Rosu and Saleh}{Rosu and Saleh}{2019}]%
        {rosu2019evolution}
\bibfield{author}{\bibinfo{person}{Ioanid Rosu} {and} \bibinfo{person}{Fahad
  Saleh}.} \bibinfo{year}{2019}\natexlab{}.
\newblock \bibinfo{title}{Evolution of Shares in a Proof-of-Stake
  Cryptocurrency}.
\newblock \bibinfo{howpublished}{SSRN 3377136}.
\newblock


\bibitem[\protect\citeauthoryear{Ruan, Chen, Dinh, Lin, Ooi, and Zhang}{Ruan
  et~al\mbox{.}}{2019}]%
        {ruan2019fine}
\bibfield{author}{\bibinfo{person}{Pingcheng Ruan}, \bibinfo{person}{Gang
  Chen}, \bibinfo{person}{Tien Tuan~Anh Dinh}, \bibinfo{person}{Qian Lin},
  \bibinfo{person}{Beng~Chin Ooi}, {and} \bibinfo{person}{Meihui Zhang}.}
  \bibinfo{year}{2019}\natexlab{}.
\newblock \showarticletitle{Fine-Grained, Secure and Efficient Data Provenance
  on Blockchain Systems}.
\newblock \bibinfo{journal}{\emph{Proc. VLDB Endowment}} \bibinfo{volume}{12},
  \bibinfo{number}{9} (\bibinfo{year}{2019}), \bibinfo{pages}{975--988}.
\newblock


\bibitem[\protect\citeauthoryear{Ruan, Loghin, Ta, Zhang, Chen, and Ooi}{Ruan
  et~al\mbox{.}}{2020}]%
        {ruan2020transactional}
\bibfield{author}{\bibinfo{person}{Pingcheng Ruan}, \bibinfo{person}{Dumitrel
  Loghin}, \bibinfo{person}{Quang-Trung Ta}, \bibinfo{person}{Meihui Zhang},
  \bibinfo{person}{Gang Chen}, {and} \bibinfo{person}{Beng~Chin Ooi}.}
  \bibinfo{year}{2020}\natexlab{}.
\newblock \showarticletitle{A Transactional Perspective on
  Execute-order-validate Blockchains}. In \bibinfo{booktitle}{\emph{Proc. ACM
  SIGMOD}}. \bibinfo{pages}{543--557}.
\newblock


\bibitem[\protect\citeauthoryear{Sapirshtein, Sompolinsky, and
  Zohar}{Sapirshtein et~al\mbox{.}}{2016}]%
        {sapirshtein2016optimal}
\bibfield{author}{\bibinfo{person}{Ayelet Sapirshtein},
  \bibinfo{person}{Yonatan Sompolinsky}, {and} \bibinfo{person}{Aviv Zohar}.}
  \bibinfo{year}{2016}\natexlab{}.
\newblock \showarticletitle{Optimal Selfish Mining Strategies in Bitcoin}. In
  \bibinfo{booktitle}{\emph{Proc. FC}}. \bibinfo{pages}{515--532}.
\newblock


\bibitem[\protect\citeauthoryear{Tao, Li, Jiang, Ng, Wang, and Li}{Tao
  et~al\mbox{.}}{2020}]%
        {tao2020sharding}
\bibfield{author}{\bibinfo{person}{Yuechen Tao}, \bibinfo{person}{Bo Li},
  \bibinfo{person}{Jingjie Jiang}, \bibinfo{person}{Hok~Chu Ng},
  \bibinfo{person}{Cong Wang}, {and} \bibinfo{person}{Baochun Li}.}
  \bibinfo{year}{2020}\natexlab{}.
\newblock \showarticletitle{On Sharding Open Blockchains with Smart Contracts}.
  In \bibinfo{booktitle}{\emph{Proc. IEEE ICDE}}. \bibinfo{pages}{1357--1368}.
\newblock


\bibitem[\protect\citeauthoryear{Team}{Team}{2020a}]%
        {Blackcoin}
\bibfield{author}{\bibinfo{person}{Blackcoin Team}.}
  \bibinfo{year}{2020}\natexlab{a}.
\newblock \bibinfo{booktitle}{\emph{Blackcoin Cryptocurrency}}.
\newblock
\urldef\tempurl%
\url{https://blackcoin.org/}
\showURL{%
\tempurl}


\bibitem[\protect\citeauthoryear{Team}{Team}{2020b}]%
        {EOS}
\bibfield{author}{\bibinfo{person}{EOS Team}.}
  \bibinfo{year}{2020}\natexlab{b}.
\newblock \bibinfo{booktitle}{\emph{EOS Cryptocurrency whitepaper}}.
\newblock
\urldef\tempurl%
\url{https://github.com/EOSIO/Documentation/blob/master/TechnicalWhitePaper.md}
\showURL{%
\tempurl}


\bibitem[\protect\citeauthoryear{Team}{Team}{2020c}]%
        {Filecoin}
\bibfield{author}{\bibinfo{person}{Filecoin Team}.}
  \bibinfo{year}{2020}\natexlab{c}.
\newblock \bibinfo{booktitle}{\emph{Filecoin Cryptocurrency whitepaper}}.
\newblock
\urldef\tempurl%
\url{https://filecoin.io/filecoin.pdf}
\showURL{%
\tempurl}


\bibitem[\protect\citeauthoryear{Team}{Team}{2020d}]%
        {NXT}
\bibfield{author}{\bibinfo{person}{Jelurida Team}.}
  \bibinfo{year}{2020}\natexlab{d}.
\newblock \bibinfo{booktitle}{\emph{NXT Cryptocurrency}}.
\newblock
\urldef\tempurl%
\url{https://www.jelurida.com/nxt}
\showURL{%
\tempurl}


\bibitem[\protect\citeauthoryear{Team}{Team}{2020e}]%
        {Neo}
\bibfield{author}{\bibinfo{person}{Neo Team}.}
  \bibinfo{year}{2020}\natexlab{e}.
\newblock \bibinfo{booktitle}{\emph{Neo Cryptocurrency whitepaper}}.
\newblock
\urldef\tempurl%
\url{https://docs.neo.org/docs/en-us/basic/whitepaper.html}
\showURL{%
\tempurl}


\bibitem[\protect\citeauthoryear{Team}{Team}{2020f}]%
        {PPcoin}
\bibfield{author}{\bibinfo{person}{Peercoin Team}.}
  \bibinfo{year}{2020}\natexlab{f}.
\newblock \bibinfo{booktitle}{\emph{Peercoin Cryptocurrency}}.
\newblock
\urldef\tempurl%
\url{https://www.peercoin.net/}
\showURL{%
\tempurl}


\bibitem[\protect\citeauthoryear{Team}{Team}{2020g}]%
        {Qtum}
\bibfield{author}{\bibinfo{person}{Qtum Team}.}
  \bibinfo{year}{2020}\natexlab{g}.
\newblock \bibinfo{booktitle}{\emph{Qtum Cryptocurrency}}.
\newblock
\urldef\tempurl%
\url{https://qtum.org}
\showURL{%
\tempurl}


\bibitem[\protect\citeauthoryear{Tsabary and Eyal}{Tsabary and Eyal}{2018}]%
        {tsabary2018gap}
\bibfield{author}{\bibinfo{person}{Itay Tsabary} {and} \bibinfo{person}{Ittay
  Eyal}.} \bibinfo{year}{2018}\natexlab{}.
\newblock \showarticletitle{The Gap Game}. In \bibinfo{booktitle}{\emph{Proc.
  ACM CCS}}. \bibinfo{pages}{713--728}.
\newblock


\bibitem[\protect\citeauthoryear{Wang, Xu, Zhang, and Xu}{Wang
  et~al\mbox{.}}{2020}]%
        {wang2020vchain}
\bibfield{author}{\bibinfo{person}{Haixin Wang}, \bibinfo{person}{Cheng Xu},
  \bibinfo{person}{Ce Zhang}, {and} \bibinfo{person}{Jianliang Xu}.}
  \bibinfo{year}{2020}\natexlab{}.
\newblock \showarticletitle{vChain: A Blockchain System Ensuring Query
  Integrity}. In \bibinfo{booktitle}{\emph{Proc. ACM SIGMOD}}.
  \bibinfo{pages}{2693--2696}.
\newblock


\bibitem[\protect\citeauthoryear{Xu, Zhang, and Xu}{Xu et~al\mbox{.}}{2019}]%
        {xu2019vchain}
\bibfield{author}{\bibinfo{person}{Cheng Xu}, \bibinfo{person}{Ce Zhang}, {and}
  \bibinfo{person}{Jianliang Xu}.} \bibinfo{year}{2019}\natexlab{}.
\newblock \showarticletitle{vChain: Enabling Verifiable Boolean Range Queries
  over Blockchain Databases}. In \bibinfo{booktitle}{\emph{Proc. ACM SIGMOD}}.
  \bibinfo{pages}{141--158}.
\newblock


\bibitem[\protect\citeauthoryear{Yue, Xie, Zhang, Chen, Ooi, Wang, and
  Xiao}{Yue et~al\mbox{.}}{2020}]%
        {yue2020analysis}
\bibfield{author}{\bibinfo{person}{Cong Yue}, \bibinfo{person}{Zhongle Xie},
  \bibinfo{person}{Meihui Zhang}, \bibinfo{person}{Gang Chen},
  \bibinfo{person}{Beng~Chin Ooi}, \bibinfo{person}{Sheng Wang}, {and}
  \bibinfo{person}{Xiaokui Xiao}.} \bibinfo{year}{2020}\natexlab{}.
\newblock \showarticletitle{Analysis of Indexing Structures for Immutable
  Data}. In \bibinfo{booktitle}{\emph{Proc. ACM SIGMOD}}.
  \bibinfo{pages}{925--935}.
\newblock


\bibitem[\protect\citeauthoryear{Zakhary, Agrawal, and Abbadi}{Zakhary
  et~al\mbox{.}}{2019}]%
        {zakhary2019atomic}
\bibfield{author}{\bibinfo{person}{Victor Zakhary}, \bibinfo{person}{Divyakant
  Agrawal}, {and} \bibinfo{person}{Amr~El Abbadi}.}
  \bibinfo{year}{2019}\natexlab{}.
\newblock \showarticletitle{Atomic Commitment Across Blockchains}.
\newblock \bibinfo{journal}{\emph{Proc. VLDB Endowment}} \bibinfo{volume}{13},
  \bibinfo{number}{9} (\bibinfo{year}{2019}), \bibinfo{pages}{1319--1331}.
\newblock


\bibitem[\protect\citeauthoryear{Zhang, Xu, Wang, Xu, and Choi}{Zhang
  et~al\mbox{.}}{2021}]%
        {zhangauthenticated}
\bibfield{author}{\bibinfo{person}{Ce Zhang}, \bibinfo{person}{Cheng Xu},
  \bibinfo{person}{Haixin Wang}, \bibinfo{person}{Jianliang Xu}, {and}
  \bibinfo{person}{Byron Choi}.} \bibinfo{year}{2021}\natexlab{}.
\newblock \showarticletitle{Authenticated Keyword Search in Scalable
  Hybrid-Storage Blockchains}. In \bibinfo{booktitle}{\emph{Proc. IEEE ICDE}}.
\newblock


\bibitem[\protect\citeauthoryear{Zhang, Xu, Xu, Tang, and Choi}{Zhang
  et~al\mbox{.}}{2019}]%
        {zhang2019gem}
\bibfield{author}{\bibinfo{person}{Ce Zhang}, \bibinfo{person}{Cheng Xu},
  \bibinfo{person}{Jianliang Xu}, \bibinfo{person}{Yuzhe Tang}, {and}
  \bibinfo{person}{Byron Choi}.} \bibinfo{year}{2019}\natexlab{}.
\newblock \showarticletitle{GEM$^2$-Tree: A Gas-Efficient Structure for
  Authenticated Range Queries in Blockchain}. In
  \bibinfo{booktitle}{\emph{Proc. IEEE ICDE}}. \bibinfo{pages}{842--853}.
\newblock


\end{thebibliography}

\appendix
\section{Missing Proofs}\label{sec:appendix-proof}
\begin{proof}[Proof of Theorem~\ref{thm:ML-PoS-fair}]
	Let $X_i\in\{0,1\}$ be a binary random variable indicating whether $A$ is the proposer for the $i$-th block. Let $S_i$ be the total staking power possessed by $A$ after $i$ blocks, \eg~$S_0=a$. Then, it is easy to know that $X_i$ follows Bernoulli distribution with success probability $\frac{S_{i-1}}{1+w(i-1)}$, as the total staking power of all miners for competing the $i$-th block is $1+w(i-1)$. Thus, we have
	\begin{equation*}
		S_{i+1} = S_{i} + wX_{i+1}.
	\end{equation*}
	Taking expectation conditioned on $S_i$ gives
	\begin{equation*}
		\E[S_{i+1}\mid S_i] = S_{i} + \frac{wS_{i}}{1+wi}.
	\end{equation*}
	As a result, we have
	\begin{equation*}
		\E[S_{i+1}]=\E[\E[S_{i+1}\mid S_i]] = \frac{1+w(i+1)}{1+wi}\cdot\E[S_{i}] .
	\end{equation*}
	Recursively, we can get that
	\begin{equation*}
		\E[S_{i}]= S_0\prod_{k=0}^{i-1}\frac{1+w(k+1)}{1+wk}=a(1+wi).
	\end{equation*}
	Therefore, $\E[\lambda_A] = \frac{\E[S_{n}]  - a}{wn} = a$, which concludes the theorem.
\end{proof}

\begin{proof}[Proof of Theorem~\ref{thm:SL-PoS-fair}]
	Again, let $X_i\in\{0,1\}$ be a binary random variable indicating whether $A$ is the proposer for the $i$-th block. Consider that $a\leq b$ such that $\Pr[X_1=0]=1-\frac{a}{2b}$ and $\Pr[X_1=1]=\frac{a}{2b}$. Thus, $\E[X_1]=\frac{a}{2b}<a$ unless $a=b$, which shows that the expected reward of the first block for miner $A$ is unfair in general. 
	
	Next, we show that even an infinity number of blocks are proposed by $A$ and $B$, there exists some $a$ such that $\E[\lambda_{A}]\neq a$, where $\lambda_A =\lim_{n\to \infty}\frac{1}{n} \sum_{i=1}^nX_i$. We prove it by contradiction and assume that $\E[\lambda_{A}]=a$ for all $a$. We observe that
	\begin{equation*}
		\E[n\lambda_A] =\E[n\lambda_A\mid X_1=0] \Pr[X_1=0] +\E[n\lambda_A\mid X_1=1] \Pr[X_1=1].
	\end{equation*}
	By our assumption, $\E[n\lambda_A-X_1\mid X_1] = (n-1)\cdot \frac{a+wX_1}{1+w}$, since miner $A$ possesses a fraction $\frac{a+wX_1}{1+w}$ of total staking power after the outcome of the first block is observed. Thus,
	\begin{equation*}
		\E[n\lambda_A] = \frac{a(n-1)}{1+w}\cdot \Pr[X_1=0] +\Big(1+\frac{(a+w)(n-1)}{1+w}\Big)\cdot \Pr[X_1=1].
	\end{equation*}
	As a result,
	\begin{equation*}
		na =\frac{a(n-1)}{1+w}\cdot\Big(1-\frac{a}{2b}\Big) +\Big(1+\frac{(a+w)(n-1)}{1+w}\Big)\cdot \frac{a}{2b}.
	\end{equation*}
	Rearranging it yields $a(b-a)(nw+1)=0$. This shows a contradiction when $0<a<b$ and hence the theorem is proved.
	%
\end{proof}

\begin{proof}[Proof of Theorem~\ref{thm:C-PoS-fair}]
	The proof is analogous to that of Theorem~\ref{thm:ML-PoS-fair}. Let $Y_i$ be the random variable representing the number of shard proposers assigned to miner $A$ at epoch $i$. Let $S_i$ be the total staking power possessed by $A$ after epoch $i$, \eg~$S_0=a$. Then, it is easy to know that $Y_i\sim\operatorname{Bin}\Big(P,\frac{S_{i-1}}{1+(w+v)(i-1)}\Big)$, as the total staking power of all miners at the beginning of epoch $i$ is $1+(w+v)(i-1)$. Thus, 
	\begin{equation*}
		S_{i+1} = S_{i} + \frac{wY_{i+1}}{P}+ \frac{vS_{i}}{1+(w+v)i}.
	\end{equation*}
	Taking expectation conditioned on $S_i$ gives
	\begin{equation*}
		\E[S_{i+1}\mid S_i] = S_{i} + \frac{wS_{i}}{1+(w+v)i}+ \frac{vS_{i}}{1+(w+v)i}.
	\end{equation*}
	As a result, we have
	\begin{equation*}
		\E[S_{i+1}]=\E[\E[S_{i+1}\mid S_i]] = \frac{1+(w+v)(i+1)}{1+(w+v)i}\cdot\E[S_{i}] .
	\end{equation*}
	Recursively, we can get that
	\begin{equation*}
		\E[S_{i}]= S_0\prod_{k=0}^{i-1}\frac{1+(w+v)(k+1)}{1+(w+v)k}=a\Big(1+(w+v)i\Big).
	\end{equation*}
	Therefore, $\E[\lambda_A] = \frac{\E[S_{n}]  - a}{(w+v)n} = a$, which concludes the theorem.
	%
	%
\end{proof}

\begin{proof}[Proof of Theorem~\ref{thm:PoW-robust}]
	According to Hoeffding inequality \cite{hoeffding1994probability}, we know that 
	\begin{equation*}
		\Pr\big[(1-\varepsilon)a \le \lambda_A \le (1+\varepsilon)a\big]\ge 1 - 2\e^{-2na^2\varepsilon^2}.
	\end{equation*}
	Thus, if $n \ge \frac{\ln(\frac{2}{\delta})}{2a^2\varepsilon^2}$ such that $1 - 2\e^{-2na^2\varepsilon^2}\geq 1 - \delta$, an $(\varepsilon,\delta)$-fairness is preserved.
\end{proof}

\begin{proof}[Proof of Theorem~\ref{thm:ML-PoS-robust}]
	Let $X_i\in\{0,1\}$ be a binary random variable indicating whether $A$ is the proposer for the $i$-th block. Let $S_i:= a + w\sum_{j = 1}^i X_j$ be the number of stakes possessed by $A$ after $i$ blocks. We define $M_i:=\E[S_n\mid X_1,X_2,\dotsc,X_{i}]$ as the expectation of $S_n$ conditioned on $X_1,X_2,\dotsc,X_{i}$. In particular, $M_0=\E[S_n]$ and $M_n=S_n$. Thus, 
	\begin{align*}
		&\E[M_i\mid X_1,X_2,\dotsc,X_{i-1}]\\
		&= \E[\E[S_n\mid X_1,X_2,\dotsc,X_{i}] \mid X_1,X_2,\dotsc,X_{i-1}]\\
		&=\E[S_n\mid X_1,X_2,\dotsc,X_{i-1}]
		=M_{i-1},
	\end{align*}
	which indicates that $M_0,M_1,\dotsc,M_n$ are martingales \cite{doob1953stochastic}.
	In addition, after observing $X_1,X_2,\dotsc,X_i$, miners $A$ and $B$ possess $S_i$ and $1+iw-S_i$ stakes, respectively. Since then, the mining game becomes that $A$ possesses a fraction $\frac{S_i}{1+iw}$ of staking power to compete $(n-i)$ blocks. 
	Similar to the proof of Theorem~\ref{thm:ML-PoS-fair}, we can get that
	\begin{equation*}
		M_i=\E[S_n\mid S_i]=S_i+\frac{(n-i)wS_i}{1+iw}=\frac{1+nw}{1+iw}\cdot S_i.
	\end{equation*}
	\eat{
		\begin{align*}
			M_i
			&=S_i+\E[S_n-S_i\mid X_1,X_2,\dotsc,X_i]\\
			&=S_i+\frac{(n-i)(\frac{a}{w}+S_i)}{\frac{a}{w}+S_i+\frac{1-a}{w}+i-S_i}\\
			&=\frac{(n-i)a+(1+nw)S_i}{1+iw}.
		\end{align*}
	}
	Furthermore,
	\begin{align*}
		M_i-M_{i-1}
		&=\frac{1+nw}{1+iw}\cdot S_i-\frac{1+nw}{1+(i-1)w}\cdot S_{i-1}\\
		&=\frac{1+nw}{1+iw}\cdot(S_{i-1}+wX_i)-\frac{1+nw}{1+(i-1)w}\cdot S_{i-1}.
	\end{align*}
	\eat{
		\begin{align*}
			&M_i-M_{i-1}\\
			&=\frac{(n-i)a+(1+nw)S_i}{1+iw}-\frac{(n-i+1)a+(1+nw)S_{i-1}}{1+(i-1)w}\\
			&=\frac{(1+nw)\big((1+(i-1)w)X_i-S_{i-1}-a\big)}{(1+iw)(1+(i-1)w)}.
		\end{align*}
	}
	Since $0\leq X_i \leq 1$, we have
	\eat{
		\begin{equation*}
			M_i-M_{i-1}\geq \frac{(1+nw)(-S_{i-1}-a)}{(1+iw)(1+(i-1)w)}\triangleq \Delta_{min}.
		\end{equation*} 
	}
	\begin{equation*}
		M_i-M_{i-1}\leq \frac{1+nw}{1+iw}\cdot(S_{i-1}+w)-\frac{1+nw}{1+(i-1)w}\cdot S_{i-1}\triangleq \Delta_{\max}.
	\end{equation*}
	Similarly, we also have
	\begin{equation*}
		M_i-M_{i-1}\geq \frac{1+nw}{1+iw}\cdot S_{i-1}-\frac{1+nw}{1+(i-1)w}\cdot S_{i-1}\triangleq \Delta_{\min}.
	\end{equation*}
	Hence, 
	\begin{equation*}
		\Delta_{\max} - \Delta_{\min} = \frac{(1+nw)w}{1+iw}.
	\end{equation*}
	Finally, by Azuma inequality \cite{azuma1967weighted}, we have
	\begin{align*}
		&\Pr\big[\abs{M_n-M_0}\geq \gamma\big]\\
		&\leq 2\exp\bigg(-\frac{2\gamma^2}{w^2\sum_{i=1}^n(\frac{1+nw}{1+iw})^2}\bigg)\\
		&\leq 2\exp\bigg(-\frac{2\gamma^2}{w(1+nw)^2\sum_{i=1}^n\big(\frac{1}{1+(i-1)w}-\frac{1}{1+iw}\big)}\bigg)\\
		&=2\exp\bigg(-\frac{2\gamma^2}{w^2(1+nw)n}\bigg).
	\end{align*}
	Setting $\gamma=nwa\varepsilon$ and rearranging it concludes the theorem.
	%
	%
	%
	%
	%
	%
	%
\end{proof}

\begin{proof}[Proof of Theorem~\ref{thm:SL-PoS-robust}]
	Let $X_i\in\{0,1\}$ be a binary random variable indicating whether $A$ is the proposer for the $i$-th block. Let $Z_n = \frac{a+w\sum_{i=1}^{n}X_i}{1+nw}$ be fraction of stakes possessed by $A$ after $n$ blocks. Then, the difference between $Z_{n+1}$ and $Z_n$ can be written as
	\begin{align*}
		Z_{n+1} - Z_{n}
		& = \frac{a+ w\sum_{i=1}^{n+1}X_i}{1+(n+1)w} - Z_{n} \\ 
		& = \frac{(1+nw)Z_n+X_{n+1}w}{1+(n+1)w} -Z_n\\
		& = \frac{w}{1+(n+1)w} \cdot (X_{n+1}-Z_n).
	\end{align*}
	Moreover, let $\gamma_{n+1}=\frac{w}{1+(n+1)w}$, $f(Z_n)= \E[X_{n+1}\mid Z_n]-Z_n$ and $U_{n+1}=X_{n+1}-\E[X_{n+1}\mid Z_n]$. Then,
	\begin{equation*}
		Z_{n+1} - Z_n = \gamma_{n+1}\big(f(Z_n) + U_{n+1}\big).
	\end{equation*}
	
	Next, we verify that conditions \ref{cond1}--\ref{cond4} given in Definition~\ref{def:SA} hold almost surely. For condition~\ref{cond1}, we know that $\frac{w}{(1+w)n}\leq \gamma_n\leq \frac{1}{n}$ and set $c_l={w}/{(1+w)}$ and $c_u=1$. For condition~\ref{cond2}, we set $K_u=1$ as $\abs{U_{n}}\leq 1$. For condition~\ref{cond3}, we know from \eqref{prob:SL-PoS} that 
	\begin{equation}\label{eqn:f(Z_n)}
		f(Z_n)=
		\begin{cases}
			\frac{Z_n}{2(1-Z_n)}-Z_n,&\text{if } Z_n\leq \frac{1}{2},\\
			1-\frac{1-Z_n}{2Z_n}-Z_n,&\text{otherwise}.
		\end{cases}
	\end{equation}
	Thus, it can be seen that $\abs{f(Z_n)}\leq 1$ and hence we set $K_f=1$. Finally, for condition~\ref{cond4}, we find that $\E[\gamma_{n+1} U_{n+1}\mid \mathcal{F}_n]=0$ and hence we set $K_e=0$.
	
	In addition, by \eqref{eqn:f(Z_n)}, we observe that $f(Z_n)$ is continuous for $Z_n\in[0,1]$. Thus, by Lemma~\ref{lemma:sa-zero}, $\lim_{n\to\infty}Z_n$ exists almost surely and is in one of the zeros of $f(\cdot)$. Let $f(x)=0$ such that the zeros are found as $Q_f=\{0,\frac{1}{2},1\}$. Then, it remains to show that $q=1/2$ is an unstable point and $q=0$ and $q=1$ are two stable points. 
	
	Clearly, we have
	\begin{equation*}
		f(x)(x-1/2)=
		\begin{cases}
			\frac{x(x-1/2)}{1-x}\cdot (x-1/2)\geq 0,&\text{if } x\leq \frac{1}{2},\\
			\frac{(1-x)(x-1/2)}{x}\cdot (x-1/2)\geq 0,&\text{otherwise}.
		\end{cases}
	\end{equation*}
	Furthermore,
	\begin{align*}
		\E[U_{n+1}^2\mid \mathcal{F}_n]
		&=\E[X_{n+1}^2\mid Z_n]-\E^2[X_{n+1}\mid Z_n]\\
		&=\E[X_{n+1}\mid Z_n]-\E^2[X_{n+1}\mid Z_n].
	\end{align*}
	Thus, if $Z_n$ is close to $1/2$, \ie~$Z_n\in [1/2-\varepsilon,1/2+\varepsilon]$ for some $\varepsilon>0$, we have $\E[X_{n+1}\mid Z_n]\in [\frac{1/2-\varepsilon}{1+2\varepsilon}, \frac{1/2+3\varepsilon}{1+2\varepsilon}]$. As a result, 
	\begin{equation*}
		\E[U_{n+1}^2\mid \mathcal{F}_n]\geq\frac{1/2-\varepsilon}{1+2\varepsilon}\cdot \frac{1/2+3\varepsilon}{1+2\varepsilon}\triangleq K_L,
	\end{equation*}
	which implies $q=1/2$ is an unstable point. Hence, according to Lemma~\ref{lemma:sa-unstable}, $\Pr[Z_n\to 1/2] = 0$.
	
	Finally, we prove that $q=0$ is a stable point, with $q=1$ being analogous. Obviously, $f(x)x < 0$ when $x>0$ is close to $0$. Meanwhile, if $Z_0=0$, it always holds that $\Pr[Z_n=0]=1$, which implies every neighborhood of $q$ is attainable. Consequently, by Lemma~\ref{lemma:sa-stable}, $\Pr[Z_n \to 0] > 0$. Note that when $Z_n\to 0$, we must have $\lambda_{A}\to 0$. Therefore, when $n\to \infty$, $\Pr[(1-\varepsilon)a\le \lambda_A \le (1+\varepsilon)a] = 0$ for any positive $\varepsilon$, which concludes the theorem.
\end{proof}

\begin{proof}[Proof of Theorem~\ref{thm:C-PoS-robust}]
	The proof is, again, similar to that of Theorem~\ref{thm:ML-PoS-robust} by utilizing Doob's martingale and Azuma's inequality. Let $Y_i$ and $S_i$ be the notations same as those in the proof of Theorem~\ref{thm:C-PoS-fair}. Then, 
	\begin{equation*}
		S_{i+1} = S_{i} + \frac{wY_{i+1}}{P}+ \frac{vS_{i}}{1+(w+v)(i)}.
	\end{equation*}
	Analogous to the analysis for Theorem~\ref{thm:C-PoS-fair}, we have
	\begin{equation*}
		\E[S_{n} \mid S_i] = S_{i} + \frac{(w+v)(n-i)S_i}{1+(w+v)i}= \frac{1+ (w+v)n}{1+(w+v)i}\cdot S_i.
	\end{equation*}
	Moreover, in each epoch, we manually sort the shards and consider the every shard is provided $w/P$ proposer reward and $v/P$ inflation reward. Let $Y_{i,j}\in\{0,1\}$ be a binary random variable indicating whether $A$ is the block proposer for the $j$-th shard of the $i$-th mining epoch. Let $S_{i,j}$ be the number of stakes possessed by $A$ after completing the $j$-the shard of the $i$-the epoch, \eg~$S_{i,P}=S_i$. Conditioned on $S_{i-1}$ and $S_{i,j}$, the expectation of $S_i$ can be computed as
	\begin{equation*}
		\E[S_i\mid S_{i-1},S_{i,j}]=S_{i,j}+\frac{(P-j)(w+v)S_{i-1}}{P(1+(w+v)(i-1))}.
	\end{equation*}
	
	
	Let $M_{i,j} = \E[S_n\mid Y_{1,1},\dotsc,Y_{1,P},Y_{2,1},\dotsc,Y_{2,P},\dotsc,Y_{i,1},\dotsc,Y_{i,j}]$. Obviously, $M_0,M_{1,1},M_{1,2},\dotsc,M_{n,P}$ is a Doob's martingale with respect to $Y_{1,1},Y_{1,2}, \dotsc,Y_{n,P}$. Furthermore, $M_{i,j} $ can be rewritten as 
	\begin{align*}
		M_{i,j} 
		&= \E[S_{n} \mid S_{i-1},S_{i,j}]=\E[\E[S_{n}\mid S_{i}] \mid S_{i-1},S_{i,j}] \\
		&= \frac{1+ (w+v)n}{1+(w+v)i}\cdot\E[S_i\mid S_{i-1},S_{i,j}].
	\end{align*}
	Then, we bound the difference between the maximum and minimum values of martingale difference sequence. In particular, given $S_{i-1}$ and $S_{i,j}$, for any $j\leq P-1$, we have
	\begin{align*}
		\max\{M_{i,j+1}-M_{i,j}\}-\min\{M_{i,j+1}-M_{i,j}\}\leq \frac{1+ (w+v)n}{1+(w+v)i}\cdot \frac{w}{P}.
	\end{align*}
	Meanwhile, we can also get that
	\begin{align*}
		\max\{M_{i,1}-M_{i-1,P}\}-\min\{M_{i,1}-M_{i-1,P}\}\leq \frac{1+ (w+v)n}{1+(w+v)i}\cdot \frac{w}{P}.
	\end{align*}
	Finally, by Azuma inequality, we have 
	\begin{align*}
		&\Pr\big[\abs{M_{n,P}-M_0} \ge \gamma\big]
		\le 2\exp \bigg(-\frac{2\gamma^2}{\sum_{i = 1}^n\sum_{j=1}^{P}\big(\frac{1+ (w+v)n}{1+(w+v)i}\cdot\frac{w}{P} \big)^2}\bigg)\\
		&\le 2\exp \bigg(-\frac{2\gamma^2}{\frac{w^2(1+ (w+v)n)^2}{P(w+v)}\cdot \sum_{i = 1}^n \big(\frac{1}{1+(w+v)(i-1)}-\frac{1}{1+(w+v)i}\big)}\bigg)\\
		&=2\exp \bigg( -\frac{2\gamma^2P}{w^2\big(1+(w+v)n\big)n}\bigg)
	\end{align*}
	Setting $\gamma = na(w+v)\varepsilon$ concludes the theorem.
\end{proof}

\begin{proof}[Proof of Lemma~\ref{lem:C-PoS-multi-player}]Suppose that there are $m$ miners. Denote by $S^i$ the fraction of stakes possessed by miner $i$ such that $\sum_{k=1}^{m}S^i=1$ and by $T^i$ the waiting time of miner $i$'s candidate block becoming valid. Without loss of generality, we assume that $S^1\leq S^2\leq\dotsb\leq S^m$. As discussed in Section~\ref{subsec:SL-PoS}, $T^i=\mathtt{basetime} \cdot X^i/S^i$, where $X^i$ is a random hash value uniformly distributed in the range of $[0,2^{256}-1]$ such that $\frac{X^i}{2^{256}}$ follows the continuous uniform distribution $U(0,1)$ asymptotically. Let $Z^i=\frac{X^i}{2^{256}\cdot S^i}$ such that $Z^i\sim U(0,\frac{1}{S^i})$. Then, given $Z^i=z$, we have
\begin{equation*}
\Pr\Big[\bigwedge\nolimits_{j\neq i} (Z^j\geq z)\Big]=\prod\nolimits_{j\neq i}(1-S^jz)^{+},
\end{equation*}
where $(1-S_jz)^{+}=\max\{1-S_jz,0\}$. Therefore, the probability of miner $i$ winning the next block is
\begin{align*}
	&\Pr\Big[\bigwedge\nolimits_{j\neq i} (T^j\geq T^i)\Big]
	=\Pr\Big[\bigwedge\nolimits_{j\neq i} (Z^j\geq Z^i)\Big]\\
	&=\int_{0}^{\frac{1}{S^i}}S^i\prod\nolimits_{j\neq i}(1-S^jz)^{+}\diff z
	=\int_{0}^{\frac{1}{S^m}}S^i\prod\nolimits_{j\neq i}(1-S^jz)\diff z.
\end{align*}
We consider miner $1$ with the minimum staking power. We have
\begin{align*}
	&\Pr\Big[\bigwedge\nolimits_{j\neq i} (T^j\geq T^1)\Big]
	=\int_{0}^{\frac{1}{S^m}}S^1\prod\nolimits_{j=2}^{m}(1-S^jz)\diff z\\
	&\leq \int_{0}^{\frac{m-1}{1-S^1}}S^1\Big(1-\frac{1-S^1}{m-1}\cdot z\Big)^{m-1}\diff z
	=\frac{m-1}{m}\cdot \frac{S^1}{1-S^1}
	\leq S^1,
\end{align*}
where the first inequality is because the maximum is achieved at $S^2=\dotsb=S^m=(1-S^1)/m$ and the second inequality is from the fact that $\frac{1}{1-S^1}\leq \frac{m}{m-1}$ since $S^1\leq 1/m$. Moreover, in the above inequality, ``$=$'' holds if and only if $S^1=S^2=\dotsb=S^m=1/m$, and when $S^1<1/m$, such a probability is less than $S^1$. This completes the proof.
\end{proof}

\end{sloppy}
\end{document}